\newtheorem{theorem}{Theorem}[section]
\newtheorem{assumption}{Assumption}
\newtheorem{proposition}[theorem]{Proposition}
\newtheorem{lemma}[theorem]{Lemma}
\theoremstyle{definition}
\newtheorem{definition}{Definition}[section]
\definecolor{codegreen}{rgb}{0,0.6,0}
\definecolor{codegray}{rgb}{0.5,0.5,0.5}
\definecolor{codepurple}{rgb}{0.58,0,0.82}
\definecolor{backcolour}{rgb}{0.95,0.95,0.92}
\lstdefinestyle{mystyle}{
    backgroundcolor=\color{backcolour},   
    commentstyle=\color{codegreen},
    keywordstyle=\color{magenta},
    numberstyle=\tiny\color{codegray},
    stringstyle=\color{codepurple},
    basicstyle=\ttfamily\footnotesize,
    breakatwhitespace=false,         
    breaklines=true,                 
    captionpos=b,                    
    keepspaces=true,                 
    numbers=left,                    
    numbersep=5pt,                  
    showspaces=false,                
    showstringspaces=false,
    showtabs=false,                  
    tabsize=2
}
\date{}
\begin{document}


{
  \title{\bf Iterative Methods for Full-Scale Gaussian Process Approximations for Large Spatial Data}
  \author{Tim Gyger\footnotemark[2]
\thanks{Institute of Financial Services Zug, Lucerne University of Applied Sciences and Arts (tim.gyger@hslu.ch)}
  \and
Reinhard Furrer\thanks{Department of Mathematical Modeling and Machine Learning, University of Zurich}
  \and
  Fabio Sigrist\thanks{Seminar for Statistics, ETH Zurich} \footnotemark[1]
  }
  \maketitle
}

\bigskip
\begin{abstract}
Gaussian processes are flexible probabilistic regression models which are widely used in statistics and machine learning. However, a drawback is their limited scalability to large data sets. To alleviate this, full-scale approximations (FSAs) combine predictive process methods and covariance tapering, thus approximating both global and local structures. We show how iterative methods can be used to reduce computational costs in calculating likelihoods, gradients, and predictive distributions with FSAs. In particular, we introduce a novel preconditioner and show theoretically and empirically that it accelerates the conjugate gradient method's convergence speed and mitigates its sensitivity with respect to the FSA parameters and the eigenvalue structure of the original covariance matrix, and we demonstrate empirically that it outperforms a state-of-the-art pivoted Cholesky preconditioner. Furthermore, we introduce an accurate and fast way to calculate predictive variances using stochastic simulation and iterative methods. In addition, we show how our newly proposed fully independent training conditional (FITC) preconditioner can also be used in iterative methods for Vecchia approximations. In our experiments, it outperforms existing state-of-the-art preconditioners for Vecchia approximations. All methods are implemented in a free C++ software library with high-level Python and R packages (\url{https://github.com/fabsig/GPBoost}).

\end{abstract}

\section{Introduction}\label{intro}
Gaussian process (GP) models are frequently used as probabilistic nonparametric models. Calculating likelihoods and their gradients is typically done using a Cholesky factorization requiring $\mathcal{O}(n^3)$ operations and $\mathcal{O}(n^2)$ memory allocation, where $n$ is the number of observations. Computations for large data sets thus quickly become infeasible. A variety of strategies have been introduced to alleviate this computational burden; see, e.g., \citet{heaton2019case} for a review. Low-rank approximations such as modified predictive process models also known as fully independent training conditional (FITC) approximations \citep{quinonero2005unifying, banerjee2008gaussian, finley2009improving} are good at modeling global large-scale structures of Gaussian processes. In contrast, there are local approaches that rely on sparse linear algebra methods, where sparsity is enforced on either the precision or covariance matrices. Concerning the former, Vecchia \citep{vecchia1988estimation, datta2016hierarchical, katzfuss2021general} and SPDE-based \citep{lindgren2011explicit} approximations are popular in spatial statistics. Another method to achieve sparsity is covariance tapering, which sets the covariance to zero between two sufficiently distant locations \citep{furrer2006covariance}. Tapering is typically effective in capturing short-scale structures of spatial processes. The full-scale approximation (FSA) of \citet{sang2012full} combines a low-rank approximation with tapering applied to a residual process thus capturing both long- and short-range correlations. However, the computational complexity for factorizing a sparse $n\times n$ matrix is not linear in $n$ but difficult to generalize since it depends on the corresponding graph and the fill-reducing ordering \citep{davis2006direct}. We denote this computational complexity by $\mathcal{O}\big(g(n)\big)$, where $g(\cdot)$ is a function of the sample size $n$. For instance, we have approximately $\mathcal{O}(n^{3/2})$ for two-dimensional spatial data and at least $\mathcal{O}(n^{2})$ in higher dimensions \citep{lipton1979generalized}. For evaluating a log-likelihood and its gradient, the FSA with tapering involves the factorization of a sparse matrix, resulting in $\mathcal{O}\big(n\cdot (m^2 + n_\gamma^2) + g(n)\big)$ computational and $\mathcal{O}\big(n\cdot (m + n_\gamma)\big)$ memory complexity when using a Cholesky decomposition, where $n_\gamma$ is the average number of nonzero entries per row in the sparse matrix, and $m$ is the number of inducing points or knots. Further, the computational complexity for calculating predictive variances is $\mathcal{O}\big(n\cdot n_p \cdot n_\gamma\big)$, where $n_p$ denotes the number of prediction locations; i.e., both parameter estimation and prediction can become prohibitively slow for large $n$ and $n_p$ due to the above $n^{3/2}$ and  $n\cdot n_p$ terms. Iterative numerical techniques \citep{trefethen2022numerical}, such as the conjugate gradient (CG) method and the Lanczos tridiagonalization algorithm, which require only matrix-vector multiplications that can be trivially parallelized, are an alternative to direct solver methods such as the Cholesky decomposition. Recently, a growing body of research \citep{aune2014parameter,gardner2018gpytorch,pleiss2018constant,geoga2020scalable} has used these numerical approaches in combination with stochastic approximation methods such as the Hutchinson's estimator \citep{hutchinson1989stochastic} and the stochastic Lanczos quadrature (SLQ) \citep{ubaru2017fast}. Other popular GP approximations for spatial data include multiresolution approximations \citep{katzfuss2017multi} and approximations based on hierarchical matrices \citep{ambikasaran2015fast, abdulah2018parallel, litvinenko2020hlibcov, geoga2020scalable}. 

In this work, we propose iterative methods for likelihood evaluation, gradient calculation, and the computation of predictive distributions for full-scale approximations that combine inducing points and covariance tapering. We solve systems of linear equations using the CG method and calculate log-determinants and their derivatives using SLQ and Hutchinson's estimator. Furthermore, we introduce a novel way to compute predictive variances through a combination of the CG method and stochastic diagonal estimation \citep{bekas2007estimator}. We also introduce a preconditioner, denoted as FITC preconditioner, that allows for reducing run times and for doing variance reduction. In addition to the full-scale approximation with tapering, this preconditioner is also applicable to other GP approximations and for exact calculations with GPs. In this article, we show how it can be used for Vecchia approximations. We analyze our methods both theoretically and empirically on simulated and real-world data. 

The remainder of the paper is organized as follows. Section \ref{sect2} introduces the full-scale approximation with tapering applied to a residual process. For notational simplicity, we refer to this approximation as FSA in this article. In Section \ref{sect3}, we introduce iterative methods for the FSA and state convergence results for the CG method and its preconditioned variant. In Section \ref{sect4}, we conduct simulation studies to compare different inducing point methods and different combinations of approximation parameters $m$ and $\gamma$. Furthermore, we analyze the computational efficiency and statistical properties of the proposed methodology. In Section \ref{sectVecchia}, we show how our proposed FITC preconditioner can be used in iterative methods for Vecchia approximations, and we compare it to other state-of-the-art preconditioners. In Section \ref{sect5}, we apply our methodology to a large spatial data set comprising daytime land surface temperatures. 

\section{Gaussian process models for spatial data sets}\label{sect2}

We assume the following GP model:
\begin{align}
    \boldsymbol{Y}=F(\boldsymbol{X})+\boldsymbol{b}+\boldsymbol{\epsilon}, \quad \boldsymbol{b} \sim \mathcal{N}(\boldsymbol{0}, \mathbf{\Sigma}), \quad \boldsymbol{\epsilon} \sim \mathcal{N}\left(\boldsymbol{0}, \sigma^2 \boldsymbol{I}_n\right),  \label{Model}  
\end{align}
where $\boldsymbol{Y}=\left(Y(\boldsymbol{s}_1), \ldots, Y(\boldsymbol{s}_n)\right)^{\mathrm{T}} \in \mathbb{R}^n$ represents the response variable observed at locations $\mathcal{S} = \left\{\boldsymbol{s}_1, \ldots, \boldsymbol{s}_n\right\}$, $\boldsymbol{s}_i\in \mathbb{R}^d$, and $F(\boldsymbol{X}) \in \mathbb{R}^n$ denotes the fixed effects. In the following, we assume $F(\boldsymbol{X})=\boldsymbol{X}\boldsymbol{\beta},$ $\boldsymbol{\beta} \in \mathbb{R}^p$, $\boldsymbol{X} \in \mathbb{R}^{n \times p}$, but $F(\cdot)$ could also be modeled using machine learning methods such as tree-boosting \citep{sigrist2022gaussian, sigrist2022latent}. Further, $\boldsymbol{\epsilon}=\left({\epsilon}(\boldsymbol{s}_1), \ldots, {\epsilon}(\boldsymbol{s}_n)\right)^{\mathrm{T}} \in \mathbb{R}^n$ is an independent error term often referred to as the nugget effect. The zero-mean random effects $\boldsymbol{b} = \left(b(\boldsymbol{s}_1), \ldots, b(\boldsymbol{s}_n)\right)^{\mathrm{T}} \in \mathbb{R}^n$ are a finite-dimensional version of a GP, and the covariance matrix is denoted as $\boldsymbol{\Sigma} \in \mathbb{R}^{n \times n}$, where its elements are determined by a parametric covariance function $c\left(\cdot, \cdot\right)$:
\begin{align*}
    \Sigma_{ij} =\operatorname{Cov}\big(b(\boldsymbol{s}_i), b(\boldsymbol{s}_j)\big)=c\left(\boldsymbol{s}_i, \boldsymbol{s}_j\right), \quad \boldsymbol{s}_i, \boldsymbol{s}_j \in \mathcal{S}.
\end{align*}

The regression coefficients $\boldsymbol{\beta}$ and the parameters characterizing the covariance matrix denoted by $\boldsymbol{\theta} \in \mathbb{R}^{q}$ are usually determined using maximum likelihood estimation. For this, we minimize the negative log-likelihood function 
 \begin{align}
 \begin{split}
\mathcal{L}(\mathbf{\boldsymbol{\beta}},\boldsymbol{\theta};\boldsymbol{y},\boldsymbol{X})&= \frac{n}{2} \log (2 \pi)+\frac{1}{2}\log \det\big(\Tilde{\mathbf{\Sigma}}\big) +\frac{1}{2}(\boldsymbol{y}-\boldsymbol{X}\mathbf{\boldsymbol{\beta}})^{\mathrm{T}}\Tilde{\mathbf{\Sigma}}^{-1}(\boldsymbol{y}-\boldsymbol{X}\mathbf{\boldsymbol{\beta}}),
\end{split}\label{NEGLL} 
\end{align}
 where $\Tilde{\mathbf{\Sigma}} = {\mathbf{\Sigma}} + \sigma^2 \boldsymbol{I}_n$ is the covariance matrix of $\boldsymbol{Y}$, and $\boldsymbol{y}$ is the observed data. If a first- or second-order method for convex optimization, such as the Broyden–Fletcher–\\Goldfarb–Shanno (BFGS) algorithm, is used, the calculation of the gradient of the negative log-likelihood function with respect to the covariance parameters is required. This gradient is given by
\begin{align}
\frac{\partial}{\partial \boldsymbol{\theta}}\mathcal{L}(\mathbf{\boldsymbol{\beta}},\boldsymbol{\theta};\boldsymbol{y},\boldsymbol{X}) &= \frac{1}{2}\Tr \Big(\Tilde{\mathbf{\Sigma}}^{-1}\frac{\partial \Tilde{\mathbf{\Sigma}}}{\partial \boldsymbol{\theta}}\Big)  -\frac{1}{2}(\boldsymbol{y}-\boldsymbol{X}\mathbf{\boldsymbol{\beta}})^{\mathrm{T}}\Tilde{\mathbf{\Sigma}}^{-1}  \frac{\partial \Tilde{\mathbf{\Sigma}}}{\partial \boldsymbol{\theta}} \Tilde{\mathbf{\Sigma}}^{-1}(\boldsymbol{y}-\boldsymbol{X}\mathbf{\boldsymbol{\beta}}).\label{DNEGLL} 
\end{align}

The predictive distribution at $n_p$ new locations $\mathcal{S}^p = \{\boldsymbol{s}^p_1,...,\boldsymbol{s}^p_{n_p}\}$ is $\mathcal{N}(\boldsymbol{\mu}^p,\mathbf{\Sigma}^p)$ with predictive mean $\boldsymbol{\mu}^p=\boldsymbol{X}^p{\boldsymbol{\beta}}+{\mathbf{\Sigma}}^{\mathrm{T}}_{n{n_p}}\Tilde{\mathbf{\Sigma}}^{-1}(\boldsymbol{y}-{\boldsymbol{X}}{\boldsymbol{\beta}})$ and predictive covariance $\mathbf{\Sigma}^p={\mathbf{\Sigma}}_{{n_p}} + \sigma^2 \boldsymbol{I}_{n_p}-{\mathbf{\Sigma}}_{n{n_p}}^{\mathrm{T}}\Tilde{\mathbf{\Sigma}}^{-1} {\mathbf{\Sigma}}_{n{n_p}}\in\mathbb{R}^{{n_p}\times {n_p}}$, where $\boldsymbol{X}^p_i=\big({X}(\boldsymbol{s}^p_i)_{ 1}, \ldots, {X}(\boldsymbol{s}^p_i)_{p}\big) \in \mathbb{R}^{1\times p}$ is the $i$-th row of $\boldsymbol{X}^p$ containing predictor variables for prediction $i$, $i=1, \ldots, {n_p}$, ${\mathbf{\Sigma}}_{n{n_p}}=\left[c\left(\boldsymbol{s}_i, \boldsymbol{s}^p_j\right)\right]_{i=1:n, j=1:{n_p}} \in\mathbb{R}^{n\times {n_p}}$ is a cross-covariance matrix, and ${\mathbf{\Sigma}}_{{n_p}}=\left[c\left(\boldsymbol{s}^p_i, \boldsymbol{s}^p_j\right)\right]_{i=1:{n_p}, j=1:{n_p}} \in\mathbb{R}^{{n_p}\times {n_p}}$. 


\subsection{Full-Scale approximation}\label{sectFSA} 

The idea of the FSA \citep{sang2012full} is to decompose the GP into two parts 
\begin{align*}
\boldsymbol{b}=\boldsymbol{b}_\mathrm{l}+\boldsymbol{b}_{\mathrm{s}},
\end{align*}
where $\boldsymbol{b}_\mathrm{l}$ is a reduced rank predictive process modeling large-scale dependence and $\boldsymbol{b}_{\mathrm{s}} = \boldsymbol{b} - \boldsymbol{b}_\mathrm{l}$ is a residual process capturing the small-scale spatial dependence that is unexplained by the reduced rank process. For a fixed set $\mathcal{S}^* = \{\boldsymbol{s}_1^*,...,\boldsymbol{s}_m^*\}$ of $m$ distinct inducing points, or knots, the predictive process is given by
\begin{align*}
\boldsymbol{b}_\mathrm{l}(\boldsymbol{s}_i)=c\left(\boldsymbol{s}_i, \mathcal{S}^*\right)^{\mathrm{T}}\mathbf{\Sigma}_m^{-1} \boldsymbol{b}^*,
\end{align*}
where $\boldsymbol{b}^*=(b(\boldsymbol{s}_1^*),...,b(\boldsymbol{s}_m^*))^{\mathrm{T}}$. Its finite rank covariance function is
\begin{align*}
c_\mathrm{l}(\boldsymbol{s}_i,\boldsymbol{s}_j)=\operatorname{Cov}\big(b_\mathrm{l}(\boldsymbol{s}_i), b_\mathrm{l}(\boldsymbol{s}_j)\big)=c(\boldsymbol{s}_i, \mathcal{S}^*)^{\mathrm{T}}\mathbf{\Sigma}_m^{-1}c\left(\boldsymbol{s}_j, \mathcal{S}^*\right),
\end{align*}
where $\boldsymbol{c}(\boldsymbol{s}_i, \mathcal{S}^*) = \big(c(\boldsymbol{s}_i, \boldsymbol{s}^*_1),...,c(\boldsymbol{s}_i, \boldsymbol{s}^*_m)\big)^{\mathrm{T}}$, and $\mathbf{\Sigma}_m = \big[c(\boldsymbol{s}_i^*,\boldsymbol{s}_j^*)\big]_{i=1:m, j=1:m}\in\mathbb{R}^{m\times m}$ is a covariance matrix with respect to the inducing points. The corresponding covariance matrix is given by
\begin{align*}
    \mathbf{\Sigma}_{\mathrm{l}}= \mathbf{\Sigma}_{mn}^{\mathrm{T}}\mathbf{\Sigma}_{m}^{-1}\mathbf{\Sigma}_{mn},
\end{align*}
where $\mathbf{\Sigma}_{mn} = \big[c(\boldsymbol{s}_i^*, \boldsymbol{s}_j)\big]_{i=1:m, j=1:n}\in\mathbb{R}^{m\times n}$ is a cross-covariance matrix between the inducing and data points. The covariance function of the residual process $\boldsymbol{b}_{\mathrm{s}}(\boldsymbol{s})$ is
\begin{align*}
    c(\boldsymbol{s}_i,\boldsymbol{s}_j)-\boldsymbol{c}(\boldsymbol{s}_i, \mathcal{S}^*)^{\mathrm{T}}\mathbf{\Sigma}^{-1}_m \boldsymbol{c}(\boldsymbol{s}_j, \mathcal{S}^*).
\end{align*}
In the FSA, this is approximated using tapering
\begin{align*}
c_{\mathrm{s}}(\boldsymbol{s}_i,\boldsymbol{s}_j)=\big(c(\boldsymbol{s}_i,\boldsymbol{s}_j)-\boldsymbol{c}(\boldsymbol{s}_i, \mathcal{S}^*)^{\mathrm{T}}\mathbf{\Sigma}^{-1}_m \boldsymbol{c}(\boldsymbol{s}_j, \mathcal{S}^*)\big) \cdot c_{\text {taper }}(||\boldsymbol{s}_i-\boldsymbol{s}_j||_2 ; \gamma),
\end{align*}
where the tapering function $c_{\text {taper }}(\cdot; \gamma)$ is an isotropic correlation function which is zero for data located in two positions whose distance exceeds the prescribed taper range parameter $\gamma>0$, i.e., $c_{\text {taper }}(||\boldsymbol{s}_i-\boldsymbol{s}_j||_2; \gamma) = 0$ if $||\boldsymbol{s}_i-\boldsymbol{s}_j||_2 \geq \gamma$, for the Euclidean vector norm $||\cdot||_2$. The covariance matrix of the residual process is given by
\begin{align*}
\mathbf{\Sigma}_{\mathrm{s}} = (\mathbf{\Sigma} - \mathbf{\Sigma}_{mn}^{\mathrm{T}}\mathbf{\Sigma}_{m}^{-1}\mathbf{\Sigma}_{mn})\circ \mathbf{T}(\gamma),
\end{align*} 
where $\mathbf{T}(\gamma) = \big[c_{\text {taper }}(||\boldsymbol{s}_i-\boldsymbol{s}_j||_2; \gamma)\big]_{i=1:n, j=1:n}$ is the taper matrix, and the operator $\circ$ refers to the elementwise matrix product also called the Hadamard product. The parameter $\gamma$ determines the sparsity of the matrix $\mathbf{\Sigma}_{\mathrm{s}}$, and it controls a trade-off between accuracy and computational cost. In summary, the covariance function of the FSA is given by
\begin{align*}
  c_{\dagger}(\boldsymbol{s}_i, \boldsymbol{s}_j)=c_{\mathrm{l}}(\boldsymbol{s}_i,\boldsymbol{s}_j)+c_{\mathrm{s}}(\boldsymbol{s}_i,\boldsymbol{s}_j) \approx c(\boldsymbol{s}_i, \boldsymbol{s}_j),
\end{align*} 
resulting in the covariance matrix approximation
\begin{align*}
\Tilde{\mathbf{\Sigma}}_{\dagger}= \mathbf{\Sigma}_{\mathrm{l}}+\mathbf{\Sigma}_{\mathrm{s}}+\sigma^2 \boldsymbol{I}_n \approx \Tilde{\mathbf{\Sigma}}.
\end{align*} 

\subsection{Parameter estimation}

For the full-scale approximation, the negative log-likelihood and its derivative are given by
\begin{align}
\begin{split}
    \mathcal{L}_\dagger(\mathbf{\boldsymbol{\beta}},\boldsymbol{\theta};\boldsymbol{y},\boldsymbol{X}) &= \frac{n}{2} \log (2 \pi) + \frac{1}{2}\log \det\big(\Tilde{\mathbf{\Sigma}}_{\dagger}\big) + \frac{1}{2}(\boldsymbol{y}-\boldsymbol{X}\mathbf{\boldsymbol{\beta}})^{\mathrm{T}}\Tilde{\mathbf{\Sigma}}^{-1}_{\dagger}(\boldsymbol{y}-\boldsymbol{X}\mathbf{\boldsymbol{\beta}}),\\
\frac{\partial}{\partial \boldsymbol{\theta}}\mathcal{L}_\dagger(\mathbf{\boldsymbol{\beta}},\boldsymbol{\theta};\boldsymbol{y},\boldsymbol{X})&=  \frac{1}{2}\Tr \Big(\Tilde{\mathbf{\Sigma}}_{\dagger}^{-1}\frac{\partial \Tilde{\mathbf{\Sigma}}_{\dagger}}{\partial \boldsymbol{\theta}}\Big) -\frac{1}{2}(\boldsymbol{y}-\boldsymbol{X}\mathbf{\boldsymbol{\beta}})^{\mathrm{T}}\Tilde{\mathbf{\Sigma}}_\dagger^{-1}  \frac{\partial \Tilde{\mathbf{\Sigma}}_{\dagger}}{\partial \boldsymbol{\theta}} \Tilde{\mathbf{\Sigma}}_{\dagger}^{-1}(\boldsymbol{y}-\boldsymbol{X}\mathbf{\boldsymbol{\beta}}),
\end{split}\label{NEGLLDNEGLLFSA}
\end{align}
where $\frac{\partial \Tilde{\mathbf{\Sigma}}_{\dagger}}{\partial \boldsymbol{\theta}} = \frac{\partial \Tilde{\mathbf{\Sigma}}_{\mathrm{s}}}{\partial \boldsymbol{\theta}} + \frac{\partial {\mathbf{\Sigma}}_{\mathrm{l}}}{\partial \boldsymbol{\theta}}$ with 
\begin{align*}
\frac{\partial \Tilde{\mathbf{\Sigma}}_{\mathrm{s}}}{\partial \boldsymbol{\theta}} &= \Big(\frac{\partial {\mathbf{\Sigma}}}{\partial \boldsymbol{\theta}}-\frac{\partial {\mathbf{\Sigma}}_{\mathrm{l}}}{\partial \boldsymbol{\theta}}\Big)\circ\boldsymbol{T}(\gamma) + \frac{\partial {\sigma^2}}{\partial \boldsymbol{\theta}}\boldsymbol{I}_{n},\\
    \frac{\partial {\mathbf{\Sigma}}_{\mathrm{l}}}{\partial \boldsymbol{\theta}} &= \frac{\partial {\mathbf{\Sigma}}_{mn}^\mathrm{T}}{\partial \boldsymbol{\theta}}\mathbf{\Sigma}_{m}^{-1}\mathbf{\Sigma}_{mn} + \mathbf{\Sigma}_{mn}^\mathrm{T}\mathbf{\Sigma}_{m}^{-1}\frac{\partial {\mathbf{\Sigma}}_{mn}}{\partial \boldsymbol{\theta}} - \mathbf{\Sigma}_{mn}^\mathrm{T}\mathbf{\Sigma}_{m}^{-1}  \frac{\partial {\mathbf{\Sigma}}_{m}}{\partial \boldsymbol{\theta}}\mathbf{\Sigma}_{m}^{-1}\mathbf{\Sigma}_{mn},
\end{align*}
and $\Tilde{\mathbf{\Sigma}}_{\mathrm{s}}$ denotes the sparse matrix $\mathbf{\Sigma}_{\mathrm{s}}+\sigma^2 \boldsymbol{I}_n$.
The evaluation of the negative log-likelihood and its derivatives requires the solution of linear equations systems and the determinant of the
$n\times n$ matrix $\Tilde{\mathbf{\Sigma}}_{\dagger}$.
Applying the Sherman-Woodbury-Morrison formula, we obtain
\begin{align}
\begin{split}
    \Tilde{\mathbf{\Sigma}}_{\dagger}^{-1} &= (\Tilde{\mathbf{\Sigma}}_{\mathrm{s}}+\mathbf{\Sigma}_{mn}^{\mathrm{T}}\mathbf{\Sigma}_{m}^{-1}\mathbf{\Sigma}_{mn})^{-1}=\Tilde{\mathbf{\Sigma}}_{\mathrm{s}}^{-1}-\Tilde{\mathbf{\Sigma}}_{\mathrm{s}}^{-1} \mathbf{\Sigma}_{mn}^{\mathrm{T}}(\mathbf{\Sigma}_{m}+\mathbf{\Sigma}_{mn} \Tilde{\mathbf{\Sigma}}_{\mathrm{s}}^{-1} \mathbf{\Sigma}_{mn}^{\mathrm{T}})^{-1} \mathbf{\Sigma}_{mn} \Tilde{\mathbf{\Sigma}}_{\mathrm{s}}^{-1}.
\end{split}\label{SWMFSA}
\end{align}
Moreover, by Sylvester's determinant theorem, the determinant of $\Tilde{\mathbf{\Sigma}}_{\dagger}$ can be written as
\begin{align}
\begin{split}
    \det\big(\Tilde{\mathbf{\Sigma}}_{\dagger}\big) &= \det\big(\mathbf{\Sigma}_{m}+\mathbf{\Sigma}_{mn}\Tilde{\mathbf{\Sigma}}_{\mathrm{s}}^{-1}\mathbf{\Sigma}_{mn}^{\mathrm{T}}\big)\cdot\det\big(\mathbf{\Sigma}_{m}\big)^{-1}\cdot\det\big(\Tilde{\mathbf{\Sigma}}_{\mathrm{s}}\big).
    \end{split}\label{SWMDet}
\end{align}

The computational complexity associated with the computation of the negative log-likelihood and its derivatives is of order $\mathcal{O}\left(n\cdot (m^2 + n_\gamma^2) + g(n)\right)$, where we recall that $g(n) \approx n^{3/2}$ for spatial data, see Section \ref{intro}, and the required storage is of order $\mathcal{O}\big(n\cdot(m+n_\gamma)\big)$, where $m$ represents the number of inducing points, and $n_\gamma$ is the average number of nonzero entries per row in $\Tilde{\boldsymbol{\Sigma}}_{\mathrm{s}}$. The term $\mathcal{O}(n\cdot m^2)$ in the computational complexity arises from computing the matrix $\mathbf{\Sigma}_{m}+\mathbf{\Sigma}_{mn} \Tilde{\mathbf{\Sigma}}_{\mathrm{s}}^{-1} \mathbf{\Sigma}_{mn}^{\mathrm{T}}$ and its determinant, while the term $\mathcal{O}(n\cdot n_\gamma^2)$ results from calculating $\Tr \Big(\Tilde{\mathbf{\Sigma}}_\mathrm{s}^{-1}\frac{\partial \Tilde{\mathbf{\Sigma}}_\mathrm{s}}{\partial \boldsymbol{\theta}}\Big)$.

\subsection{Predictive distribution}

The predictive distribution for $n_p$ new locations given the FSA is 
$\mathcal{N}\big(\boldsymbol{\mu}_\dagger^p,\mathbf{\Sigma}^p_\dagger\big),$
where
\begin{align*}
\boldsymbol{\mu}^p_\dagger&= \boldsymbol{X}^p\mathbf{\boldsymbol{\beta}} + ({\mathbf{\Sigma}}_{n{n_p}}^\dagger)^{\mathrm{T}}\Tilde{\mathbf{\Sigma}}_\dagger^{-1}(\boldsymbol{y}-\boldsymbol{X}\mathbf{\boldsymbol{\beta}}), \\
\mathbf{\Sigma}^p_\dagger&={\mathbf{\Sigma}}_{{n_p}}^\dagger + \sigma^2 \boldsymbol{I}_{n_p}-({\mathbf{\Sigma}}_{n{n_p}}^\dagger)^{\mathrm{T}}\Tilde{\mathbf{\Sigma}}^{-1}_\dagger{\mathbf{\Sigma}}_{n{n_p}}^\dagger,
\end{align*}
and ${\mathbf{\Sigma}}_{n{n_p}}^\dagger = {\mathbf{\Sigma}}_{n{n_p}}^\mathrm{l} + {\mathbf{\Sigma}}_{n{n_p}}^\mathrm{s}$ is the FSA cross-covariance matrix between the prediction locations $\mathcal{S}^p$ and the training coordinates ${\mathcal{S}}$. These large-scale and short-scale terms are given by 
\begin{align*}
    {\mathbf{\Sigma}}_{n{n_p}}^\mathrm{l} &= {\mathbf{\Sigma}}_{mn}^\mathrm{T}{\mathbf{\Sigma}}_{m}^{-1}{\mathbf{\Sigma}}_{m{n_p}}\quad\text{ and }\quad{\mathbf{\Sigma}}_{n{n_p}}^\mathrm{s} = ({\mathbf{\Sigma}}_{n{n_p}} - {\mathbf{\Sigma}}_{n{n_p}}^\mathrm{l})\circ \boldsymbol{T}_{n{n_p}}(\gamma),
\end{align*}
where $\boldsymbol{T}_{n{n_p}}(\gamma) = \big[c_{\text {taper }}(||\boldsymbol{s}_i-\boldsymbol{s}^p_j||_2 ; \gamma)\big]_{i=1:n, j=1:{n_p}}\in\mathbb{R}^{n\times {n_p}}$ and \\${\mathbf{\Sigma}}_{m{n_p}}=\big[c\left(\boldsymbol{s}^*_i, \boldsymbol{s}^p_j\right)\big]_{i=1:m, j=1:{n_p}}
\in\mathbb{R}^{m\times {n_p}}$. Further, 
$${\mathbf{\Sigma}}_{{n_p}}^\dagger= {\mathbf{\Sigma}}_{m{n_p}}^\mathrm{T}{\mathbf{\Sigma}}_{m}^{-1}{\mathbf{\Sigma}}_{m{n_p}} + ({\mathbf{\Sigma}}_{{n_p}} - {\mathbf{\Sigma}}_{m{n_p}}^\mathrm{T}{\mathbf{\Sigma}}_{m}^{-1}{\mathbf{\Sigma}}_{m{n_p}})\circ \boldsymbol{T}_{{n_p}}(\gamma) \in\mathbb{R}^{n_p\times n_p}$$ 
are unconditional covariances and $\boldsymbol{T}_{n_p}(\gamma) = \big[c_{\text {taper }}\left(||\boldsymbol{s}^p_i-\boldsymbol{s}^p_j||_2 ; \gamma\right)\big]_{i=1:{n_p}, j=1:{n_p}}\in\mathbb{R}^{{n_p}\times {n_p}}$.

Assuming that the linear solve $\Tilde{\mathbf{\Sigma}}_\dagger^{-1}(\boldsymbol{y}-\boldsymbol{X}\mathbf{\boldsymbol{\beta}})$ is precomputed during 
the parameter estimation, the computational cost for the predictive mean $\boldsymbol{\mu}^p_\dagger$ is $\mathcal{O}\big({n_p}\cdot({n}_\gamma^p+m) + n\cdot m\big)$, where ${n}_\gamma^p$ represents the average number of nonzero entries per row in $({\mathbf{\Sigma}}_{n{n_p}}^\mathrm{s})^{\mathrm{T}}$. Additionally, assuming that the necessary Cholesky factor is precomputed, the computational complexity of the predictive covariance $\mathbf{\Sigma}^p_\dagger$ is $\mathcal{O}\big(m\cdot n_\gamma\cdot n + m\cdot {n}_\gamma^p\cdot n_p + m^2\cdot n + m^2\cdot n_p + n\cdot n_p \cdot n_\gamma + n\cdot n_p^2\big)$; see Appendix \ref{AppVar} for a derivation. Further, the predictive variances
\begin{align}
\text{diag}(\mathbf{\Sigma}^p_\dagger) = \sigma_1^2 \boldsymbol{I}_{n_p} + \sigma^2 \boldsymbol{I}_{n_p}-\text{diag}\big(({\mathbf{\Sigma}}_{n{n_p}}^\dagger)^{\mathrm{T}}\Tilde{\mathbf{\Sigma}}^{-1}_\dagger{\mathbf{\Sigma}}_{n{n_p}}^\dagger\big)\in\mathbb{R}^{{n_p}},\label{MuPredFSA}
\end{align}
where $\sigma_1^2$ is the marginal variance parameter, can be calculated in $ \mathcal{O}\big(m\cdot n_\gamma\cdot n + m\cdot {n}_\gamma^p\cdot n_p + m^2\cdot n + m^2\cdot n_p + n\cdot n_p \cdot n_\gamma\big)$ operations; see Appendix~\ref{AppVar} for a derivation. These computations can become prohibitively slow for large $n$ and $n_p$ due to the product $n\cdot n_p$ term. The required storage for computing the predictive mean and variances is of order $\mathcal{O}\big({n_p}\cdot (m + {n}_\gamma^p) + n\cdot (m + {n}_\gamma)\big)$.

\section{Iterative methods for the full-scale approximation}\label{sect3}

In this section, we show how iterative methods and stochastic approximations can be used for fast computations with the FSA.

\subsection{Parameter estimation}\label{SectionIterPE}
To compute linear solves involving $\Tilde{\mathbf{\Sigma}}_{\dagger}$, we apply the conjugate gradient (CG) method. In each iteration of the CG method, we calculate matrix-vector products of the form $(\Tilde{\mathbf{\Sigma}}_{\mathrm{s}} + \mathbf{\Sigma}_{mn}^{\mathrm{T}}\mathbf{\Sigma}_{m}^{-1}\mathbf{\Sigma}_{mn})\boldsymbol{v}$ for $\boldsymbol{v}\in\mathbb{R}^n$, which require $\mathcal{O}\big(n\cdot( m + n_\gamma)\big)$ operations. To approximate the log-determinant $\log\det\big(\Tilde{\mathbf{\Sigma}}_{\dagger}\big)$ in \eqref{NEGLLDNEGLLFSA}, we employ the stochastic Lanczos quadrature (SLQ) method \citep{ubaru2017fast} which combines a quadrature technique based on the Lanczos algorithm with Hutchinson's estimator \citep{hutchinson1989stochastic} as follows:
\begin{align*}
    \log\det\big(\Tilde{\mathbf{\Sigma}}_{\dagger}\big) & =\Tr\Big(\log\big(\Tilde{\mathbf{\Sigma}}_{\dagger}\big)\Big) \approx \frac{1}{\ell} \sum_{i=1}^{\ell} \boldsymbol{z}_i^{\mathrm{T}} \log\big(\Tilde{\mathbf{\Sigma}}_{\dagger}\big) \boldsymbol{z}_i \approx \frac{1}{\ell} \sum_{i=1}^{\ell} \boldsymbol{z}_i^{\mathrm{T}} \tilde{\boldsymbol{Q}}_{i}\log\big(\tilde{\boldsymbol{T}}_{i}\big)\tilde{\boldsymbol{Q}}_{i}^\mathrm{T} \boldsymbol{z}_i\\ &= \frac{1}{\ell} \sum_{i=1}^{\ell} ||\boldsymbol{z}_i||_2^2\boldsymbol{e}_1^{\mathrm{T}} \log\big(\tilde{\boldsymbol{T}}_{i}\big)\boldsymbol{e}_1 \approx \frac{n}{\ell} \sum_{i=1}^{\ell} \boldsymbol{e}_1^{\mathrm{T}} \log\big(\tilde{\boldsymbol{T}}_{i}\big)\boldsymbol{e}_1,
\end{align*}
where $\tilde{\boldsymbol{Q}}_{i}\tilde{\boldsymbol{T}}_{i}\tilde{\boldsymbol{Q}}_{i}^\mathrm{T}$ is a partial Lanczos tridiagonalization of $\Tilde{\mathbf{\Sigma}}_{\dagger}$, where $\tilde{\boldsymbol{T}}_{i}\in\mathbb{R}^{k\times k}$ is tridiagonal, and $\tilde{\boldsymbol{Q}}_{i}\in\mathbb{R}^{n\times k}$ has orthonormal columns obtained by running the Lanczos algorithm for $k$ steps with initial vector $\boldsymbol{z}_i/||\boldsymbol{z}_i||_2$, where $\mathbb{E}\left[\boldsymbol{z}_i\right]=\mathbf{0}$ and $\mathbb{E}\left[ \boldsymbol{z}_i\boldsymbol{z}_i^\mathrm{T}\right]=\boldsymbol{I}_n$, for instance, $\boldsymbol{z}_i\sim\mathcal{N}(\boldsymbol{0},\boldsymbol{I}_n)$. \citet{dong2017scalable} analyzed different approaches to estimate the $\log\det$ and found that SLQ is superior to other methods.

As in \citet{gardner2018gpytorch}, we use the connection between the Lanczos and the CG algorithm to calculate the $\ell$ partial Lanczos tridiagonal matrices $\Tilde{\boldsymbol{T}}_i$ and thus avoid the necessity for explicitly employing the Lanczos tridiagonalization method. This is advantageous since the Lanczos algorithm can be numerically unstable and can have high storage requirements. In each iteration of this modified CG method, we calculate matrix-matrix products, which involve $\mathcal{O}\big(n\cdot (m + n_\gamma)\big)$ operations. For a more detailed description, we refer to the pseudo-algorithm provided in Appendix \ref{appendix:CGalgo}.

In addition to the $\ell$ partial Lanczos tridiagonal matrices, the modified CG method computes the $\ell$ linear solves $(\Tilde{\mathbf{\Sigma}}_{\mathrm{s}} + \mathbf{\Sigma}_{mn}^{\mathrm{T}}\mathbf{\Sigma}_{m}^{-1}\mathbf{\Sigma}_{mn})^{-1}\boldsymbol{z}_i$ for the probe vectors $\boldsymbol{z}_i$. This means that once the log-likelihood is calculated, gradients can be calculated with minimal computational overhead by approximating the trace term in \eqref{NEGLLDNEGLLFSA} using stochastic trace estimation (STE), as follows:
\begin{align*}
\Tr \Big(\Tilde{\mathbf{\Sigma}}^{-1}_{\dagger}\frac{\partial \Tilde{\mathbf{\Sigma}}_{\dagger}}{\partial \boldsymbol{\theta}}\Big)\approx\frac{1}{\ell}\sum_{i=1}^{\ell} \big(\boldsymbol{z}_i^\mathrm{T} \Tilde{\mathbf{\Sigma}}^{-1}_{\dagger}\big)\big(\frac{\partial \Tilde{\mathbf{\Sigma}}_{\dagger}}{\partial \boldsymbol{\theta}} \boldsymbol{z}_i\big).
\end{align*}

Therefore, for calculating the negative log-likelihood and its derivatives, we have storage requirements of order $\mathcal{O}\big(n\cdot(m+n_\gamma)\big)$ and computational costs of $\mathcal{O}\big(n\cdot (m^2 + m\cdot t + n_\gamma \cdot t)\big)$, where $t$ is the number of iterations in the CG algorithms. The number $t$ has a slight dependence on $n$, $m$, and $n_\gamma$; see the simulated experiments in Section \ref{comparison_PC} and Theorems $\ref{thm1}$ and $\ref{th2}$. In Appendix \ref{Fishapp}, we also show how the Fisher information can be calculated efficiently using STE.

\subsection{Predictive variances}\label{sec_pred_var}

Calculating predictive variances in \eqref{MuPredFSA} is computationally expensive when $n$ and $n_p$ are large, even with iterative methods due to $n_p$ right-hand sides. In Algorithm \ref{alg:pred_var}, we propose a more efficient simulation-based approach. This algorithm results in an unbiased and consistent approximation for $\text{diag}(\boldsymbol{\Sigma}^p_\dagger)$; see Appendix \ref{AppVar} for a proof of
Proposition \ref{PropPredVar}. The diagonals ${\boldsymbol{D}}^d_1,\dots, {\boldsymbol{D}}^d_3$ in Algorithm \ref{alg:pred_var} are calculated deterministically using the preconditioned CG method for the linear solves $\boldsymbol{\Sigma}_\dagger^{-1}\mathbf{\Sigma}_{mn}^\mathrm{T}$ and $\Tilde{\mathbf{\Sigma}}_{\mathrm{s}}^{-1}\mathbf{\Sigma}_{mn}^\mathrm{T}$. We estimate the remaining diagonal term ${\boldsymbol{D}}_\ell \approx \text{diag}\big(({\mathbf{\Sigma}}_{n{n_p}}^\mathrm{s})^\mathrm{T}\Tilde{\mathbf{\Sigma}}_{\mathrm{s}}^{-1}{\mathbf{\Sigma}}_{n{n_p}}^\mathrm{s}\big)$ stochastically \citep{bekas2007estimator}, where the linear solves $\Tilde{\mathbf{\Sigma}}_{\mathrm{s}}^{-1}{\mathbf{\Sigma}}_{n{n_p}}^\mathrm{s}\boldsymbol{z}_i^{(1)}$
can be done using the preconditioned CG method. The computational complexity of Algorithm \ref{alg:pred_var} is $\mathcal{O}\big({n_p}\cdot (m \cdot {n}_\gamma^p + m^2) + n\cdot (m \cdot {n}_\gamma\cdot t +  m^2\cdot t)\big)$ with storage requirements of $\mathcal{O}\big({n_p}\cdot (m + {n}_\gamma^p) + n\cdot (m + {n}_\gamma)\big)$. In addition, the algorithm can be easily parallelized because it relies on matrix-matrix multiplications. Note that we use Rademacher random vectors with entries $\pm 1$; see Appendix \ref{radapp}.

\begin{proposition}\label{PropPredVar}
    Algorithm \ref{alg:pred_var} produces unbiased and consistent estimates $\boldsymbol{D}^p$ of the predictive
variance $\text{diag}(\boldsymbol{\Sigma}^p_\dagger)$ given in \eqref{MuPredFSA}.
\end{proposition}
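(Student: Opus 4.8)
The plan is to decompose the quantity to be estimated into a part that Algorithm~\ref{alg:pred_var} computes exactly by preconditioned CG and a single remaining diagonal that is approximated stochastically, and then to combine the unbiasedness of the stochastic diagonal estimator of \citet{bekas2007estimator} with the law of large numbers. Throughout I work in exact arithmetic, so that CG, which terminates after at most $n$ steps on the symmetric positive definite systems $\Tilde{\mathbf{\Sigma}}_{\mathrm{s}}$ and $\Tilde{\mathbf{\Sigma}}_\dagger$ that occur, returns the exact solves; the only randomness then enters through the probe vectors.

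The first step is to expand $\text{diag}\big(({\mathbf{\Sigma}}_{n{n_p}}^\dagger)^{\mathrm{T}}\Tilde{\mathbf{\Sigma}}^{-1}_\dagger{\mathbf{\Sigma}}_{n{n_p}}^\dagger\big)$. Substituting the FSA splitting ${\mathbf{\Sigma}}_{n{n_p}}^\dagger = {\mathbf{\Sigma}}_{n{n_p}}^\mathrm{s} + {\mathbf{\Sigma}}_{mn}^{\mathrm{T}}{\mathbf{\Sigma}}_{m}^{-1}{\mathbf{\Sigma}}_{m{n_p}}$ together with the Sherman--Woodbury--Morrison identity \eqref{SWMFSA} for $\Tilde{\mathbf{\Sigma}}_\dagger^{-1}$ produces a finite sum of matrix products. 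Exactly one of them, $\boldsymbol{A} := ({\mathbf{\Sigma}}_{n{n_p}}^\mathrm{s})^{\mathrm{T}}\Tilde{\mathbf{\Sigma}}_{\mathrm{s}}^{-1}{\mathbf{\Sigma}}_{n{n_p}}^\mathrm{s}$, still contains an $n\times n$ inverse sandwiched between two tall sparse matrices; every other term can be written through the small matrices $\mathbf{\Sigma}_m$, $\mathbf{\Sigma}_{m{n_p}}$, the $m\times m$ matrix $\mathbf{\Sigma}_m+\mathbf{\Sigma}_{mn}\Tilde{\mathbf{\Sigma}}_{\mathrm{s}}^{-1}\mathbf{\Sigma}_{mn}^{\mathrm{T}}$, the sparse matrix ${\mathbf{\Sigma}}_{n{n_p}}^\mathrm{s}$, and the two linear solves $\Tilde{\mathbf{\Sigma}}_{\mathrm{s}}^{-1}\mathbf{\Sigma}_{mn}^{\mathrm{T}}$ and $\Tilde{\mathbf{\Sigma}}_\dagger^{-1}\mathbf{\Sigma}_{mn}^{\mathrm{T}}$. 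I would then match, term by term, the diagonals of these remaining products with the deterministic vectors $\boldsymbol{D}^d_1,\boldsymbol{D}^d_2,\boldsymbol{D}^d_3$ formed in the algorithm, so that $\text{diag}(\boldsymbol{\Sigma}^p_\dagger) = \text{diag}({\mathbf{\Sigma}}_{{n_p}}^\dagger) + \sigma^2\boldsymbol{I}_{n_p} - \boldsymbol{D}^d_1 - \boldsymbol{D}^d_2 - \boldsymbol{D}^d_3 - \text{diag}(\boldsymbol{A})$, while the algorithm outputs $\boldsymbol{D}^p = \text{diag}({\mathbf{\Sigma}}_{{n_p}}^\dagger) + \sigma^2\boldsymbol{I}_{n_p} - \boldsymbol{D}^d_1 - \boldsymbol{D}^d_2 - \boldsymbol{D}^d_3 - \boldsymbol{D}_\ell$ with $\boldsymbol{D}_\ell = \tfrac{1}{\ell}\sum_{i=1}^{\ell}\boldsymbol{z}_i^{(1)}\circ\big(\boldsymbol{A}\boldsymbol{z}_i^{(1)}\big)$.

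Next I would establish unbiasedness of the stochastic piece: for a fixed $\boldsymbol{A}\in\mathbb{R}^{n_p\times n_p}$ and a random vector $\boldsymbol{z}$ with $\mathbb{E}[\boldsymbol{z}]=\boldsymbol{0}$ and $\mathbb{E}[\boldsymbol{z}\boldsymbol{z}^{\mathrm{T}}]=\boldsymbol{I}_{n_p}$, the $j$-th component of $\boldsymbol{z}\circ(\boldsymbol{A}\boldsymbol{z})$ has expectation $\mathbb{E}\big[z_j\sum_k A_{jk}z_k\big]=\sum_k A_{jk}\mathbb{E}[z_jz_k]=A_{jj}$, i.e.\ $\mathbb{E}[\boldsymbol{z}\circ(\boldsymbol{A}\boldsymbol{z})]=\text{diag}(\boldsymbol{A})$. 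Rademacher vectors meet these moment conditions (Appendix~\ref{radapp}), and the preconditioned CG solves $\Tilde{\mathbf{\Sigma}}_{\mathrm{s}}^{-1}{\mathbf{\Sigma}}_{n{n_p}}^\mathrm{s}\boldsymbol{z}_i^{(1)}$ realize $\boldsymbol{A}\boldsymbol{z}_i^{(1)}$ exactly; hence, by linearity of expectation over the $\ell$ i.i.d.\ draws, $\mathbb{E}[\boldsymbol{D}_\ell]=\text{diag}(\boldsymbol{A})$, and since the remaining terms in $\boldsymbol{D}^p$ are deterministic, $\mathbb{E}[\boldsymbol{D}^p]=\text{diag}(\boldsymbol{\Sigma}^p_\dagger)$.

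Finally, for consistency I would note that the summands $\boldsymbol{z}_i^{(1)}\circ(\boldsymbol{A}\boldsymbol{z}_i^{(1)})$ are i.i.d.\ and, because Rademacher vectors are bounded and $\boldsymbol{A}$ is fixed, bounded as well, so their average has variance of order $\ell^{-1}$; together with unbiasedness this gives $\boldsymbol{D}_\ell\to\text{diag}(\boldsymbol{A})$ in probability as $\ell\to\infty$ (indeed almost surely by the strong law of large numbers), hence $\boldsymbol{D}^p\to\text{diag}(\boldsymbol{\Sigma}^p_\dagger)$. The one genuinely laborious step is the expansion and matching in the second paragraph: the difficulty is purely algebraic bookkeeping, namely checking that the Woodbury expansion reproduces precisely the three deterministic diagonals used in Algorithm~\ref{alg:pred_var} and that no cross-term has been silently dropped or double-counted between the deterministic part and $\boldsymbol{D}_\ell$; once that accounting is correct, the unbiasedness and consistency arguments are standard.
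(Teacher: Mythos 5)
Your argument is essentially the paper's own proof: the paper's proof of Proposition \ref{PropPredVar} likewise observes that the only non-deterministic quantity in Algorithm \ref{alg:pred_var} is $\boldsymbol{D}_\ell$, invokes the standard unbiasedness/consistency of the Hutchinson-type diagonal estimator $\frac{1}{\ell}\sum_i \boldsymbol{z}_i^{(1)}\circ(\boldsymbol{A}\boldsymbol{z}_i^{(1)})$ for $\text{diag}(\boldsymbol{A})$ with $\boldsymbol{A}=({\mathbf{\Sigma}}_{n{n_p}}^\mathrm{s})^\mathrm{T}\Tilde{\mathbf{\Sigma}}_{\mathrm{s}}^{-1}{\mathbf{\Sigma}}_{n{n_p}}^\mathrm{s}$, and defers the Woodbury bookkeeping to Appendix \ref{AppVar}; your second and third paragraphs simply spell out the "standard results" the paper cites. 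One correction to your bookkeeping, though: the identity you state for the deterministic part is not what Algorithm \ref{alg:pred_var} computes. The algorithm outputs $\sigma_1^2\boldsymbol{1}_{n_p}+\sigma^2\boldsymbol{1}_{n_p}-\boldsymbol{D}^d_1-2\boldsymbol{D}^d_2+\boldsymbol{D}^d_3-\boldsymbol{D}_\ell$, not $-\boldsymbol{D}^d_1-\boldsymbol{D}^d_2-\boldsymbol{D}^d_3-\boldsymbol{D}_\ell$: the factor $2$ on $\boldsymbol{D}^d_2$ comes from the two mutually transposed cross terms between ${\mathbf{\Sigma}}_{n{n_p}}^\mathrm{s}$ and ${\mathbf{\Sigma}}_{mn}^\mathrm{T}{\mathbf{\Sigma}}_m^{-1}{\mathbf{\Sigma}}_{m{n_p}}$ (which share a diagonal), and $\boldsymbol{D}^d_3$ enters with a \emph{plus} sign because it is the Woodbury correction term $({\mathbf{\Sigma}}_{n{n_p}}^\mathrm{s})^\mathrm{T}\Tilde{\mathbf{\Sigma}}_{\mathrm{s}}^{-1}\mathbf{\Sigma}_{mn}^{\mathrm{T}}\boldsymbol{M}^{-1}\mathbf{\Sigma}_{mn}\Tilde{\mathbf{\Sigma}}_{\mathrm{s}}^{-1}{\mathbf{\Sigma}}_{n{n_p}}^\mathrm{s}$ that is \emph{added back} when $\Tilde{\mathbf{\Sigma}}_\dagger^{-1}$ is expanded via \eqref{SWMFSA}. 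The full expansion is carried out in Appendix \ref{AppVar}; with those signs corrected, your matching goes through and the rest of your argument is sound.
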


\begin{algorithm}
\caption{Approximate predictive variances using simulation}\label{alg:pred_var}
\begin{algorithmic}[1]
\Require Matrices ${\mathbf{\Sigma}}_{n{n_p}}^\mathrm{s}$, ${\mathbf{\Sigma}}_{mn}$, ${\mathbf{\Sigma}}_{m}$, ${\mathbf{\Sigma}}_{m{n_p}}$, $\Tilde{\mathbf{\Sigma}}_\mathrm{s}$
\Ensure Approximated predictive variances $\boldsymbol{D}^p\approx\text{diag}(\boldsymbol{\Sigma}^p_\dagger)$
\State ${\boldsymbol{D}}^d_1 \gets \text{diag}({\mathbf{\Sigma}}_{m{n_p}}^\mathrm{T}\mathbf{\Sigma}_{m}^{-1}{\mathbf{\Sigma}}_{mn}\boldsymbol{\Sigma}_\dagger^{-1}\mathbf{\Sigma}_{mn}^\mathrm{T}{\mathbf{\Sigma}}_{m}^{-1}{\mathbf{\Sigma}}_{m{n_p}})$
\State ${\boldsymbol{D}}^d_2 \gets\text{diag}\big(({\mathbf{\Sigma}}_{n{n_p}}^\mathrm{s})^\mathrm{T}\boldsymbol{\Sigma}_\dagger^{-1}\mathbf{\Sigma}_{mn}^\mathrm{T}\mathbf{\Sigma}_{m}^{-1}{\mathbf{\Sigma}}_{m{n_p}}\big)$
\State ${\boldsymbol{D}}^d_3 \gets\text{diag}\big(({\mathbf{\Sigma}}_{n{n_p}}^\mathrm{s})^\mathrm{T}\Tilde{\mathbf{\Sigma}}_{\mathrm{s}}^{-1}\mathbf{\Sigma}_{mn}^\mathrm{T}(\mathbf{\Sigma}_{m} + \mathbf{\Sigma}_{mn} \Tilde{\mathbf{\Sigma}}_{\mathrm{s}}^{-1}\mathbf{\Sigma}_{mn}^\mathrm{T})^{-1}(\Tilde{\mathbf{\Sigma}}_{\mathrm{s}}^{-1}\mathbf{\Sigma}_{mn}^\mathrm{T})^\mathrm{T}{\mathbf{\Sigma}}_{n{n_p}}^\mathrm{s}\big)$
\State ${\boldsymbol{D}}_\ell\gets \boldsymbol{0}\in\mathbb{R}^{n_p}$
\For{$i = 1$ to $\ell$}
\State $\boldsymbol{z}_i^{(2)} \gets ({\mathbf{\Sigma}}_{n{n_p}}^\mathrm{s})^\mathrm{T}\Tilde{\mathbf{\Sigma}}_{\mathrm{s}}^{-1}{\mathbf{\Sigma}}_{n{n_p}}^\mathrm{s}\boldsymbol{z}_i^{(1)}$, where $\boldsymbol{z}_i^{(1)}\sim$ Rademacher
\State ${\boldsymbol{D}}_\ell\gets {\boldsymbol{D}}_\ell + \boldsymbol{z}_i^{(1)} \circ \boldsymbol{z}_i^{(2)}$
\EndFor
\State ${\boldsymbol{D}}_\ell\gets \frac{1}{\ell}{\boldsymbol{D}}_\ell$\qquad $\Big(\approx \text{diag}\big(({\mathbf{\Sigma}}_{n{n_p}}^\mathrm{s})^\mathrm{T}\Tilde{\mathbf{\Sigma}}_{\mathrm{s}}^{-1}{\mathbf{\Sigma}}_{n{n_p}}^\mathrm{s}\big)\Big)$
\State $\boldsymbol{D}^p\gets \sigma_1^2 \boldsymbol{1}_{n_p} + \sigma^2 \boldsymbol{1}_{n_p} - {\boldsymbol{D}}^d_1 - 2\cdot {\boldsymbol{D}}^d_2 +  {\boldsymbol{D}}^d_3 - {\boldsymbol{D}}_\ell$
\end{algorithmic}
\end{algorithm}

Another approach for approximating the costly diagonal $\text{diag}\big(({\mathbf{\Sigma}}_{n{n_p}}^\mathrm{s})^\mathrm{T}\Tilde{\mathbf{\Sigma}}_{\mathrm{s}}^{-1}{\mathbf{\Sigma}}_{n{n_p}}^\mathrm{s}\big)$ is to use the Lanczos algorithm as proposed by \citet{pleiss2018constant}:
\begin{align}\label{Landiag}
\text{diag}\big(({\mathbf{\Sigma}}_{n{n_p}}^\mathrm{s})^\mathrm{T}\Tilde{\mathbf{\Sigma}}_{\mathrm{s}}^{-1}{\mathbf{\Sigma}}_{n{n_p}}^\mathrm{s}\big) \approx \text{diag}\big(({\mathbf{\Sigma}}_{n{n_p}}^\mathrm{s})^\mathrm{T}\Tilde{\boldsymbol{Q}}_k^\mathrm{s} (\Tilde{\boldsymbol{T}}_k^\mathrm{s})^{-1}(\Tilde{\boldsymbol{Q}}_k^\mathrm{s})^\mathrm{T}{\mathbf{\Sigma}}_{n{n_p}}^\mathrm{s}\big),
\end{align}
where $\Tilde{\boldsymbol{Q}}_k^\mathrm{s}\in\mathbb{R}^{n\times k}$ and $\Tilde{\boldsymbol{T}}_k^\mathrm{s}\in\mathbb{R}^{k\times k}$ denote the partial Lanczos tridiagonalization of $\Tilde{\mathbf{\Sigma}}_{\mathrm{s}}$. This is currently a state-of-the-art approach in machine learning and allows us to compute the diagonal $\text{diag}\big(({\mathbf{\Sigma}}_{n{n_p}}^\mathrm{s})^\mathrm{T}\Tilde{\mathbf{\Sigma}}_{\mathrm{s}}^{-1}{\mathbf{\Sigma}}_{n{n_p}}^\mathrm{s}\big)$ in $\mathcal{O}\big(k^2\cdot (n+{n_p}) + k\cdot ({n_p}\cdot{n}_\gamma^p + n\cdot{n}_\gamma)\big)$. 

The Lanczos tridiagonalization method can work well for approximating covariance matrices because their eigenvalue distribution typically contains a small number of large eigenvalues and a larger number of small ones. This structure makes covariance matrices well-suited for low-rank approximations because the Lanczos method effectively captures the dominant eigenvalues with relatively few iterations. However, for predictive variances in \eqref{Landiag}, the Lanczos tridiagonalization method is used to approximate the inverse of a covariance matrix. This approach leads to a distribution of eigenvalues for the inverse that is essentially inverted, resulting in many large eigenvalues and only a few small ones. This makes low-rank approximations less effective because capturing the numerous large eigenvalues necessitates a very high rank $k$ to ensure accurate approximations. Therefore, the Lanczos tridiagonalization is typically not well-suited for approximating the inverse of covariance matrices because it requires significantly more computational effort and a higher rank to achieve accurate results; see our experiments in Section \ref{exp_pred_var}.

\subsection{Preconditioning}
Iterative methods use preconditioners to improve convergence properties and to reduce variance in stochastic approximations. When using a preconditioner $\boldsymbol{P}$, the CG algorithm solves the equivalent preconditioned system
$$\boldsymbol{P}^{-\frac{1}{2}}\Tilde{\mathbf{\Sigma}}_{\dagger}\boldsymbol{P}^{-\frac{1}{2}}\hat{\boldsymbol{x}} = \boldsymbol{P}^{-\frac{1}{2}}\boldsymbol{b},~~~~\hat{\boldsymbol{x}} = \boldsymbol{P}^{\frac{1}{2}}{\boldsymbol{x}},$$  instead of $\Tilde{\mathbf{\Sigma}}_{\dagger}\boldsymbol{x} = \boldsymbol{b}$. The corresponding stochastic Lanczos quadrature approximation for log-determinants is given by \citet{gardner2018gpytorch}, as follows:
\begin{align*}
    \log\det\big(\Tilde{\mathbf{\Sigma}}_{\dagger}\big) &= \log\det\big(\boldsymbol{P}^{-\frac{1}{2}}\Tilde{\mathbf{\Sigma}}_{\dagger}\boldsymbol{P}^{-\frac{1}{2}}\big) + \log\det(\boldsymbol{P})\\&\approx \frac{n}{\ell} \sum_{i=1}^{\ell} \boldsymbol{e}_1^\mathrm{T} \log\big(\tilde{\boldsymbol{T}}_{i}\big) \boldsymbol{e}_1 + \log\det(\boldsymbol{P}),
\end{align*}
where $\tilde{\boldsymbol{T}}_{i}\in\mathbb{R}^{k\times k}$ is the Lanczos tridiagonal matrix of $\boldsymbol{P}^{-\frac{1}{2}}\Tilde{\mathbf{\Sigma}}_{\dagger}\boldsymbol{P}^{-\frac{1}{2}}$ obtained by running the Lanczos algorithm for $k$ steps with initial vector $\boldsymbol{P}^{-\frac{1}{2}}\boldsymbol{z}_i / \|\boldsymbol{P}^{-\frac{1}{2}}\boldsymbol{z}_i\|_2$, where $\boldsymbol{z}_i \sim \mathcal{N}(\boldsymbol{0},\boldsymbol{P})$. Gradients of log-determinants can be calculated using stochastic trace estimation as follows:
\begin{align*}
    \Tr \Big(\Tilde{\mathbf{\Sigma}}^{-1}_{\dagger}\frac{\partial \Tilde{\mathbf{\Sigma}}_{\dagger}}{\partial \boldsymbol{\theta}}\Big) &= \Tr \Big(\Tilde{\mathbf{\Sigma}}^{-1}_{\dagger}\frac{\partial \Tilde{\mathbf{\Sigma}}_{\dagger}}{\partial \boldsymbol{\theta}}\mathbb{E}_{\boldsymbol{z}_i\sim\mathcal{N}(\boldsymbol{0},\boldsymbol{P})}\left[\boldsymbol{P}^{-1} \boldsymbol{z}_i\boldsymbol{z}_i^\mathrm{T}\right]\Big)\\
    &= \mathbb{E}_{\boldsymbol{z}_i\sim\mathcal{N}(\boldsymbol{0},\boldsymbol{P})}\Big[\big(\boldsymbol{z}_i^\mathrm{T}\Tilde{\mathbf{\Sigma}}^{-1}_{\dagger}\big)\big(\frac{\partial\Tilde{\mathbf{\Sigma}}_{\dagger}}{\partial \boldsymbol{\theta}}\boldsymbol{P}^{-1} \boldsymbol{z}_i\big)\Big]\\&\approx \frac{1}{\ell} \sum_{i=1}^{\ell} \big(\boldsymbol{z}_i^\mathrm{T}\Tilde{\mathbf{\Sigma}}^{-1}_{\dagger}\big)\big(\frac{\partial\Tilde{\mathbf{\Sigma}}_{\dagger}}{\partial \boldsymbol{\theta}}\boldsymbol{P}^{-1} \boldsymbol{z}_i\big) = \Tilde{T}_\ell.
\end{align*}
Further, we can reduce the variance of this stochastic estimate by using the preconditioner $\boldsymbol{P}$ to build a control variate; see Appendix \ref{contvarapp}. 

In the FSA, we have a low-rank approximation part $\mathbf{\Sigma}_{mn}^{\mathrm{T}}\mathbf{\Sigma}_{m}^{-1}\mathbf{\Sigma}_{mn}$ with rank $m$. We use this to define the fully independent training conditional (FITC) preconditioner as
\begin{align}\label{eqfitcp}
    \widehat{\boldsymbol{P}} = \boldsymbol{D}_{\mathrm{s}} + \mathbf{\Sigma}_{mn}^{\mathrm{T}}\mathbf{\Sigma}_{m}^{-1}\mathbf{\Sigma}_{mn},
\end{align}
where $\boldsymbol{D}_{\mathrm{s}} = \text{diag}(\Tilde{\mathbf{\Sigma}}_{\mathrm{s}}) = \text{diag}(\mathbf{\Sigma} - \mathbf{\Sigma}_{mn}^{\mathrm{T}}\mathbf{\Sigma}_{m}^{-1}\mathbf{\Sigma}_{mn} + \sigma^2 \boldsymbol{I}_n)$. We also use the latter diagonal matrix as preconditioner $\boldsymbol{D}_{\mathrm{s}} = \widehat{\boldsymbol{P}}_{\mathrm{s}} \approx \Tilde{\mathbf{\Sigma}}_{\mathrm{s}}$ in the computations of the predictive variances, as outlined in Section \ref{sec_pred_var}, where linear solves with $\Tilde{\mathbf{\Sigma}}_{\mathrm{s}}$ are required. Furthermore, in Appendix \ref{contvarapp}, we show how $\widehat{\boldsymbol{P}}$ can be used to construct a control variate in the stochastic estimation of the diagonal $\text{diag}\big(({\mathbf{\Sigma}}_{n{n_p}}^\mathrm{s})^\mathrm{T}\Tilde{\mathbf{\Sigma}}_{\mathrm{s}}^{-1}{\mathbf{\Sigma}}_{n{n_p}}^\mathrm{s}\big)$ for predictive variances. 

Concerning computational costs, $\log\det\big(\widehat{\boldsymbol{P}}\big)$ and calculating $\Tr \big(\widehat{\boldsymbol{P}}^{-1}\frac{\partial \widehat{\boldsymbol{P}}}{\partial \boldsymbol{\theta}}\big)$ are of complexity $\mathcal{O}(n\cdot m^2)$, while linear solves are of complexity $\mathcal{O}(n\cdot m)$. Moreover, the computational overhead for sampling from $\mathcal{N}(\boldsymbol{0},\widehat{\boldsymbol{P}})$ is $\mathcal{O}(n\cdot m)$; see Appendix \ref{AppRT}.

\subsection{Convergence theory}\label{secCG}
Next, we analyze the convergence properties of the preconditioned CG method when applied to the FSA. For the following statements and proofs, we denote the Frobenius and the 2-norm (spectral norm) by $||\cdot||_F$ and $||\cdot||_2$, respectively, and define the vector norm $||\boldsymbol{v}||_{\boldsymbol{A}} = \sqrt{\boldsymbol{v}^\mathrm{T}\boldsymbol{A}\boldsymbol{v}}$ for $\boldsymbol{v}\in\mathbb{R}^n$ and a positive semidefinite matrix $\boldsymbol{A}\in\mathbb{R}^{n\times n}$. We make the following assumptions.
\begin{assumption}\label{assumpt1}
  $n\geq 2$.
\end{assumption}
\begin{assumption}\label{assumpt2}
  The $m \in \{1,2,\dots,n\}$ inducing points are sampled uniformly without replacement from the set of locations $\mathcal{S}$.
\end{assumption}
\begin{assumption}\label{assumpt3}
  The covariance matrix $\boldsymbol{\Sigma}$ is of the form $\Sigma_{ij}=\sigma_1^2 \cdot r\left(\boldsymbol{s}_i,\boldsymbol{s}_j\right)$, where $r(\cdot)$ is positive and continuous, and $r(0)=1$. Additionally, the matrix $\boldsymbol{\Sigma}$ has eigenvalues $\lambda_1 \geq ... \geq \lambda_n > 0$.
\end{assumption}
First, we analyze the convergence speed of the CG method for linear solves with $\Tilde{\boldsymbol{\Sigma}}_{\dagger}$.

\begin{theorem}\label{thm1}
    \textbf{Convergence of the CG method:}
    Let $\Tilde{\boldsymbol{\Sigma}}_\dagger\in \mathbb{R}^{n \times n}$ be the full-scale approximation of a covariance matrix $\Tilde{\boldsymbol{\Sigma}} =\boldsymbol{\Sigma} + \sigma^2\boldsymbol{I}_n  \in \mathbb{R}^{n \times n}$ with $m$ inducing points, taper range $\gamma$, and $\sigma^2>0$. Consider the linear system $\Tilde{\boldsymbol{\Sigma}}_\dagger \mathbf{u}^*=\mathbf{y}$, where $\mathbf{y}\in\mathbb{R}^n$. Let $\mathbf{u}_k$ be the approximation in the $k$th iteration of the CG method. Given Assumptions~\ref{assumpt1}--\ref{assumpt3}, the following holds for the relative error:
    \begin{align*}
        \frac{\left\|\mathbf{u}^*-\mathbf{u}_k\right\|_{\Tilde{\boldsymbol{\Sigma}}_\dagger}}{\left\|\mathbf{u}^*-\mathbf{u}_0\right\|_{\Tilde{\boldsymbol{\Sigma}}_\dagger}} \leq 2\Bigg({1 + \mathcal{O}_P\Big(\sigma\cdot\big((\lambda_{m+1} + \frac{n}{\sqrt{m}}\cdot\sigma_1^2)\cdot\sqrt{n\cdot n_\gamma}+\lambda_1\big)^{-\frac{1}{2}}\Big)}\Bigg)^{-k},
    \end{align*}
    where $\mathcal{O}_P$ is the $\mathcal{O}$-notation in probability (see Definition \ref{defO} in Appendix \ref{AppConv}).
\end{theorem}
For the proof, see Appendix \ref{AppConv}. Theorem \ref{thm1} shows that selecting a narrower taper range $\gamma$ leads to improved convergence in the CG method. However, the relationship is more complicated with respect to the number of inducing points. On the one hand, terms such as $\frac{n}{\sqrt{m}}$ and $\lambda_{m+1}$ decrease with larger $m$ for a given $n$ and thus lead to faster convergence. On the other hand, we bound $||\boldsymbol{\Sigma}_{mn}^\mathrm{T}\boldsymbol{\Sigma}^{-1}_m\boldsymbol{\Sigma}_{mn}||_2$ in the proof of Theorem \ref{thm1} by $||\boldsymbol{\Sigma}||_2$; see Appendix \ref{AppConv}. However, $||\boldsymbol{\Sigma}_{mn}^\mathrm{T}\boldsymbol{\Sigma}^{-1}_m\boldsymbol{\Sigma}_{mn}||_2$ grows with $m$. Furthermore, we observe that the convergence is slower with decreasing $\sigma^2$ and increasing values of $\lambda_1$ and $\lambda_{m+1}$, which are positively related to the covariance parameters.

\begin{theorem}\label{th2}
    \textbf{Convergence of the CG method with the FITC preconditioner:}
    Let $\Tilde{\boldsymbol{\Sigma}}_\dagger\in \mathbb{R}^{n \times n}$ be the full-scale approximation of a covariance matrix $\Tilde{\boldsymbol{\Sigma}} =\boldsymbol{\Sigma} + \sigma^2\boldsymbol{I}_n  \in \mathbb{R}^{n \times n}$ with $m$ inducing points, taper range $\gamma$, and $\sigma^2>0$. Consider the linear system $\Tilde{\boldsymbol{\Sigma}}_\dagger \mathbf{u}^*=\mathbf{y}$, where $\mathbf{y}\in\mathbb{R}^n$. Let $\mathbf{u}_k$ be the approximation in the $k$th iteration of the preconditioned CG method with the FITC preconditioner. Given Assumptions~\ref{assumpt1}--\ref{assumpt3}, the following holds for the relative error:
    \begin{align*}
       \frac{\left\|\mathbf{u}^*-\mathbf{u}_k\right\|_{\Tilde{\boldsymbol{\Sigma}}_\dagger}}{\left\|\mathbf{u}^*-\mathbf{u}_0\right\|_{\Tilde{\boldsymbol{\Sigma}}_\dagger}} \leq 2\Bigg({1 + \mathcal{O}_P\Big(\sigma^2 \cdot\big((\lambda_{m+1} + \frac{n}{\sqrt{m}}\cdot \sigma_1^2)\cdot \sqrt{n\cdot (n_\gamma-1)}\big)^{-1}\Big)}\Bigg)^{-k},
    \end{align*}
    where $\mathcal{O}_P$ is the $\mathcal{O}$-notation in probability (see Definition \ref{defO} in Appendix \ref{AppConv}).
\end{theorem}
For the proof, see Appendix \ref{AppConv}. We note that the upper bound for the relative error in Theorem \ref{th2} also holds for $\frac{\left\|\mathbf{u}^*-\mathbf{u}_k\right\|_{\Tilde{\boldsymbol{\Sigma}}_{\mathrm{s}}}}{\left\|\mathbf{u}^*-\mathbf{u}_0\right\|_{\Tilde{\boldsymbol{\Sigma}}_{\mathrm{s}}}}$ using the CG method with preconditioner $\widehat{\boldsymbol{P}}_{\mathrm{s}} = \text{diag}(\Tilde{\boldsymbol{\Sigma}}_{\mathrm{s}})$ to solve the linear system $\Tilde{\boldsymbol{\Sigma}}_{\mathrm{s}}\mathbf{u}^*= \mathbf{y}$. Similarly to Theorem \ref{thm1}, Theorem \ref{th2} shows that selecting a narrower taper range $\gamma$ leads to improved convergence in the preconditioned CG method. In particular, this improvement is characterized by a relative error approaching zero as the taper range goes to zero, $\gamma \rightarrow 0$, a logical outcome considering that, in this scenario, $\Tilde{\boldsymbol{\Sigma}}_\dagger\xrightarrow{\gamma\rightarrow 0}\widehat{\boldsymbol{P}}$. Furthermore, our findings indicate that the convergence rate of the CG method with the FITC preconditioner does not depend on the largest eigenvalue $\lambda_1$. Consequently, using the FITC preconditioner exhibits less sensitivity to the sample size and the correlation function parameterized, e.g., by a range parameter. Furthermore, a higher number of inducing points $m$ leads to faster convergence. These theoretical results align with the empirical results in our simulation study; see Figure \ref{fig:IT}.

\section{Simulation study}\label{sect4}
In the following, we analyze our methods in experiments with simulated data. Unless stated otherwise, we simulate a sample of size $n=100'000$ from a zero-mean GP with a Matérn covariance function with smoothness parameter $\nu = \frac{3}{2}$, marginal variance $\sigma_1^2 = 1$, and nugget effect $\sigma^2 = 1$, and locations are sampled uniformly on the unit square $[0,1]\times[0,1]$. We consider three different choices of effective ranges $(0.5, 0.2, 0.05)$ corresponding to range parameters $\rho$ of approximately $(\frac{0.5}{2.7}, \frac{0.2}{2.7}, \frac{0.05}{2.7}) = (0.1852, 0.0741, 0.0185)$. The effective range is defined as the distance at which the correlation drops to 0.05. We use a tolerance level of $\delta = 0.001$ for checking convergence in the CG algorithm, in line with the recommendations of \citet{maddox2021iterative}. Further, unless specified otherwise, we use $\ell = 50$ sample vectors for the stochastic approximations. All of the following calculations are done with an AMD EPYC 7742 processor and 512 GB of random-access memory using the GPBoost library version 1.5.5 compiled with the GCC compiler version 11.2.0. The code to reproduce the experiments is available at \url{https://github.com/TimGyger/iterativeFSA}. 

\subsection{Methods for choosing inducing points}
There are various ways of choosing inducing points. We compare three different methods that do not use any response variable information but only the spatial input locations: random selection, kmeans++ \citep{arthur2007k}, and the cover tree algorithm \citep{terenin2022numerically}. The computational complexities of these methods are $\mathcal{O}(m)$, $\mathcal{O}(n\cdot m)$, and $\mathcal{O}\big(n\cdot\log(n)\big)$, respectively. Specifically, we compare the negative log-likelihood evaluated at the true population parameters. As all three methods contain randomness, we repeat the evaluation 25 times with different random number generator seeds, and all calculations are done using the Cholesky decomposition to ensure that the only source of randomness arises from the methods used to select the inducing points.
\begin{figure}[ht!]
    \centering
    \includegraphics[width=\linewidth]{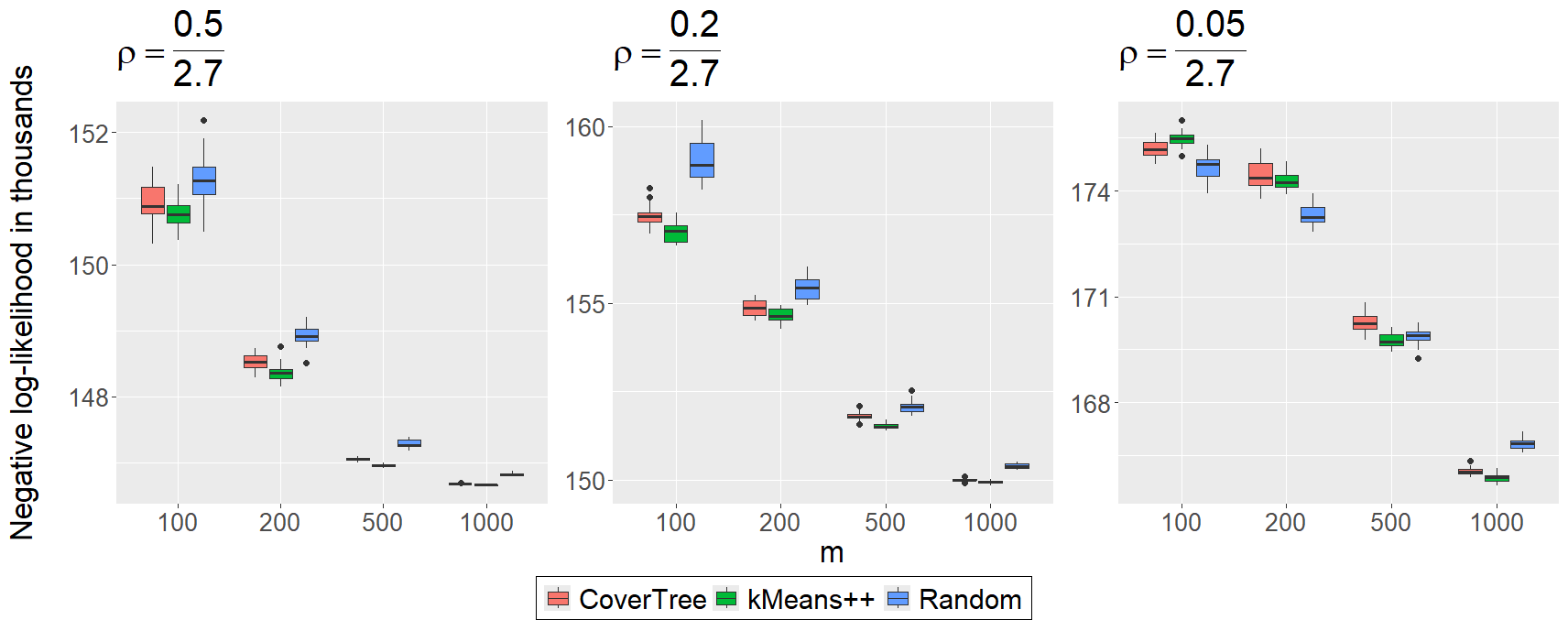}
    \caption{Box-plots of the negative log-likelihood for the FITC approximation for different effective ranges (0.5, 0.2, 0.05 from left to right) and numbers of inducing points $m$ ($n = 100'000$).}
    \label{fig:FITC}
\end{figure}

In Figure \ref{fig:FITC}, we present the results for the FITC approximation, which combines the predictive process with the diagonal of the residual process for varying numbers of inducing points. We find that the kmeans++ algorithm usually results in the lowest negative log-likelihood for a given number of inducing points. The cover tree algorithm achieves slightly worse results, and a random selection is clearly inferior. We observe similar results for the FSA; see Figure \ref{fig:FSA} in Appendix \ref{AppIP}. Therefore, we use the kmeans++ algorithm in the following to determine the inducing points. Note that the experiments in this section, Section \ref{exp_pred_var}, and Section \ref{comparison_PC_Vecchia} are conducted on only one simulated data set since we do not want to mix sampling variability and randomness of the methods that are analyzed. However, the results do not change when using other samples (results not shown).

\subsection{Comparison of preconditioners}\label{comparison_PC}
We compare the proposed FITC preconditioner to the state-of-the-art pivoted Cholesky preconditioner \citep{harbrecht2012low, gardner2018gpytorch} for different ranks $k\in\{200,500,1'000\}$. Note that these ranks exceed the recommendations of \citet{maddox2021iterative}. We analyze the number of CG iterations, and the runtime. Table \ref{Table1} presents the results for simulated data using an effective range of 0.2. In Table \ref{Table1Precapp} in Appendix \ref{simstudapp}, we present additionally the results for effective ranges of 0.05 and 0.5. We observe that the FITC preconditioner clearly outperforms the pivoted Cholesky preconditioner in terms of number of CG iterations and runtime, even for a very high rank $k$. Note that in addition to requiring more CG iterations for convergence, the pivoted Cholesky preconditioner is considerably slower to construct, limiting the rank to small values in practice. In contrast, the FITC preconditioner can be computed more efficiently, and larger ranks can thus be used. Consequently, we use the FITC preconditioner in the following.
\begin{table}[ht!]
\centering
\begin{tabular}{ |p{3.1cm}||p{1.cm}||p{1.2cm}||p{1.6cm}||p{1.6cm}||p{1.8cm}|  }
 \hline
\textbf{Preconditioner:}& \text{None} & \text{FITC} & \multicolumn{3}{|c|}{Pivoted Cholesky}\\
 \hline
 & & &$k = 200$ & $k = 500$& $k = 1'000$\\
 \hhline{|=||=||=||=||=||=|}
 CG-Iterations & 279  & 9   & 91& 52 & 32  \\
 Time (s) & 54  & 10   & 52& 83 & 266  \\ 
 
 \hline
 
\end{tabular}
\caption{Number of (preconditioned) CG-iterations for the linear solve $\Tilde{\boldsymbol{\Sigma}}_{\dagger}^{-1}\boldsymbol{y}$ and the time in seconds (s) for computing the negative log-likelihood for the true population parameters ($n = 100'000$, $m = 500$, $n_\gamma = 80$).}\label{Table1} 
\end{table}

In Figure \ref{fig:IT}, we show the number of iterations required by the CG method with and without the FITC preconditioner for calculating $\Tilde{\boldsymbol{\Sigma}}_{\dagger}^{-1}\boldsymbol{y}$ depending on the parameters $n$, $m$, and $\gamma$. These results align with Theorems \ref{thm1} and \ref{th2}. Specifically, we observe that the convergence of the CG method with the FITC preconditioner is less sensitive to the sample size than without a preconditioner. Furthermore, the number of iterations in the CG algorithm without a preconditioner increases with a growing number of inducing points $m$. On the other hand, the number of iterations decreases in $m$ when using the FITC preconditioner. This is consistent with our convergence theorems. 
\begin{figure}[ht!]
    \centering
    \includegraphics[width=\linewidth]{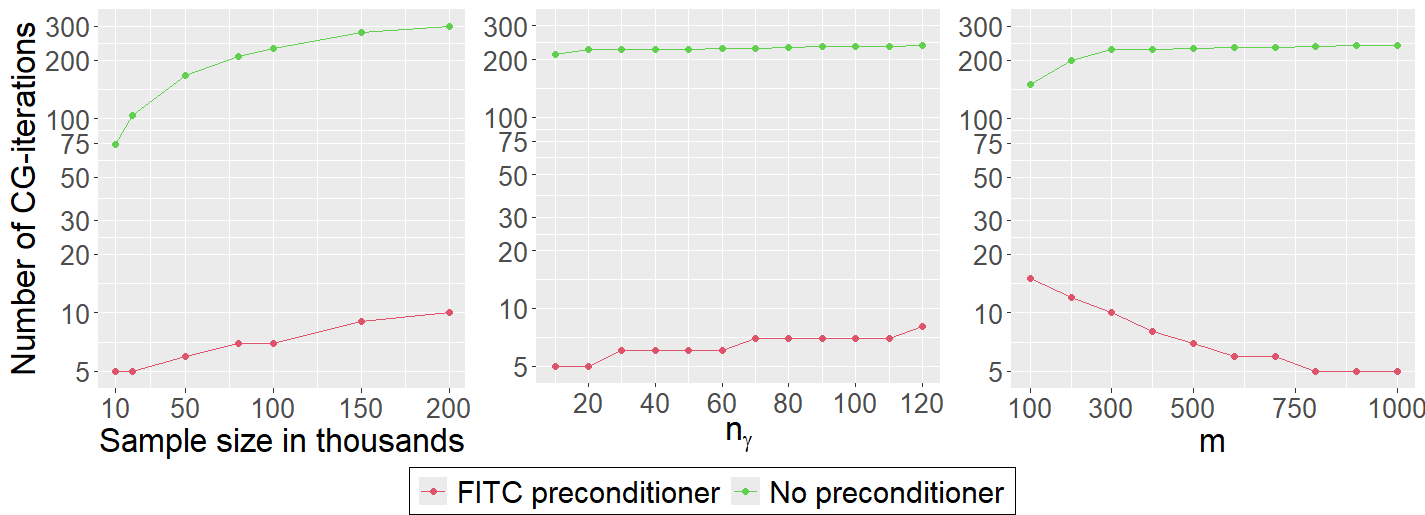}
    \caption{Number of iterations (in log-scale) used in the CG method with and without the FITC preconditioner for calculating $\Tilde{\boldsymbol{\Sigma}}_{\dagger}^{-1}\boldsymbol{y}$ for simulated data with an effective range of 0.2. Left: Different sample sizes $n$ with constant $n_\gamma = 80$ and $m = 500$. Middle: Different taper ranges $\gamma$ with constant $n = 100'000$ and $m = 500$. Right: Different numbers of inducing points $m$ with constant $n = 100'000$ and $n_\gamma = 80$.}
    \label{fig:IT}
\end{figure}

\subsection{Comparison of methods for calculating predictive variances}\label{exp_pred_var}

In the following, we compare the simulation-based approach introduced in Section \ref{sec_pred_var} for calculating predictive variances and the Lanczos algorithm-based method. We use $n=n_p=100'000$ training and prediction locations for doing this. Evaluation is done using the  log-score (LS), $-\frac{1}{n_p}\sum\nolimits_{i=1}^{n_p}\log\big(\phi({\boldsymbol{y}}^*_i ; \boldsymbol{\mu}^p_{\dagger,i}, {\boldsymbol{\sigma}}_{p,i})\big)$, 
where $\phi(x;\mu,\sigma)$ is the density function of a normal distribution with mean $\mu$ and variance $\sigma^2$, ${\boldsymbol{y}}^*$ is the test response, $\boldsymbol{\mu}^p_\dagger$ is the predictive mean, and ${\boldsymbol{\sigma}}^2_p$ is the predictive variance $\text{diag}(\boldsymbol{\Sigma}^p_\dagger)$. We also consider the root mean squared error (RMSE) of the approximate predictive variances compared to the Cholesky-based ones. In Figure \ref{fig:Pred_Sim}, we report the accuracy versus time for varying numbers of ranks $k$ in the Lanczos algorithm and different numbers of samples $\ell$ in the stochastic approach, with $k$, $\ell \in\{50,200,500,1'000,2'000,5'000\}$ when using an effective range of 0.2 and the true population parameters. The figure also reports the runtime for calculating the predictive means and variances using the Cholesky factorization. Additionally, we report results for other range parameters and the standard deviations of the stochastic computations in parentheses in Table \ref{table2} in Appendix \ref{AppPD}.
\begin{figure}[ht!]
    \centering
    \includegraphics[width=\linewidth]{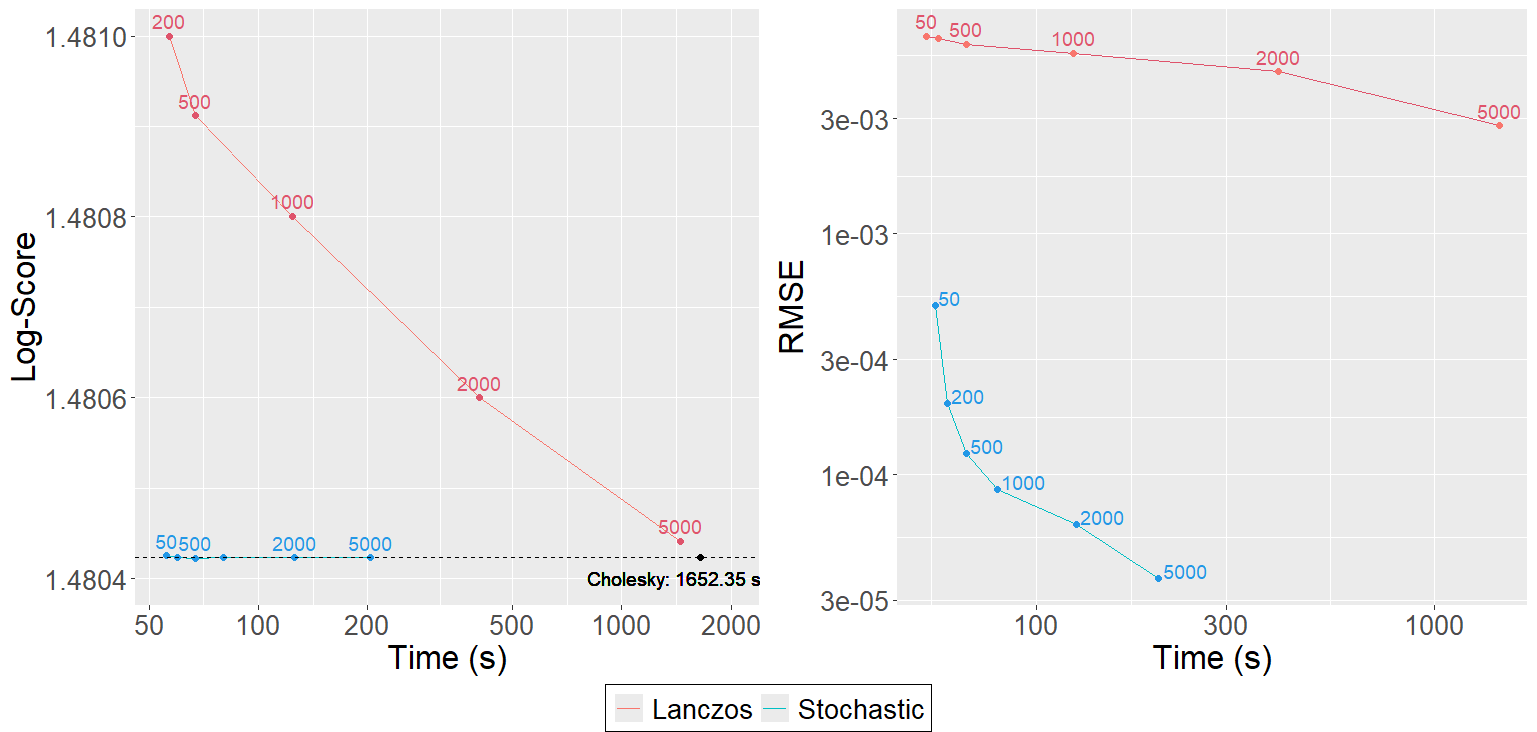}
    \caption{Comparison of the Lanczos and stochastic estimation methods for predictive variances when using an effective range of 0.2. The dashed black line corresponds to the computations based on Cholesky decomposition. The numbers next to the points correspond to the number of sample vectors or the rank, respectively. For the stochastic approach, the respective mean is shown ($n_p = 100'000$, $n = 100'000$, $m = 500$, $n_\gamma = 80$).}
    \label{fig:Pred_Sim}
\end{figure}
The results show that even for a large rank $k = 5'000$, as recommended by \citet{maddox2021iterative}, the Lanczos method is not able to approximate the predictive variances accurately. This limitation arises from the presence of numerous small eigenvalues in the matrix $\Tilde{\mathbf{\Sigma}}_{\mathrm{s}}$ and, generally, in covariance matrices. On the other hand, the stochastic approach approximates the predictive variance accurately with low variances in its approximations. Therefore, for the following experiments, we use the stochastic approach.

\subsection{Computational time and accuracy of log-marginal likelihoods, parameter estimates, and predictive distributions}\label{subsect:sim_all}
Next, we analyze the runtime and accuracy of log-marginal likelihoods, parameter estimates, and predictive distributions of iterative methods compared to Cholesky-based calculations. Parameter estimates are obtained by minimizing the negative log-likelihood using a limited-memory BFGS (LBFGS) algorithm. We have also tried using Fisher scoring in two variants: one utilizing Cholesky-based computations and another one employing iterative techniques as described in Section \ref{SectionIterPE}, but both options were slower than the LBFGS algorithm (results not shown).
For the stochastic estimation of the predictive variance, we use $500$ Rademacher sample vectors. For evaluating predictive distributions, we calculate the RMSE, the log-score, and the continuous ranked probability score (CRPS) given by
\begin{align*}
    \frac{1}{n_p} \sum_{i=1}^{n_p} \boldsymbol{\sigma}_{p,i}\Bigg(\frac{-1}{\sqrt{\pi}}+2\cdot \phi\Big(\frac{\boldsymbol{y}^*_i-\boldsymbol{\mu}^p_{\dagger,i}}{{\boldsymbol{\sigma}_{p,i}}};0,1\Big)+\frac{\boldsymbol{y}^*_i-\boldsymbol{\mu}^p_{\dagger,i}}{{\boldsymbol{\sigma}_{p,i}}}\bigg(2\cdot \Phi\Big(\frac{\boldsymbol{y}^*_i-\boldsymbol{\mu}^p_{\dagger,i}}{{\boldsymbol{\sigma}_{p,i}}};0,1\Big)-1\bigg)\Bigg),
\end{align*}
where $\Phi(x;\mu,\sigma)$ is the cumulative distribution function of a normal distribution with mean $\mu$ and variance $\sigma^2$. 

In Figure \ref{fig:Runtime_k}, we report the runtimes for calculating the negative log-likelihood using both Cholesky-based and iterative methods with and without the FITC preconditioner ($\ell = 50$). We vary the sample sizes $n$, taper ranges $\gamma$, and numbers of inducing points $m$ in turn, while keeping the other quantities fixed at $n_\gamma = 80$, $m = 500$, and $n = 100'000$. We observe that the speedup of the iterative approach with the FITC preconditioner, relative to Cholesky-based computations, grows with the sample size $n$ and the average number of nonzero entries per row $n_\gamma$. The speedup for different numbers of inducing points $m$ is constant. Furthermore, the runtime for the iterative calculations when using the FITC preconditioner exhibits a linear growth pattern concerning both $n$ and $n_\gamma$ and a nonlinear growth concerning $m$, consistent with the theoretical computational complexity of $\mathcal{O}\big(n\cdot(m^2 + m\cdot t + n_\gamma\cdot t)\big)$. On the other hand, the runtime for the Cholesky-based calculations is not linear in $n$, consistent with the theoretical computational complexity of $\mathcal{O}\big(n\cdot(m^2 + n_\gamma^2) + n^{3/2}\big)$. 
\begin{figure}[ht!]
    \centering
    \includegraphics[width=\linewidth]{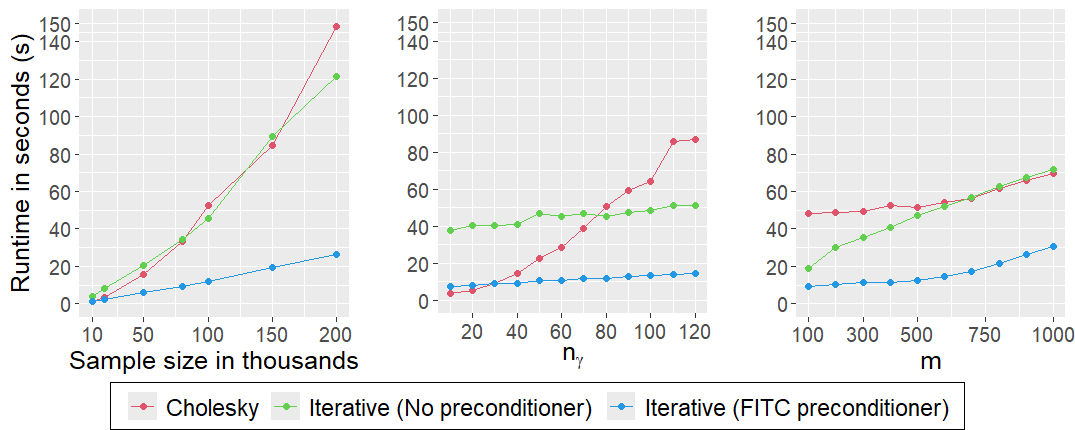}
    \caption{Time (s) for computing the negative log-likelihood using a Cholesky decompostion and iterative methods for simulated data for varying samples sizes $n$, taper ranges $\gamma$, and numbers of inducing points $m$.}
    \label{fig:Runtime_k}
\end{figure}

In Figure \ref{fig:NEGLL_Var}, we show the relative error in the negative log-likelihood calculated with our iterative methods compared to Cholesky-based calculations on simulated random fields with effective ranges 0.05, 0.2, and 0.5, respectively, for the true population parameters. We consider both the FITC preconditioner and the CG method without a preconditioner. We observe only minor deviations between the Cholesky and iterative methods-based log-likelihoods, and the preconditioner results in variance reduction. Concerning the latter, we find that the range parameter influences the amount of variance reduction. We have also compared Rademacher-distributed and normally distributed sample vectors, but no significant differences in the stochastic estimates were evident (results not shown).
\begin{figure}[ht!]
    \centering
    \includegraphics[width=\linewidth]{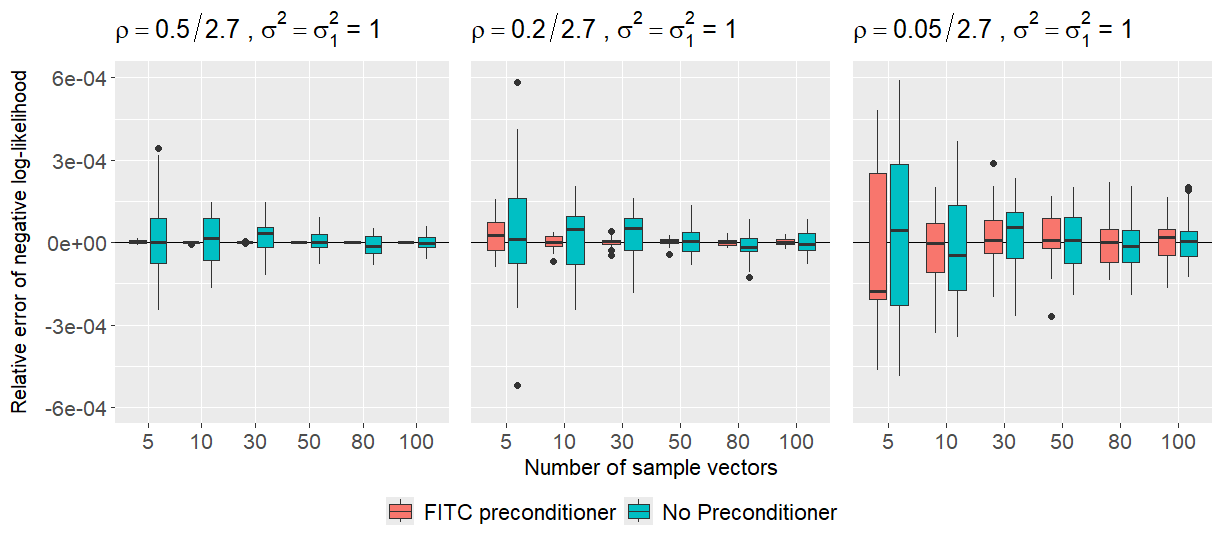}
    \caption{Box-plots of the relative error of the negative log-likelihood computed with and without the FITC preconditioner on simulated random fields with effective ranges of 0.5, 0.2, and 0.05, respectively, for the true population parameters ($n = 100'000$, $n_\gamma = 80$, $m = 500$).}
    \label{fig:NEGLL_Var}
\end{figure}

Next, we analyze the accuracy of the stochastic approximations of the derivatives of the negative log-likelihood. We consider the relative error of iterative methods compared to that of Cholesky-based calculations. This analysis is carried out for the true population parameters on simulated random fields with an effective range 0.05, 0.2, and 0.5, respectively. For the iterative methods, we consider three distinct approaches: employing no preconditioner, as well as using the FITC preconditioner and variance reduction techniques with $\hat{c}_\text{opt} = 1$ and $\hat{c}_\text{opt}$, as outlined in Appendix \ref{contvarapp}. Figure \ref{GradNEGLL} shows the results. The FITC preconditioner method results in lower variance, with a slight further reduction observed when employing the optimal variance reduction parameter $\hat{c}_\text{opt}$.

\begin{figure}[ht!]
    \centering
    \includegraphics[width=\linewidth]{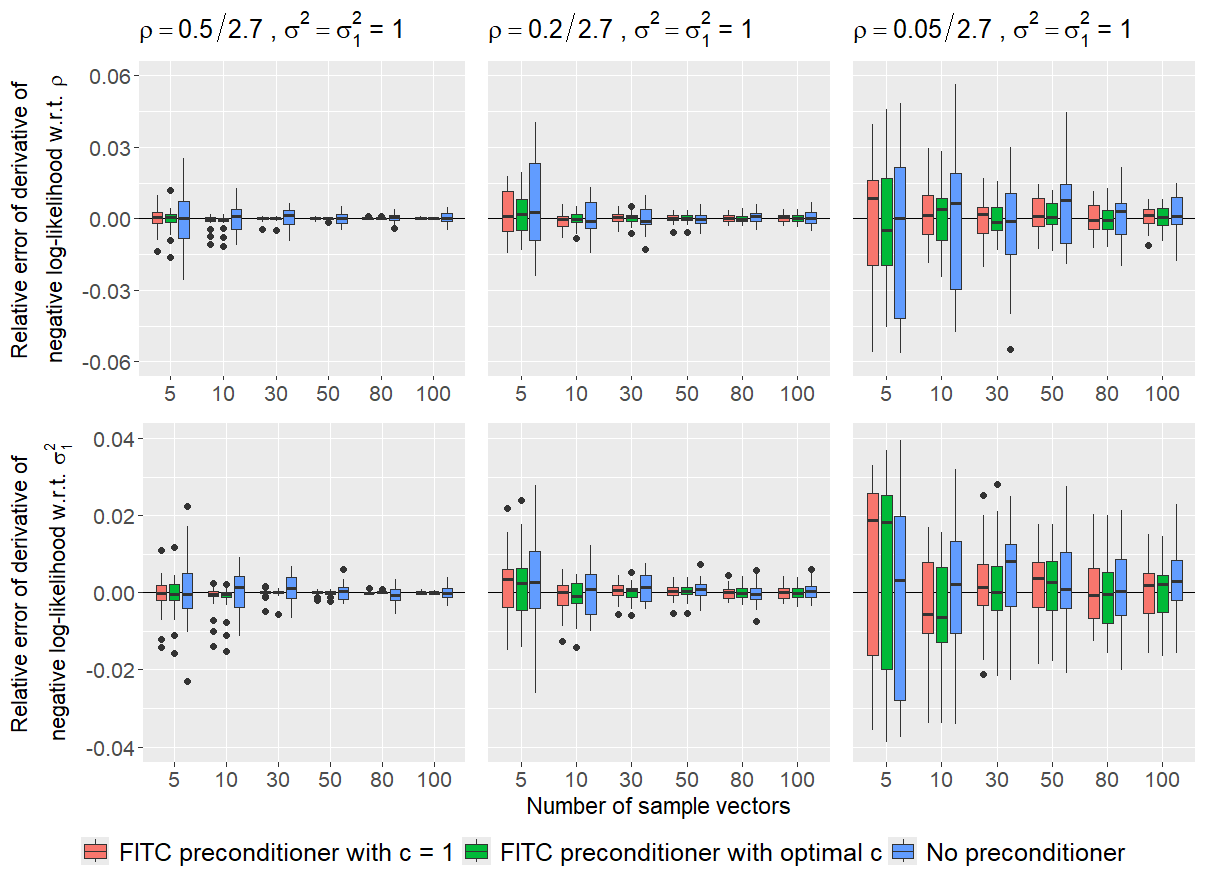}
    \caption{Box-plots of the relative error of the derivatives of the negative log-likelihood with respect to $\rho$ (top row) and $\sigma^2_1$ (bottom row) computed without a preconditioner, with the FITC-preconditioner, and additionally with estimating $\hat{c}_\text{opt}$ for the effective ranges of 0.5, 0.2, and 0.05 (left to right) and the true population parameters  ($n = 100'000$, $n_\gamma = 80$, $m = 500$).}\label{GradNEGLL}
\end{figure}

In Figure \ref{fig:Comp_GP_Inf}, we present the estimated covariance parameters and the measures for evaluating the predictive distributions over 10 simulated random fields for an effective range of 0.2. We observe almost identical parameter estimates and prediction accuracy measures for iterative methods and Cholesky-based computations. Further, iterative methods are faster by approximately one order of magnitude compared to Cholesky-based calculations for both parameter estimation and prediction. Specifically, we find average speedups of 7.5, 27.2, and 11.2 for estimating the parameters, computing the predictive distribution, and both together, respectively; see Table \ref{Table5} in Appendix \ref{Appfull}. In Table \ref{Table5} in Appendix \ref{Appfull}, we also report the bias and RMSE of the estimated covariance parameters for each of the effective ranges 0.5, 0.2, and 0.05, as well as the averages and standard errors of the RMSE of the predictive mean, the log-score, and the CRPS.


\begin{figure}[ht!]
     \centering
     \includegraphics[width=\linewidth]{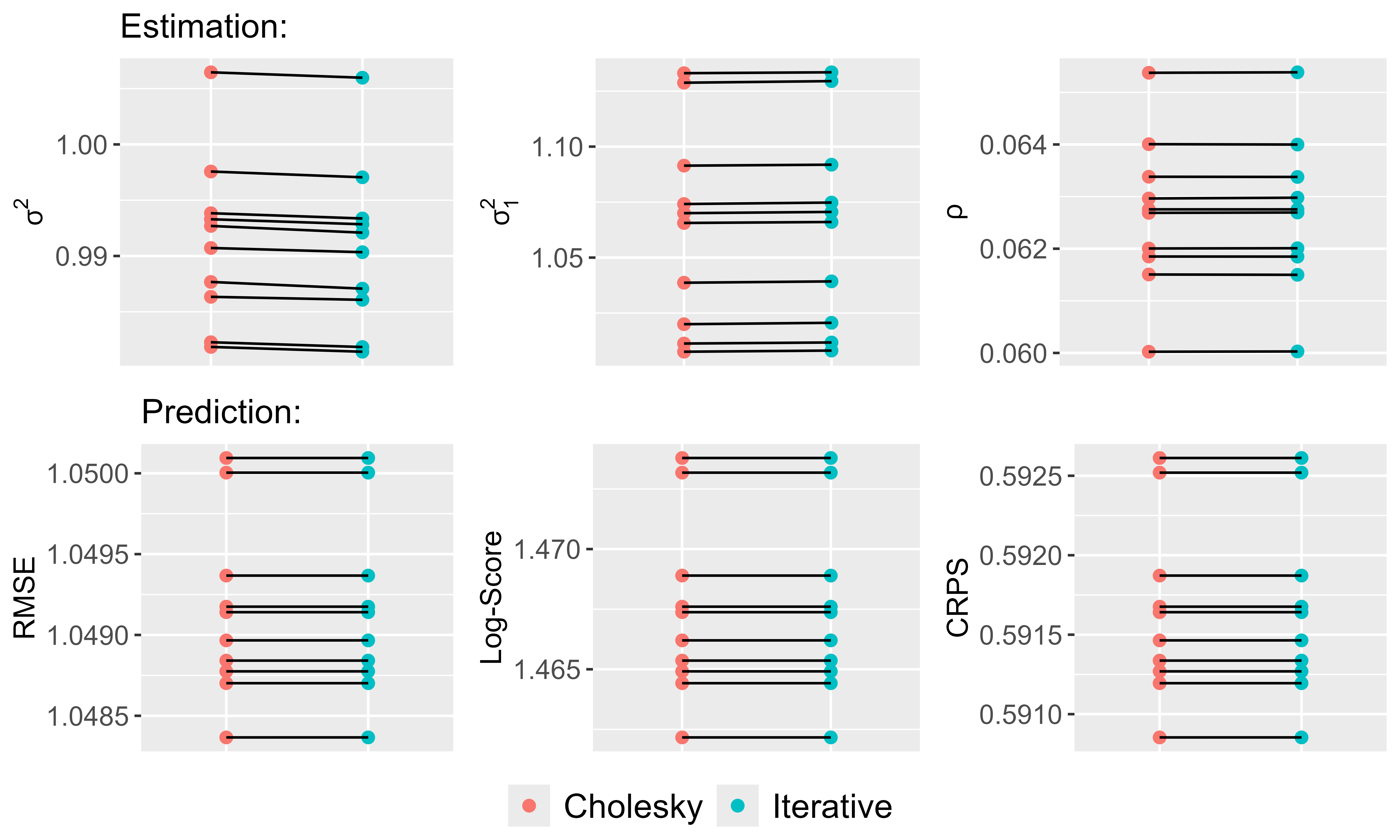}
     \caption{Comparison of the estimated covariance parameters ($\sigma^2$, $\sigma^2_1$, and $\rho$ in top row) and predictive distributions (RMSE, log-score, and CRPS in bottom row) between Cholesky-based and the iterative computations over 10 simulated random fields ($n_p = 100'000$, $n = 100'000$, $m = 500$, $n_\gamma = 80$).}
     \label{fig:Comp_GP_Inf}
 \end{figure}
\section{The FITC preconditioner for Vecchia approximations}\label{sectVecchia}
The FITC preconditioner proposed in this article is useful for not just FSAs but also other GP approximations and exact GPs when using iterative methods. To demonstrate this, we show in the following how the FITC preconditioner can be used in iterative methods for Vecchia approximations \citep{vecchia1988estimation, datta2016hierarchical, katzfuss2021general, schaefer2021compression, kundig2024iterative}, and we compare it to other state-of-the-art preconditioners for Vecchia approximations. 

\subsection{Vecchia approximations}
Vecchia approximations result in approximate sparse (reverse) Cholesky factors of precision matrices. We apply a Vecchia approximation to the latent GP $\boldsymbol{b} \sim \mathcal{N}(\boldsymbol{0}, \mathbf{\Sigma})$ by approximating the density $p(\boldsymbol{b} \mid \boldsymbol{\theta})$ as $p(\boldsymbol{b} \mid \boldsymbol{\theta}) \approx \prod_{i=1}^n p\left(\boldsymbol{b}_i \mid \boldsymbol{b}_{N(i)}, \boldsymbol{\theta}\right)$, where $\boldsymbol{b}_{N(i)}$ are subsets of $\left(\boldsymbol{b}_1, \ldots, \boldsymbol{b}_{i-1}\right)$, and $N(i) \subseteq\{1, \ldots, i-1\}$ with $|N(i)| \leq m_v$. If $i>m_v+1, N(i)$ is often chosen as the $m_v$ nearest neighbors of $\boldsymbol{s}_i$ among $\boldsymbol{s}_1, \ldots, \boldsymbol{s}_{i-1}$. It follows that $p\left(\boldsymbol{b}_i \mid \boldsymbol{b}_{N(i)}, \boldsymbol{\theta}\right)=\mathcal{N}\left(\mathbf{A}_i \boldsymbol{b}_{N(i)}, \mathbf{D}_i\right)$, where $\mathbf{A}_i=\mathbf{\Sigma}_{N(i), i}^{\mathrm{T}} \mathbf{\Sigma}_{N(i)}^{-1},$ $\mathbf{D}_i=\mathbf{\Sigma}_{i, i}-\mathbf{A}_i \mathbf{\Sigma}_{N(i), i}$, $\mathbf{\Sigma}_{N(i), i}$ is a subvector of $\mathbf{\Sigma}$ with the $i$th column and row indices $N(i)$, and $\mathbf{\Sigma}_{N(i)}$ denotes a submatrix of $\mathbf{\Sigma}$ consisting of rows and columns $N(i)$. Defining a sparse lower triangular matrix $\mathbf{B} \in \mathbb{R}^{n \times n}$ with 1s on the diagonal, off-diagonal entries $\mathbf{B}_{i, N_{(i)}}=-\mathbf{A}_i$ and $0$ otherwise, and a diagonal matrix $\mathbf{D} \in \mathbb{R}^{n \times n}$ with $\mathbf{D}_i$ on the diagonal, one obtains the Vecchia approximation $\boldsymbol{b} \stackrel{\text { approx }}{\sim} \mathcal{N}(\boldsymbol{0}, {\mathbf{\Sigma}}_V)$, where ${\mathbf{\Sigma}}_V^{-1}=\mathbf{B}^{\mathrm{T}} \mathbf{D}^{-1} \mathbf{B}$. Calculating a Vecchia approximation has $\mathcal{O}\left(n\cdot m_v^3\right)$ computational and $\mathcal{O}(n \cdot m_v)$ memory cost. Often, accurate approximations are obtained for small values of $m_v$. Computing linear solves of the form 
\begin{equation}\label{eqvecchia}
(\mathbf{\Sigma}_V^{-1} + \mathbf{W})^{-1}\boldsymbol{v},
\end{equation}
where \(\boldsymbol{v} \in \mathbb{R}^n\) and $\mathbf{W}$ is a diagonal matrix containing negative second derivatives of the log-likelihood, and log-determinants of $\mathbf{\Sigma}_V^{-1} + \mathbf{W}$ are required when applying Vecchia approximations to a latent GP with Gaussian likelihoods and in Vecchia-Laplace approximations for non-Gaussian likelihoods; see \citet{schaefer2021sparse} and \citet{kundig2024iterative} for more details. To apply the FITC preconditioner, we first note that instead of \eqref{eqvecchia}, we can equivalently solve 
\begin{equation}\label{VL_v2}
    \mathbf{W}(\mathbf{\Sigma}_V + \mathbf{W}^{-1})^{-1}\mathbf{\Sigma}_V^{-1}\boldsymbol{v}.
\end{equation}
We propose to compute these linear solves with the preconditioned CG method and the FITC preconditioner given by
\[
\widehat{\boldsymbol{P}} = \mathbf{\Sigma}_{mn}^{\mathrm{T}} \mathbf{\Sigma}_m^{-1} \mathbf{\Sigma}_{mn} + 
\boldsymbol{D}_\mathrm{s} \approx \mathbf{\Sigma} + \mathbf{W}^{-1},
\]
where $\boldsymbol{D}_\mathrm{s} = \operatorname{diag}(\mathbf{\Sigma} - \mathbf{\Sigma}_{mn}^{\mathrm{T}} \mathbf{\Sigma}_m^{-1} \mathbf{\Sigma}_{mn}) + \mathbf{W}^{-1}$. Further, log-determinants of $\mathbf{\Sigma}_V^{-1} + \mathbf{W}$ can be calculated using SLQ approximations as outlined in \citet{kundig2024iterative}. Similarly to the state-of-the-art pivoted Cholesky preconditioner, the FITC preconditioner offers a low-rank approximation. However, the pivoted Cholesky preconditioner is considerably slower to construct limiting the rank to smaller values in practice. In contrast, the FITC preconditioner can be computed more efficiently, and larger ranks can thus be used.

\subsection{Comparison of preconditioners}\label{comparison_PC_Vecchia}
We compare the proposed FITC preconditioner with the following state-of-the-art alternatives: a pivoted Cholesky preconditioner  \citep{harbrecht2012low, gardner2018gpytorch}, a Vecchia approximation with diagonal update (VADU) preconditioner \citep{kundig2024iterative}, an observable Vecchia approximation preconditioner defined below, and a zero fill-in incomplete Cholesky factorization preconditioner \citep{schaefer2021sparse} using the sparsity pattern of $\mathbf{B}^{\mathrm{T}} \mathbf{D}^{-1} \mathbf{B}$; see Appendix A.7 of \citet{kundig2024iterative} for a precise definition of the incomplete Cholesky factorization algorithm we use. Alternatively, the incomplete Cholesky preconditioner could also use the sparsity pattern of $\mathbf{B}$. While this leads to faster runtimes, it often yields breakdowns \citep{kundig2024iterative}. In fact, we occasionally observe breakdowns even when using the sparsity pattern of $\mathbf{B}^{\mathrm{T}} \mathbf{D}^{-1} \mathbf{B}$, depending on the simulated locations. An insightful reviewer gave us the idea to use a Vecchia approximation to the response variable precision $(\mathbf{\Sigma} + \sigma^2 \boldsymbol{I}_n)^{-1}$ as preconditioner for Gaussian likelihoods. We extend this to non-Gaussian likelihoods by using a Vecchia approximation for $(\mathbf{\Sigma}_V + \mathbf{W}^{-1})^{-1}$ in \eqref{VL_v2}, interpreting $\mathbf{W}^{-1}$ as pseudo nugget effect. We denote this preconditioner as the ``observable Vecchia approximation" preconditioner. To the best of our knowledge, this preconditioner is novel and has not been used before. The idea of using a Vecchia approximation as a preconditioner in the CG method has first been proposed in \citet{stroud2017bayesian}, who have applied this for a model with a Gaussian likelihood and a spectral approximation.

The comparison is conducted in terms of the runtime and the variance of marginal likelihood approximations. Specifically, we replicate the experiments of \citet[Section 4.2 and Appendix A.8]{kundig2024iterative}, where the response variable $\boldsymbol{y} \in \{0,1\}^n$ follows a Bernoulli likelihood. The experiments are conducted for varying numbers of sample vectors \(\ell\). For the pivoted Cholesky algorithm, we use a rank of \(k = 50\) following \citet{maddox2021iterative} and \citet{kundig2024iterative}. For the FITC preconditioner, we use \(m = 200\); see below for a sensitivity analysis regarding this tuning parameter. The marginal likelihood is computed at the true covariance parameters and repeated 100 times using different random vectors to assess variability. 

\begin{figure}[ht!]
    \centering
    \includegraphics[width=\linewidth]{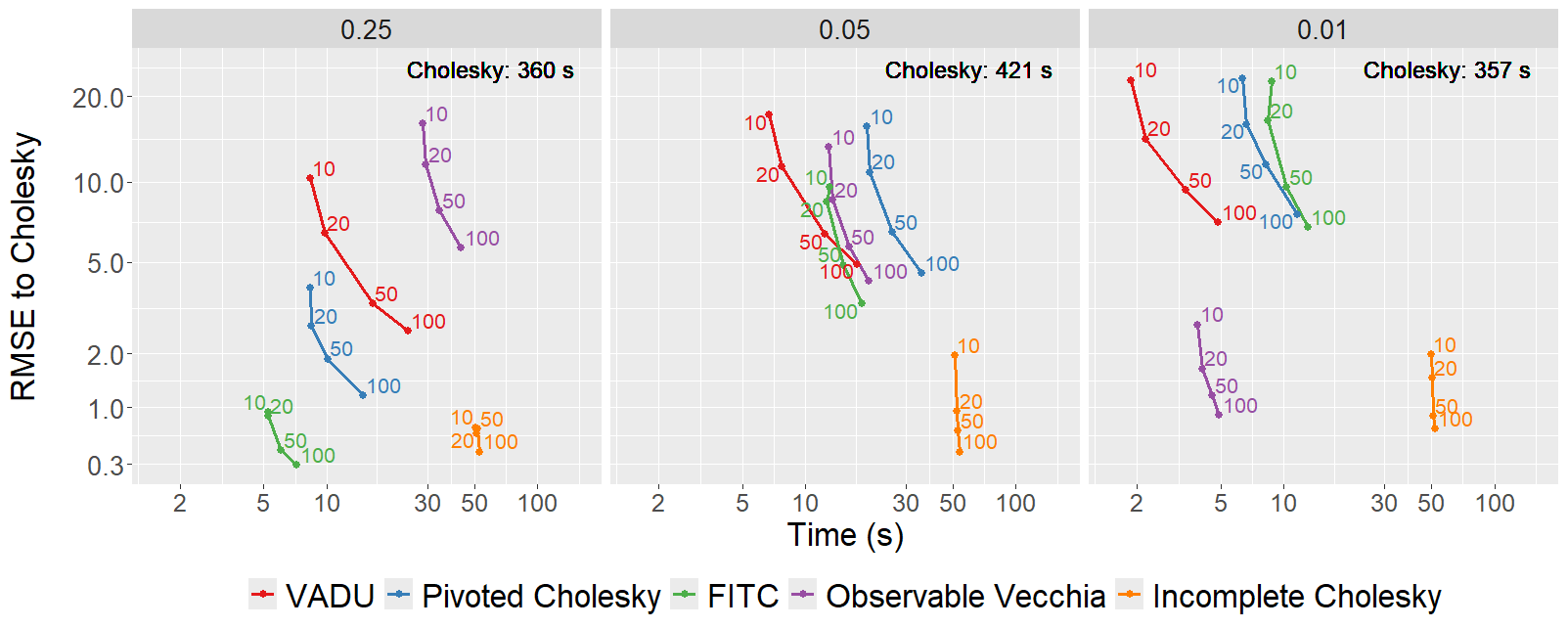}
    \caption{Accuracy-runtime comparison of preconditioners: RMSE between log‑marginal likelihoods computed using iterative methods and those computed using a Cholesky decomposition versus runtime for the VADU, pivoted Cholesky, FITC, observable Vecchia, and zero fill-in incomplete Cholesky preconditioners and varying numbers of sample vectors $\ell$ (annotated in the plot). A binary likelihood, $n = 100'000$ samples, and three range parameters $\rho \in \{0.25, 0.05, 0.01\}$ (from left to right) are used.}
    \label{fig:FITC_vs_PC_0.2}
\end{figure}
In Figure \ref{fig:FITC_vs_PC_0.2}, we show the RMSE between log‑marginal likelihoods computed using iterative methods and those computed using a Cholesky decomposition versus runtimes for the different preconditioners when using varying numbers of sample vectors $\ell$ and three different range parameters \(\rho \in \{0.25, 0.05, 0.01\}\). Note that in relative terms, all differences observed are small. For instance, a log-marginal likelihood difference in $10$ corresponds to a relative error of approximately $10^{-4}$. Overall, the results show that the FITC preconditioner yields a good trade-off between runtime and accuracy. In detail, for the large range ($\rho = 0.25$), the FITC preconditioner is the fastest preconditioner while being more accurate than the VADU, observable Vecchia, and pivoted Cholesky preconditioners. For the intermediate and small ranges, the VADU, pivoted Cholesky, and FITC preconditioners are approximately equally accurate. For the intermediate range ($\rho=0.05$), the VADU preconditioner is slightly faster than the FITC preconditioner, and the pivoted Cholesky preconditioner is slower. For the small range ($\rho=0.01$), all preconditioners, except for the incomplete Cholesky preconditioner, are relatively fast. The incomplete Cholesky preconditioner is highly accurate but very slow for all ranges. This means that a similar level of accuracy can often be achieved with the FITC and VADU preconditioners with faster runtimes when using more sample vectors. The observable Vecchia preconditioner yields a similar level of accuracy as that of the FITC and VADU preconditioners for the intermediate range and is very accurate for the small range. Furthermore, the runtimes of the observable Vecchia preconditioner are comparable to the FITC and VADU preconditioners for these intermediate and small ranges. For the large range, however, the observable Vecchia preconditioner is both slow and inaccurate. In summary, for small ranges, all preconditioners, except for the incomplete Cholesky preconditioner, are fast, and one can obtain computationally efficient and accurate approximations by using a larger number of sample vectors. For large ranges, the FITC preconditioner stands out as being both very accurate and fast.

We also briefly analyze the performance of the FITC preconditioner for varying numbers of inducing points. The marginal likelihood is computed at the true covariance parameters and repeated 100 times with 50 random vectors used in each iteration. The results for the range parameter \(\rho = 0.05\) are displayed in Figure \ref{fig:FITC_vs_PC_005_rank} in Appendix \ref{vecchia_app}. We find that the FITC preconditioner achieves the fastest runtime for \(m = 200\).

\section{Real-world application}\label{sect5}

In the following, we apply our proposed methods to a large spatial satellite data set and compare the full-scale approximation to a modified predictive process, or FITC, approximation, covariance tapering, and two variants of Vecchia approximations. In the first variant, a Vecchia approximation is applied to the latent process, and iterative methods with the FITC preconditioner defined in \eqref{eqfitcp} are used. This approach is motivated by the screening effect, which suggests that for spatial processes, predictions at a location primarily depend on nearby observations. However, perturbations such as i.i.d. measurement noise could weaken this effect \citep{katzfuss2021general, geoga2024scalable}. The second variant, referred to as ``observable" approximation, applies a Vecchia approximation to the covariance matrix including the nugget effect enabling efficient calculations without iterative methods.

The data set comprises observations of daytime land surface temperatures in degrees Celsius acquired on August 20, 2023, by the Terra instrument aboard the MODIS satellite (Level-3 data) with a spatial resolution of 1 kilometer (km) on a $1200\times1200$ km grid, encompassing longitude values from $-95.91153$ to $-91.28381$ and latitude values from 34.29519 to 37.06811. The selection of latitude, longitude ranges, and date was driven primarily by the minimal cloud cover over the region on that specific date. To generate out-of-sample test data, we designated locations covered by clouds on a different day (September 6, 2023) as test locations. Subsequently, we randomly sample $400'000$ data points from the uncovered data set to constitute the training set, while $200'000$ data points were randomly selected from the covered data set to form the test set. This subsampling enables us to conduct Cholesky-based computations within a reasonable time. Figure \ref{fig:Day_Temp}  shows the full and the training data sets, and Figure \ref{fig:four graphs} in Appendix \ref{realwapp} shows the test data. For the sake of reproducibility, both the data and the downloading code are available on GitHub at \url{https://github.com/TimGyger/iterativeFSA}. 

\begin{figure}[htbp]
     \centering
     \begin{subfigure}[b]{0.49\textwidth}
         \centering
         \includegraphics[width=\textwidth]{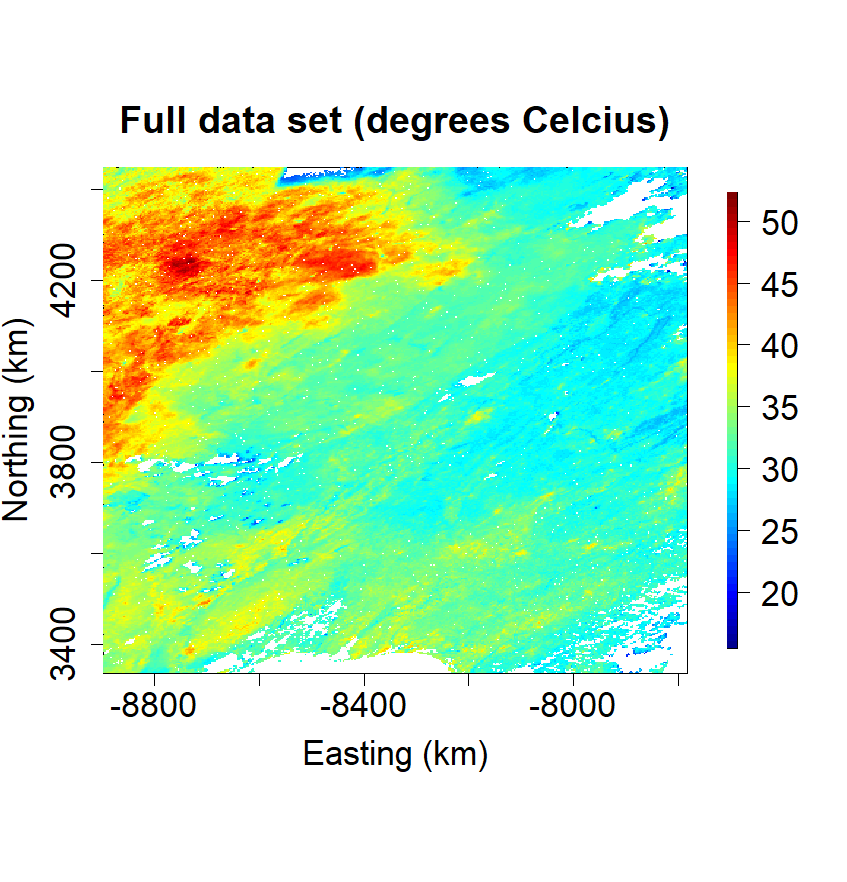}
     \end{subfigure}
     \hfill
     \begin{subfigure}[b]{0.49\textwidth}
         \centering
         \includegraphics[width=\textwidth]{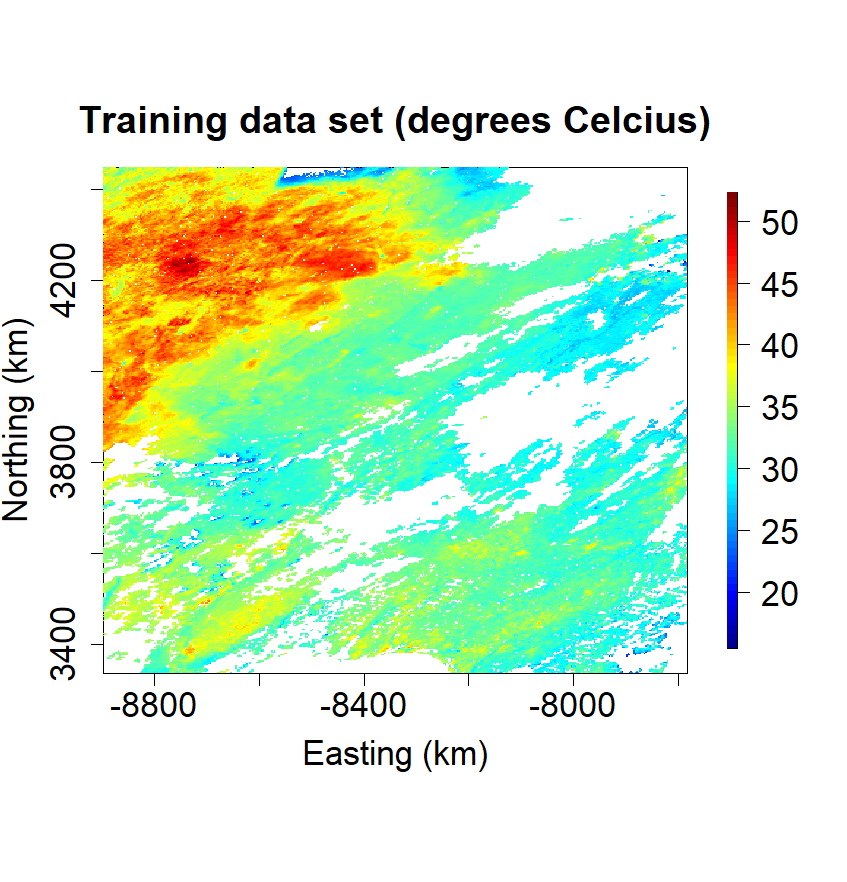}
     \end{subfigure}
     \caption{The full (left) and training (right) satellite data set.}
    \label{fig:Day_Temp}
\end{figure}

For the Vecchia approximations, we use $m_v = 30$ neighbors. Furthermore, we use $m=500$ inducing points determined by the kmeans++ algorithm for both the full-scale and FITC approximations, and we set the taper range to $\gamma = 7500$, resulting in an average of $n_\gamma = 80$ nonzero entries per row for both the full-scale and covariance tapering approximations. The mean function is assumed to be linear in the coordinates, $F(\boldsymbol{X}) = \boldsymbol{X}\boldsymbol{\beta}$, and we use the coordinates as covariates. In our iterative approaches with the FITC preconditioner, we use a convergence tolerance level $\delta = 1$ in the CG algorithm, and we use $\ell=50$ Gaussian sample vectors for the stochastic estimates. Below, we also investigate using both lower and higher convergence tolerance levels, as well as less random vectors, and obtain virtually identical results. Furthermore, for the stochastic estimation of the predictive variance, we use 500 Rademacher sample vectors. For parameter estimation, we use the same optimization method and settings as those in Subsection \ref{subsect:sim_all}. 

\begin{table}[ht!]
\centering
\begin{tabular}{ |p{0.3cm}|p{1.7cm}||p{1.5cm}|p{1.5cm}|p{1.5cm}|p{1.5cm}|p{1.5cm}|p{1.5cm}|}
\hline
 \multicolumn{2}{|c||}{\multirow{2}{*}{}}&\multicolumn{2}{c|}{FSA}& Tapering & FITC&\multicolumn{2}{c|}{Vecchia}\\
 \cline{3-8} 
 \multicolumn{2}{|c||}{}&{\small Iterative}& {\small Cholesky} & {\small Cholesky} & {\small Cholesky} & {\small Latent}&{\small Observ.} \\
 \hhline{|=|=|=|=|=|=|=|=|}
\multirow[c]{8}{*}[0in]{\rotatebox{90}{\textbf{Estimation}}} &
 $\beta_{\text{intercept}}$ & -26.30  & -26.18&  -48.32 & 61.51& -49.57 & -50.27\\
 &$\beta_{\text{east}}$ & -0.0066 & -0.0066 &  -0.0076 & -0.0200 & -0.0080 & -0.0080\\
 &$\beta_{\text{north}}$ & 0.00090 & 0.00090 &  0.00484 & -0.0513 & 0.00438 & 0.00439\\
 &$\sigma^2$ & 0.1428 & 0.1429 & 0.01968 & 1.909 & 0.1279 & 0.1298\\
 &$\sigma^2_1$& 4.093 &4.099 & 4.527 & 111.1 &5.685 & 5.738\\ 
  &$\rho$& 24.13 & 24.16 & 5.494 & 211.1 & 5.044 & 5.089\\
 \cline{2-8} 
  &Time (s)& 5054 & 33212 &  13086 & 12445  & 9733 & 426\\ 
  &Speedup & 6.6  & & 2.5  & 2.7  & 3.4 & 78.0\\
 \hhline{|=|=|=|=|=|=|=|=|}
 \multirow[c]{5}{*}[0in]{\rotatebox{90}{\textbf{Prediction}}} &
 RMSE &  1.4177 &1.4175 & 2.3349 & 1.8171 & 1.9182 & 1.9155\\
 \cline{2-8} 
  &LS &  1.7244 &1.7245 & 2.1735 & 1.9444 & 1.8749 & 1.8737\\
 \cline{2-8} 
  &CRPS &  0.75451 &0.75449 &  1.2723 & 0.94036 & 0.98518 & 0.98351\\
 \cline{2-8} 
  &
 Time (s) &  587 & 5513 &  199 & 68 & 1631 & 485\\
  &Speedup & 9.4  &  & 27.8 & 81.6 &  3.4 & 11.4\\
 \hhline{|=|=|=|=|=|=|=|=|}
 \multicolumn{2}{|c||}{ Total time (s)} &  5641 & 38725 &  13285 &12512 & 11364 & 911\\ 
  \multicolumn{2}{|c||}{Speedup} & 6.9  &  & 2.9 &3.1 & 3.4 & 42.5\\
 \hline
\end{tabular}
\caption{Real-world satellite data results.}\label{TableCG2} 
\end{table}
In Table \ref{TableCG2}, we report the estimated parameters, the prediction accuracy measures, and computation times. We find that for the full-scale approximation, the iterative methods result in almost identical parameter estimates and prediction accuracy measures while being approximately one order of magnitude faster compared to Cholesky-based computations. As expected, the FITC, covariance tapering, and Vecchia approximations are faster than the full-scale approximation for the Cholesky-based computations. However, these approximations have considerably worse prediction accuracy than the full-scale approximation. This is most likely due to the inability of these two approximations to capture both the short-scale and large-scale dependencies present in the data. Moreover, the results show that both Vecchia variants produce nearly identical parameter estimates and prediction accuracy measures, but the variant where the approximation is applied to the observable process is substantially faster. That is, although the noise variance could potentially weaken the screening effect, we do not find evidence that this leads to a less accurate approximation on this data set.

\begin{figure}[ht!]
     \centering
     \begin{subfigure}[b]{0.49\textwidth}
         \centering
         \includegraphics[width=\textwidth]{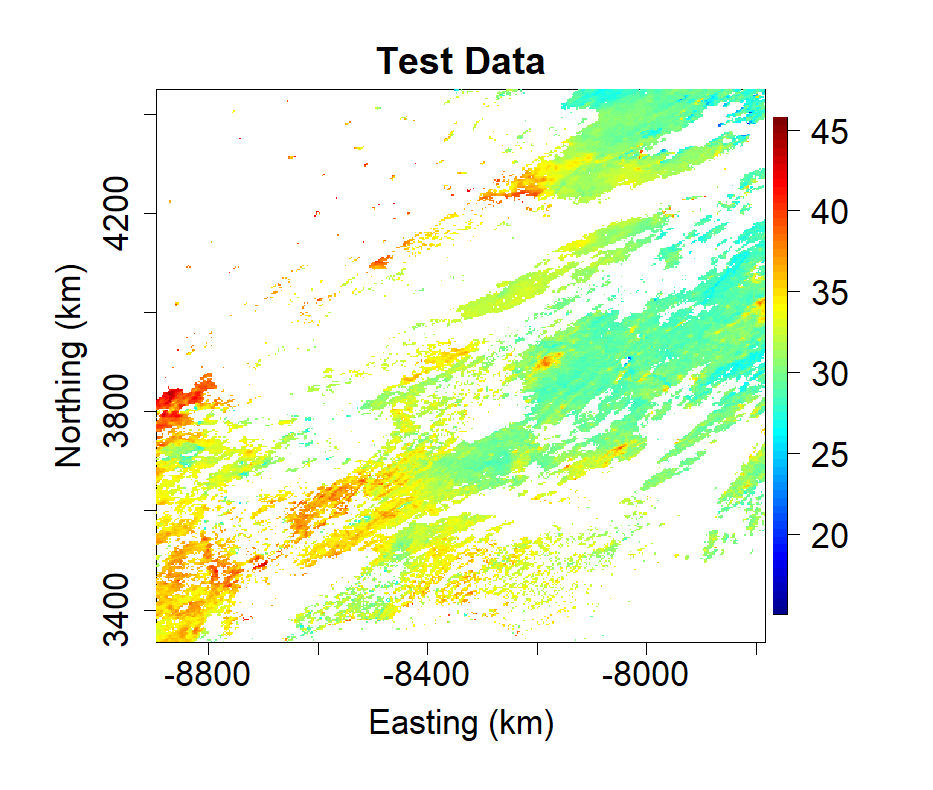}
     \end{subfigure}
     \hfill
     \begin{subfigure}[b]{0.49\textwidth}
         \centering
         \includegraphics[width=\textwidth]{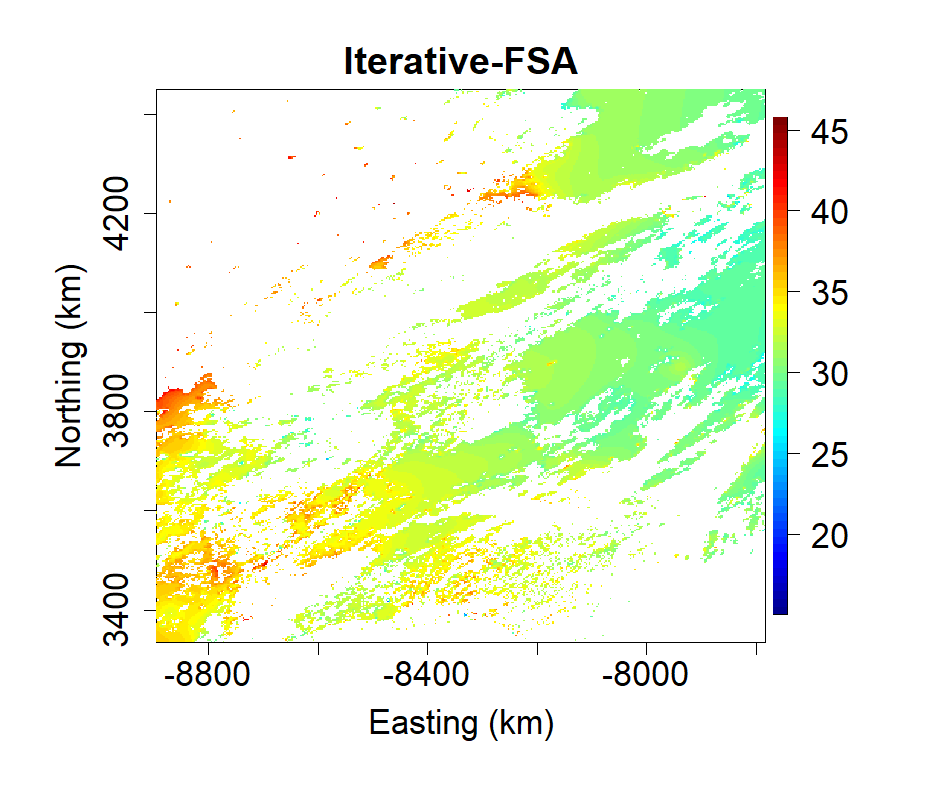}
     \end{subfigure}
     \hfill
     \begin{subfigure}[b]{0.49\textwidth}
         \centering
         \includegraphics[width=\textwidth]{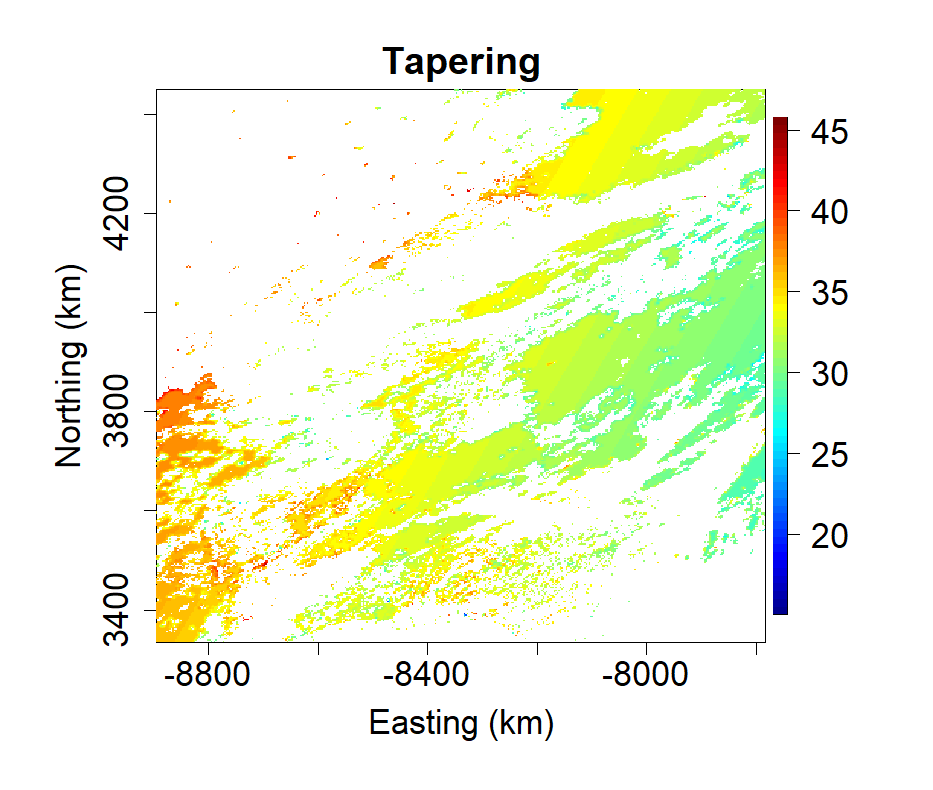}
     \end{subfigure}
     \hfill
     \begin{subfigure}[b]{0.49\textwidth}
         \centering
         \includegraphics[width=\textwidth]{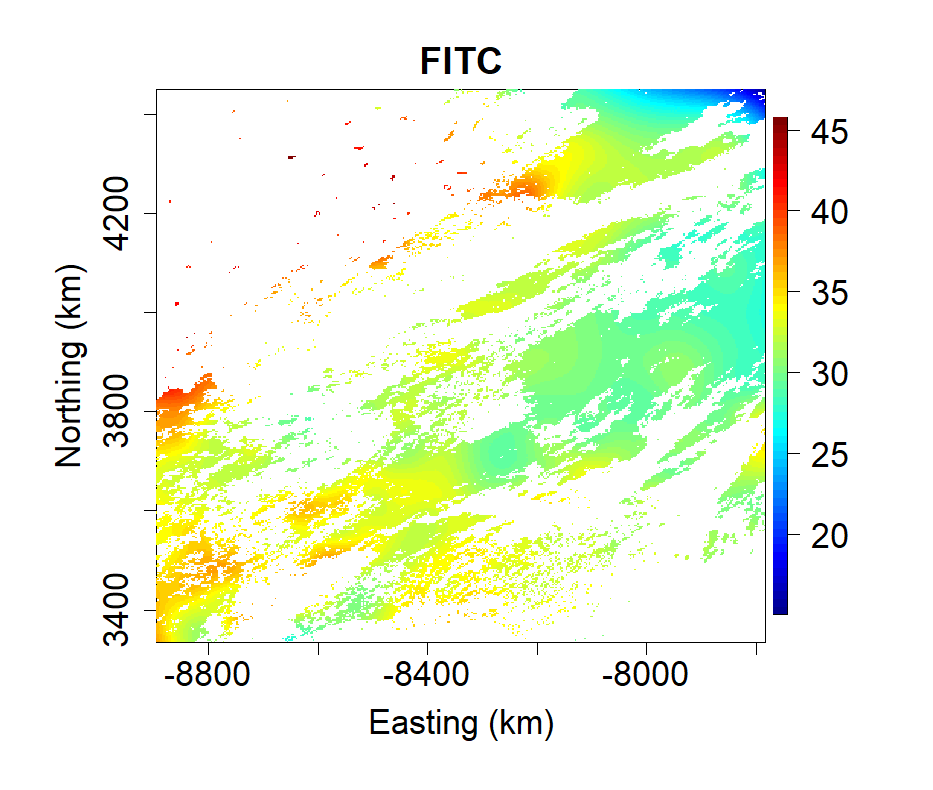}
     \end{subfigure}
     \hfill
     \begin{subfigure}[b]{0.49\textwidth}
         \centering
         \includegraphics[width=\textwidth]{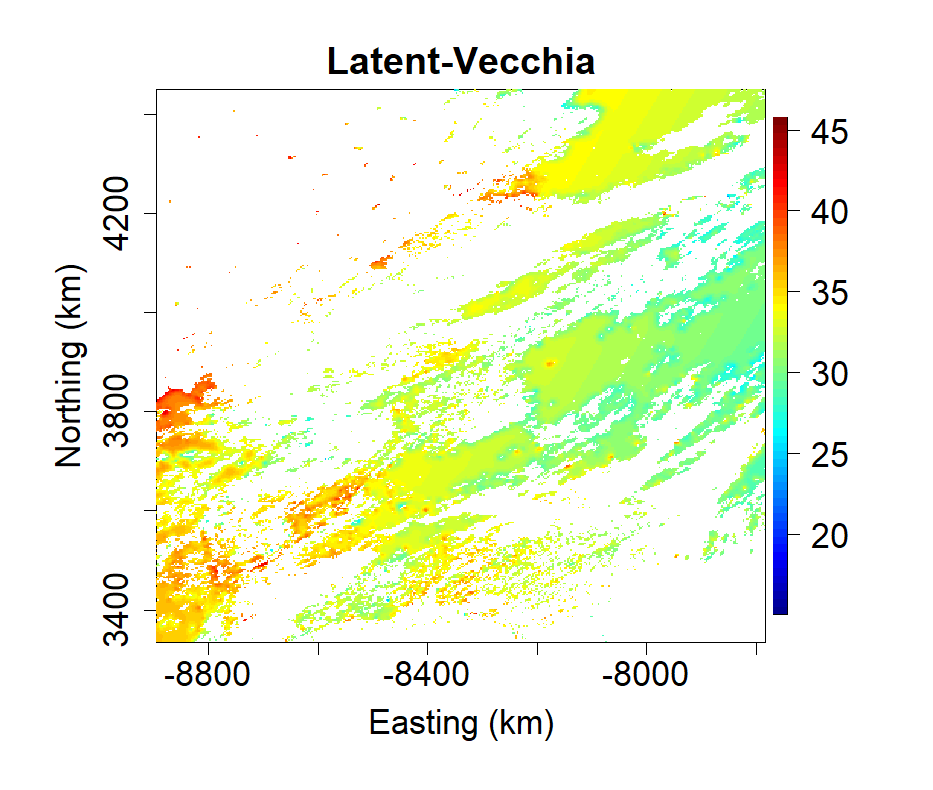}
     \end{subfigure}
     \hfill
     \begin{subfigure}[b]{0.49\textwidth}
         \centering
         \includegraphics[width=\textwidth]{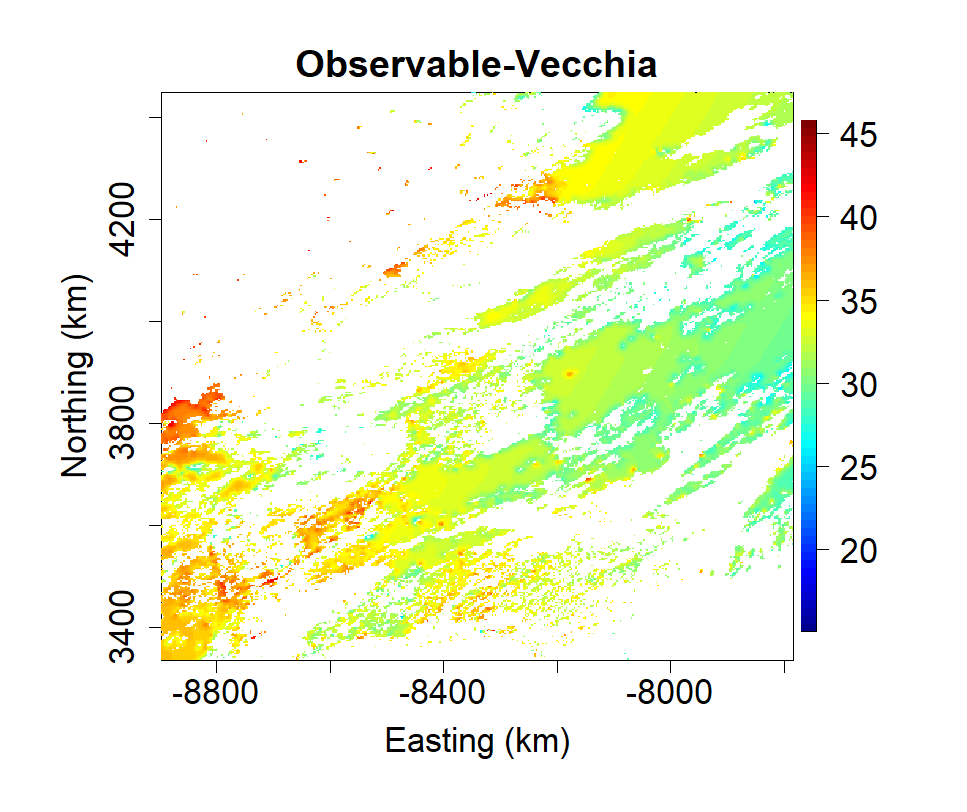}
     \end{subfigure}
     \caption{Test data and predictive means in the FSA (iterative), tapering, FITC, and Vecchia (latent and observable) framework.}
        \label{fig:four graphs}
\end{figure}

\begin{figure}[ht!]
     \centering
     \begin{subfigure}[b]{0.49\textwidth}
         \centering
         \includegraphics[width=\textwidth]{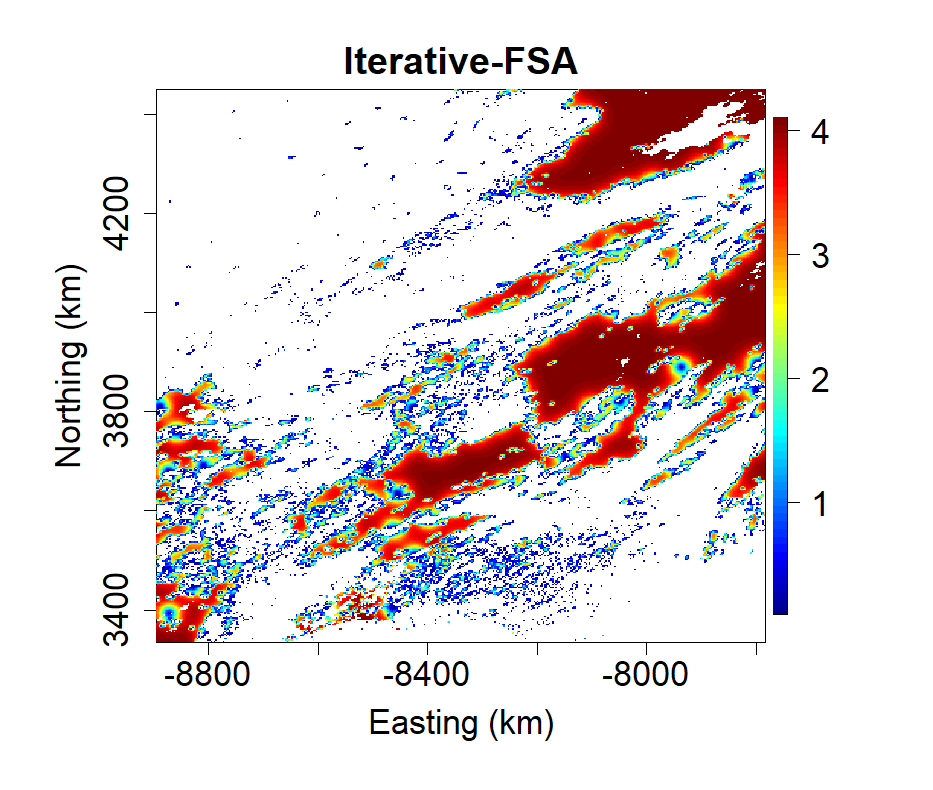}
     \end{subfigure}
     \hfill
     \begin{subfigure}[b]{0.49\textwidth}
         \centering
         \includegraphics[width=\textwidth]{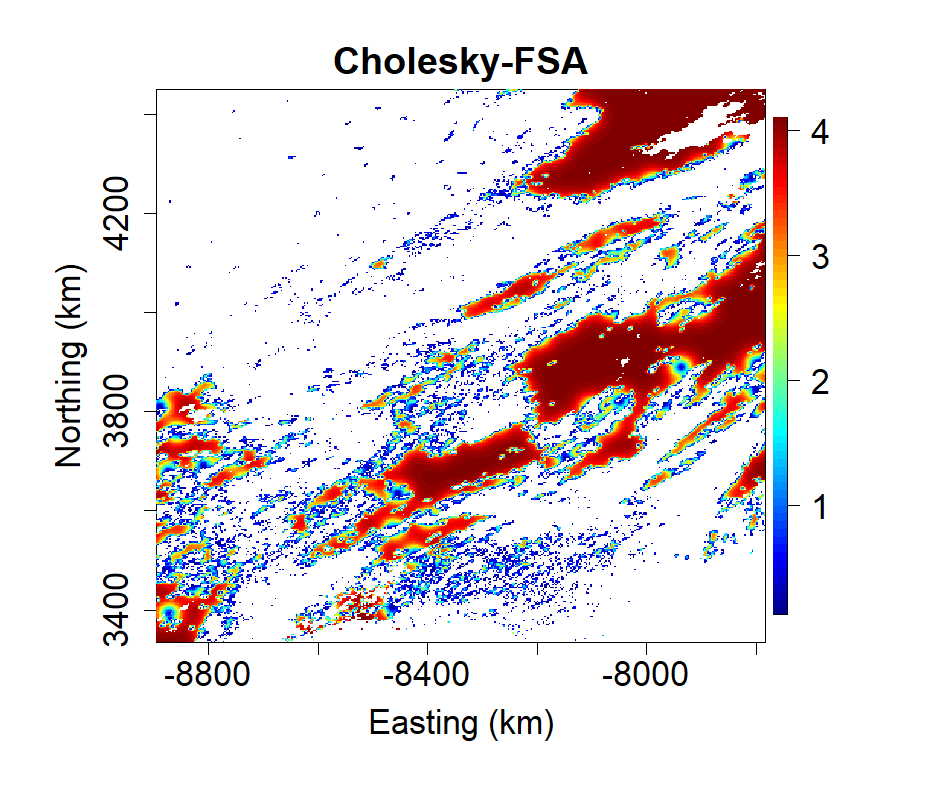}
     \end{subfigure}
     \hfill
     \begin{subfigure}[b]{0.49\textwidth}
         \centering
         \includegraphics[width=\textwidth]{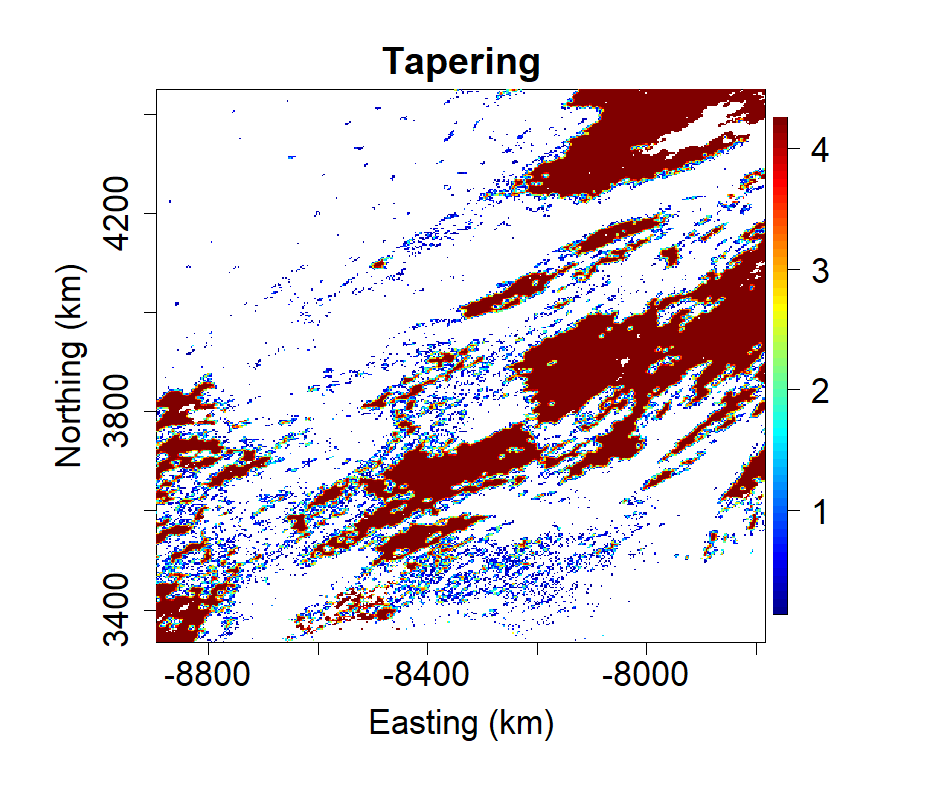}
     \end{subfigure}
     \hfill
     \begin{subfigure}[b]{0.49\textwidth}
         \centering
         \includegraphics[width=\textwidth]{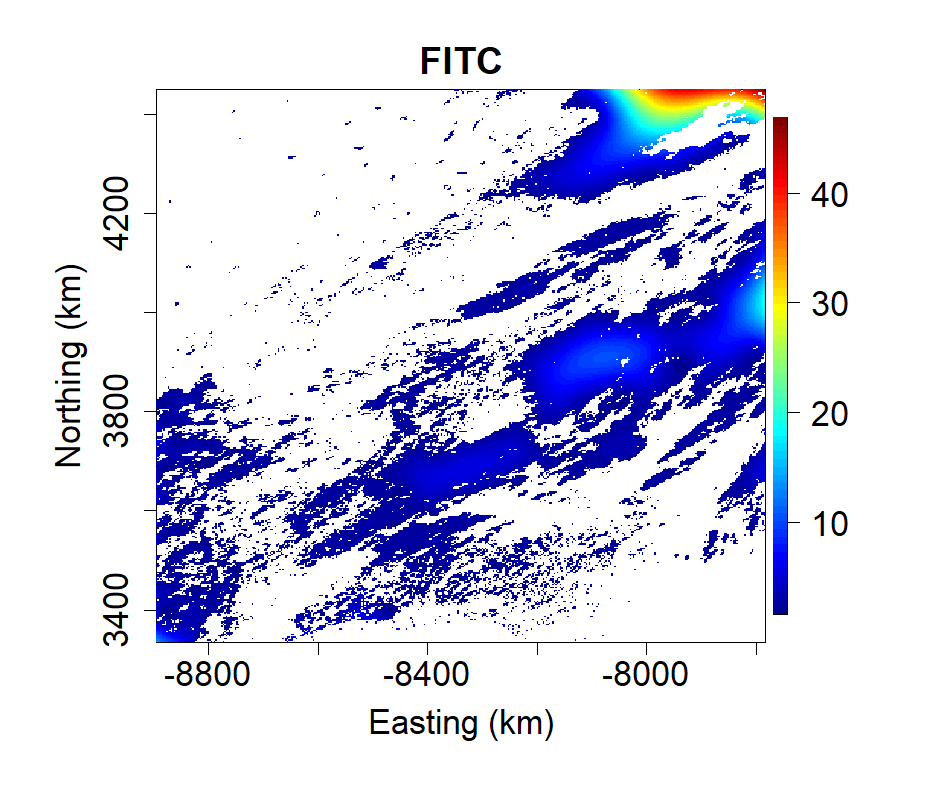}
     \end{subfigure}
     \hfill
     \begin{subfigure}[b]{0.49\textwidth}
         \centering
         \includegraphics[width=\textwidth]{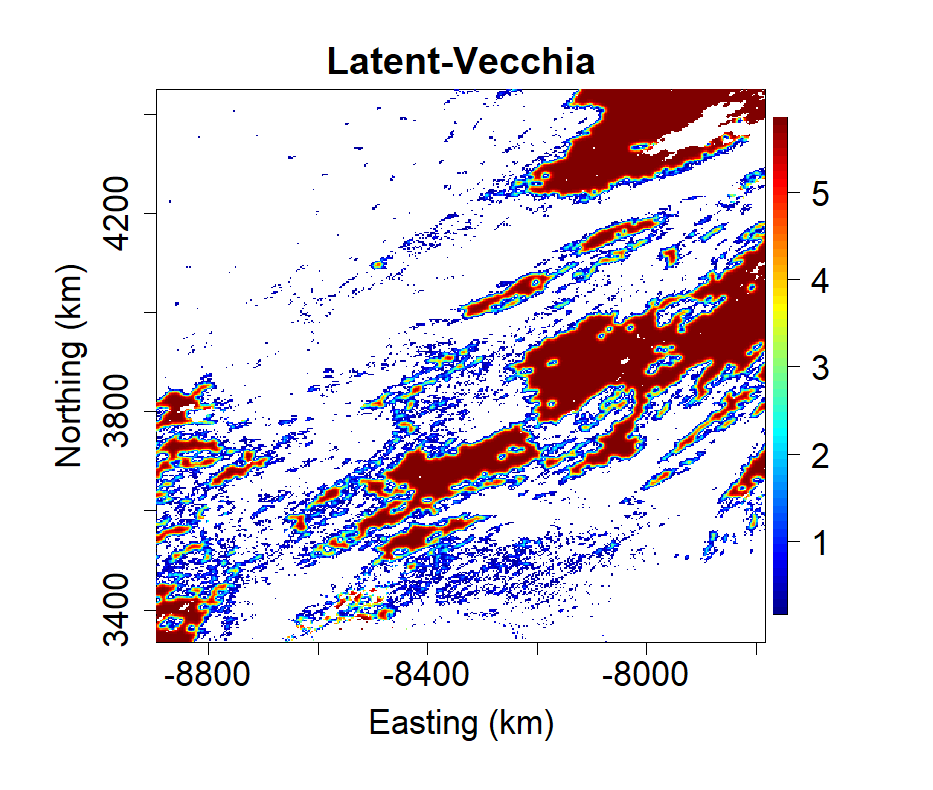}
     \end{subfigure}
     \hfill
     \begin{subfigure}[b]{0.49\textwidth}
         \centering
         \includegraphics[width=\textwidth]{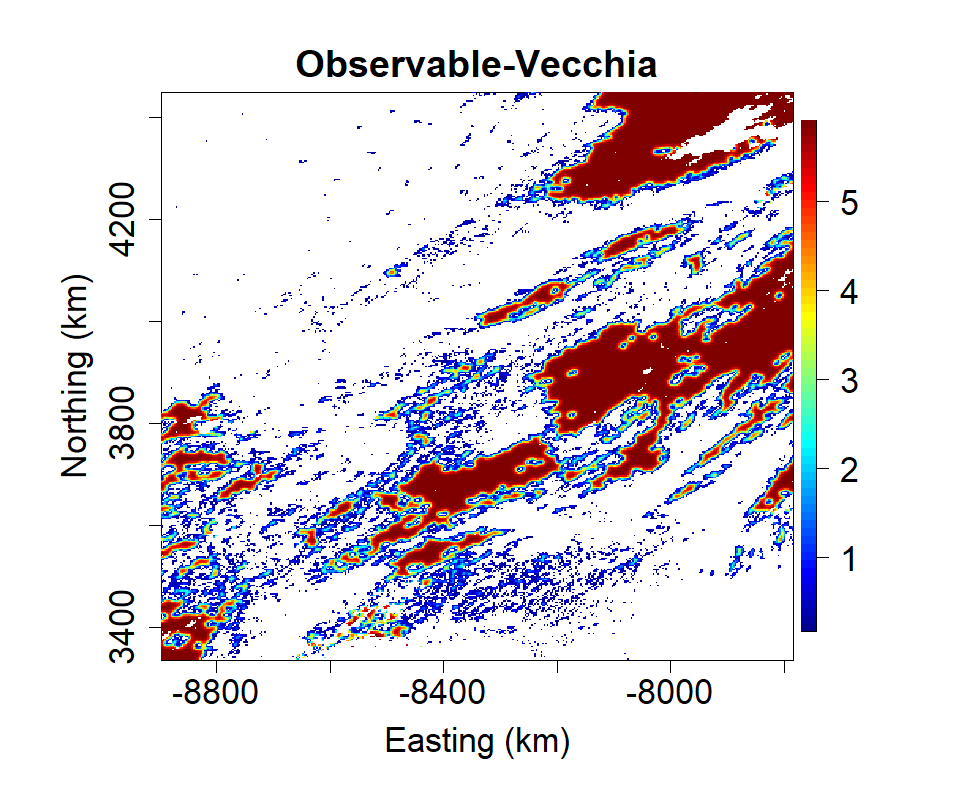}
     \end{subfigure}\caption{Predictive variances in the FSA (iterative and Cholesky-based), tapering, FITC, and Vecchia (latent and observable) framework.}
        \label{fig:four graphs2}
\end{figure}

Figures \ref{fig:four graphs} and \ref{fig:four graphs2} illustrate the predictive means and variances of different covariance matrix approximation approaches and methodologies. These maps visually confirm the results from Table \ref{TableCG2}. We see that the FITC approximation's predictive mean is smooth and captures the large-scale spatial structure but misses the rough short-scale dependence. The tapering and Vecchia approximation are both unable to make good predictions at locations where there is no observed data nearby. Overall, the FSA predicts the spatial structure more accurately than tapering, FITC, and the Vecchia approximation.

Additionally, Table \ref{TableCG} in Appendix \ref{realwapp} presents the results for various convergence tolerance levels and sample vectors. We find that, even when employing a higher convergence tolerance and fewer sample vectors, we lose almost no accuracy in parameter estimates and predictive distributions with an increased speedup. 

\section{Conclusion}
We propose to use iterative methods for doing inference with full-scale approximations. In particular, we introduce a novel and efficient preconditioner, called FITC preconditioner, which leverages the structure of the full-scale approximation. In our theoretical analysis, we demonstrate that the application of the proposed preconditioner considerably mitigates the sensitivity of the CG method's convergence rate with respect to the FSA parameters (number of inducing points $m$ and taper range $\gamma$), as well as the eigenvalue structure of the original covariance matrix. In a comprehensive simulation study, we demonstrate the superior efficiency of the FITC preconditioner compared to the state-of-the-art pivoted Cholesky preconditioner, when both using an FSA and a Vecchia approximation. Additionally, we highlight its utility as a control variate reducing variance in the stochastic approximation of the gradient of the negative log-likelihood. Furthermore, we introduce a novel simulation-based method for calculating predictive variances and illustrate its superiority in accuracy and speed over the state-of-the-art Lanczos method. In both simulated and real-world data experiments, we find that our proposed methodology achieves the same level of accuracy as Cholesky-based computations with a substantial reduction in computational time of approximately one order of magnitude. 

Future work could involve implementing the methods presented in this paper on GPUs. Additionally, an extension of the FSA framework to the multiresolution approximation \citep{katzfuss2017multi} could be valuable, offering a more flexible and efficient approach for hierarchical spatial modeling. Future research can also generalize our convergence theory in Theorems \ref{thm1} and \ref{th2} to other inducing points selection approaches such as the kmeans++ and cover tree algorithms. 


\section*{Acknowledgments}
This research was partially supported by the Swiss Innovation Agency - Innosuisse (grant number `57667.1 IP-ICT').

\clearpage
\begin{appendices}

\section{Preconditioned conjugate gradient algorithm}\label{appendix:CGalgo} Preconditioned conjugate gradient algorithm with Lanczos tridiagonal matrix:
\begin{algorithm}[H]
\caption{Preconditioned conjugate gradient algorithm}
    \begin{algorithmic}[1]
    \small
        \Require Matrix $\mathbf{A}$, preconditioner matrix $\mathbf{P}$, vector $\mathbf{b}$, L, tolerance
        \Ensure $\mathbf{u}_{l+1} \approx \mathbf{A}^{-1}\mathbf{b}$, tridiagonal matrix $\tilde{\mathbf{T}}$
        \State{early-stopping $\gets$ false}
        \State{$\alpha_0 \gets 1$}
        \State{$\beta_0 \gets 0$}
        \State{$\mathbf{u}_0 \gets \boldsymbol{0}$}
        \State{$\mathbf{r}_0 \gets \mathbf{b}$}
        \State{$\mathbf{z}_0 \gets \mathbf{P}^{-1}\mathbf{r}_0$}
        \State{$\mathbf{h}_0 \gets \mathbf{z}_0$}
        \For{$l \gets 0$ to $L$}
            \State{$\mathbf{v}_l \gets \mathbf{A}\mathbf{h}_l$}
            \State{$\alpha_{l+1} \gets \frac{\mathbf{r}_l^\mathrm{T}\mathbf{z}_l}{\mathbf{h}_l^\mathrm{T}\mathbf{v}_l}$}
            \State{$\mathbf{u}_{l+1} \gets \mathbf{u}_l + \alpha_{l+1} \mathbf{h}_l$}
            \State{$\mathbf{r}_{l+1} \gets \mathbf{r}_l - \alpha_{l+1} \mathbf{v}_l$}
            \If{$||\mathbf{r}_{l+1}||_2 <$ tolerance}
                \State{early-stopping $\gets$ true}
            \EndIf
            \State{$\mathbf{z}_{l+1} \gets \mathbf{P}^{-1}\mathbf{r}_{l+1}$}
            \State{$\beta_{l+1} \gets \frac{\mathbf{r}_{l+1}^\mathrm{T}\mathbf{z}_{l+1}}{\mathbf{r}_l^\mathrm{T}\mathbf{z}_l}$}
            \State{$\mathbf{h}_{l+1} \gets \mathbf{z}_{l+1} + \beta_{l+1} \mathbf{h}_l$}
            \State{$\tilde{\mathbf{T}}_{l+1,l+1} \gets \frac{1}{\alpha_{l+1}} + \frac{\beta_l}{\alpha_{l}}$}
            \If{$l > 0$}
                \State{$\tilde{\mathbf{T}}_{l,l+1}, \tilde{\mathbf{T}}_{l+1,l} \gets \frac{\sqrt{\beta_l}}{\alpha_{l}}$}
            \EndIf
            \If{early-stopping}
                \State{return $\mathbf{u}_{l+1}, \tilde{\mathbf{T}}$}
            \EndIf
        \EndFor
    \end{algorithmic}
\end{algorithm}

\section{Predictive (co-)variances within full-scale approximation}\label{AppVar}

The predictive covariance matrix is given by
\begin{align*}
\mathbf{\Sigma}^p_\dagger&={\mathbf{\Sigma}}_{{n_p}}^\dagger + \sigma^2 \boldsymbol{I}_{n_p}-({\mathbf{\Sigma}}_{n{n_p}}^\dagger)^{\mathrm{T}}\Tilde{\mathbf{\Sigma}}^{-1}_\dagger{\mathbf{\Sigma}}_{n{n_p}}^\dagger\\
&= {\mathbf{\Sigma}}_{{n_p}}^\dagger + \sigma^2 \boldsymbol{I}_{n_p}-({\mathbf{\Sigma}}_{n{n_p}}^\mathrm{s})^\mathrm{T}\Tilde{\mathbf{\Sigma}}_{\mathrm{s}}^{-1}{\mathbf{\Sigma}}_{n{n_p}}^\mathrm{s}\\
&\quad-({\mathbf{\Sigma}}_{mn}^\mathrm{T}{\mathbf{\Sigma}}_{m}^{-1}{\mathbf{\Sigma}}_{m{n_p}})^\mathrm{T}\Tilde{\mathbf{\Sigma}}_{\mathrm{s}}^{-1}{\mathbf{\Sigma}}_{mn}^\mathrm{T}{\mathbf{\Sigma}}_{m}^{-1}{\mathbf{\Sigma}}_{m{n_p}}\\
&\quad-({\mathbf{\Sigma}}_{n{n_p}}^\mathrm{s})^\mathrm{T}\Tilde{\mathbf{\Sigma}}_{\mathrm{s}}^{-1}{\mathbf{\Sigma}}_{mn}^\mathrm{T}{\mathbf{\Sigma}}_{m}^{-1}{\mathbf{\Sigma}}_{m{n_p}}\\
&\quad - \big(({\mathbf{\Sigma}}_{n{n_p}}^\mathrm{s})^\mathrm{T}\Tilde{\mathbf{\Sigma}}_{\mathrm{s}}^{-1}{\mathbf{\Sigma}}_{mn}^\mathrm{T}{\mathbf{\Sigma}}_{m}^{-1}{\mathbf{\Sigma}}_{m{n_p}}\big)^\mathrm{T}\\
&\quad+({\mathbf{\Sigma}}_{n{n_p}}^\mathrm{s})^\mathrm{T}\Tilde{\mathbf{\Sigma}}_{\mathrm{s}}^{-1} \mathbf{\Sigma}_{mn}^{\mathrm{T}}\boldsymbol{M}^{-1} \mathbf{\Sigma}_{mn} \Tilde{\mathbf{\Sigma}}_{\mathrm{s}}^{-1}{\mathbf{\Sigma}}_{n{n_p}}^\mathrm{s}\\
&\quad +({\mathbf{\Sigma}}_{mn}^\mathrm{T}{\mathbf{\Sigma}}_{m}^{-1}{\mathbf{\Sigma}}_{m{n_p}})^\mathrm{T}\Tilde{\mathbf{\Sigma}}_{\mathrm{s}}^{-1} \mathbf{\Sigma}_{mn}^{\mathrm{T}}\boldsymbol{M}^{-1}\mathbf{\Sigma}_{mn} \Tilde{\mathbf{\Sigma}}_{\mathrm{s}}^{-1}{\mathbf{\Sigma}}_{mn}^\mathrm{T}{\mathbf{\Sigma}}_{m}^{-1}{\mathbf{\Sigma}}_{m{n_p}}\\
&\quad + ({\mathbf{\Sigma}}_{n{n_p}}^\mathrm{s})^\mathrm{T}\Tilde{\mathbf{\Sigma}}_{\mathrm{s}}^{-1} \mathbf{\Sigma}_{mn}^{\mathrm{T}}\boldsymbol{M}^{-1} \mathbf{\Sigma}_{mn} \Tilde{\mathbf{\Sigma}}_{\mathrm{s}}^{-1}{\mathbf{\Sigma}}_{mn}^\mathrm{T}{\mathbf{\Sigma}}_{m}^{-1}{\mathbf{\Sigma}}_{m{n_p}}\\
&\quad + \big(({\mathbf{\Sigma}}_{n{n_p}}^\mathrm{s})^\mathrm{T}\Tilde{\mathbf{\Sigma}}_{\mathrm{s}}^{-1} \mathbf{\Sigma}_{mn}^{\mathrm{T}}\boldsymbol{M}^{-1} \mathbf{\Sigma}_{mn} \Tilde{\mathbf{\Sigma}}_{\mathrm{s}}^{-1}{\mathbf{\Sigma}}_{mn}^\mathrm{T}{\mathbf{\Sigma}}_{m}^{-1}{\mathbf{\Sigma}}_{m{n_p}}\big)^\mathrm{T}.
\end{align*}
Presuming the Cholesky factors of $\Tilde{\mathbf{\Sigma}}_{\mathrm{s}}$ and $\boldsymbol{M}=\mathbf{\Sigma}_{m}+\mathbf{\Sigma}_{mn} \Tilde{\mathbf{\Sigma}}_{\mathrm{s}}^{-1} \mathbf{\Sigma}_{mn}^{\mathrm{T}}$ are precomputed during training, the
costs for computing the single terms are
\begin{enumerate}
    \item $({\mathbf{\Sigma}}_{n{n_p}}^\mathrm{s})^\mathrm{T}\Tilde{\mathbf{\Sigma}}_{\mathrm{s}}^{-1}{\mathbf{\Sigma}}_{n{n_p}}^\mathrm{s}:\quad \mathcal{O}(n\cdot n_p \cdot n_\gamma + n\cdot n_p^2)$,
    \item $({\mathbf{\Sigma}}_{mn}^\mathrm{T}{\mathbf{\Sigma}}_{m}^{-1}{\mathbf{\Sigma}}_{m{n_p}})^\mathrm{T}\Tilde{\mathbf{\Sigma}}_{\mathrm{s}}^{-1}{\mathbf{\Sigma}}_{mn}^\mathrm{T}{\mathbf{\Sigma}}_{m}^{-1}{\mathbf{\Sigma}}_{m{n_p}}:\quad \mathcal{O}\big(m\cdot n_\gamma\cdot n + m^2\cdot n +m^2\cdot n_p + m\cdot n_p^2\big)$,
    \item $({\mathbf{\Sigma}}_{n{n_p}}^\mathrm{s})^\mathrm{T}\Tilde{\mathbf{\Sigma}}_{\mathrm{s}}^{-1}{\mathbf{\Sigma}}_{mn}^\mathrm{T}{\mathbf{\Sigma}}_{m}^{-1}{\mathbf{\Sigma}}_{m{n_p}}:\quad \mathcal{O}\big(m\cdot n_\gamma\cdot n + m^2\cdot n_p + m\cdot {n}_\gamma^p\cdot n_p + m\cdot n_p^2\big)$,
    \item $({\mathbf{\Sigma}}_{n{n_p}}^\mathrm{s})^\mathrm{T}\Tilde{\mathbf{\Sigma}}_{\mathrm{s}}^{-1} \mathbf{\Sigma}_{mn}^{\mathrm{T}}\boldsymbol{M}^{-1} \mathbf{\Sigma}_{mn} \Tilde{\mathbf{\Sigma}}_{\mathrm{s}}^{-1}{\mathbf{\Sigma}}_{n{n_p}}^\mathrm{s}:\\\mathcal{O}\big(m\cdot n_\gamma\cdot n + m^2\cdot n  +m\cdot {n}_\gamma^p\cdot n_p + m\cdot n_p^2\big)$,
    \item $({\mathbf{\Sigma}}_{mn}^\mathrm{T}{\mathbf{\Sigma}}_{m}^{-1}{\mathbf{\Sigma}}_{m{n_p}})^\mathrm{T}\Tilde{\mathbf{\Sigma}}_{\mathrm{s}}^{-1} \mathbf{\Sigma}_{mn}^{\mathrm{T}}\boldsymbol{M}^{-1} \mathbf{\Sigma}_{mn} \Tilde{\mathbf{\Sigma}}_{\mathrm{s}}^{-1}{\mathbf{\Sigma}}_{mn}^\mathrm{T}{\mathbf{\Sigma}}_{m}^{-1}{\mathbf{\Sigma}}_{m{n_p}}:\\
\mathcal{O}\big(m\cdot n_\gamma\cdot n + m^2\cdot n + m^2\cdot n_p + m\cdot n_p^2\big)$,
\item $({\mathbf{\Sigma}}_{n{n_p}}^\mathrm{s})^\mathrm{T}\Tilde{\mathbf{\Sigma}}_{\mathrm{s}}^{-1} \mathbf{\Sigma}_{mn}^{\mathrm{T}}\boldsymbol{M}^{-1} \mathbf{\Sigma}_{mn} \Tilde{\mathbf{\Sigma}}_{\mathrm{s}}^{-1}{\mathbf{\Sigma}}_{mn}^\mathrm{T}{\mathbf{\Sigma}}_{m}^{-1}{\mathbf{\Sigma}}_{m{n_p}}:\\
\mathcal{O}\big(m\cdot n_\gamma\cdot n + m\cdot {n}_\gamma^p\cdot n_p + m^2\cdot n + m^2\cdot n_p + m\cdot n_p^2\big)$.
\end{enumerate}
Therefore, the computational complexity for computing the predictive covariance is
\begin{align*}
    \mathcal{O}\big(m\cdot n_\gamma\cdot n + m\cdot {n}_\gamma^p\cdot n_p + m^2\cdot n + m^2\cdot n_p + n\cdot n_p \cdot n_\gamma + n\cdot n_p^2\big),
\end{align*}
and for the predictive variances is
\begin{align*}
    \mathcal{O}\big(m\cdot n_\gamma\cdot n + m\cdot {n}_\gamma^p\cdot n_p + m^2\cdot n + m^2\cdot n_p + n\cdot n_p \cdot n_\gamma\big).
\end{align*}
For the case when $n_p > n$, we obtain
\begin{align*}
    \mathcal{O}\big(n_p\cdot(m\cdot {n}_\gamma^p + m^2 + n\cdot n_p)\big) \qquad \text{and} \quad \mathcal{O}\big(n_p\cdot(m\cdot {n}_\gamma^p + m^2 + n \cdot n_\gamma)\big),
\end{align*}
respectively. And for the case when $n_p < n$, we obtain
\begin{align*}
    \mathcal{O}\big(n\cdot(m\cdot {n}_\gamma + m^2 + n_p \cdot n_\gamma + n_p^2)\big) \qquad \text{and} \quad \mathcal{O}\big(n\cdot(m\cdot {n}_\gamma + m^2 + n_p \cdot n_\gamma)\big),
\end{align*}
respectively.

\section{Fisher scoring}\label{Fishapp}

The Fisher information matrix $\boldsymbol{I} \in \mathbb{R}^{q \times q}$ for Fisher scoring is given by
$$
(\boldsymbol{I})_{k l}=\frac{1}{2} \operatorname{Tr}\left(\Tilde{\mathbf{\Sigma}}^{-1}_{\dagger}\frac{\partial \Tilde{\mathbf{\Sigma}}_{\dagger}}{\partial \boldsymbol{\theta}_k}\Tilde{\mathbf{\Sigma}}^{-1}_{\dagger}\frac{\partial \Tilde{\mathbf{\Sigma}}_{\dagger}}{\partial \boldsymbol{\theta}_l}\right), \quad 1 \leq k, l \leq q.
$$
These trace terms can be computed by using STE
\begin{align*}
    \operatorname{Tr}\left(\Tilde{\mathbf{\Sigma}}^{-1}_{\dagger}\frac{\partial \Tilde{\mathbf{\Sigma}}_{\dagger}}{\partial \boldsymbol{\theta}_k}\Tilde{\mathbf{\Sigma}}^{-1}_{\dagger}\frac{\partial \Tilde{\mathbf{\Sigma}}_{\dagger}}{\partial \boldsymbol{\theta}_l}\right)\approx\frac{1}{\ell}\sum_{i=1}^{\ell} \big(\boldsymbol{z}_i^\mathrm{T} \Tilde{\mathbf{\Sigma}}^{-1}_{\dagger}\frac{\partial \Tilde{\mathbf{\Sigma}}_{\dagger}}{\partial \boldsymbol{\theta}_k}\big)\big(\Tilde{\mathbf{\Sigma}}^{-1}_{\dagger}\frac{\partial \Tilde{\mathbf{\Sigma}}_{\dagger}}{\partial \boldsymbol{\theta}_l} \boldsymbol{z}_i\big), \quad 1 \leq k, l \leq q,
\end{align*}
where the linear solves $\Tilde{\mathbf{\Sigma}}^{-1}_{\dagger}\boldsymbol{z}_i$ and $\Tilde{\mathbf{\Sigma}}^{-1}_{\dagger}\frac{\partial \Tilde{\mathbf{\Sigma}}_{\dagger}}{\partial \boldsymbol{\theta}_l} \boldsymbol{z}_i$ can be computed by the CG method or in the Cholesky-based variant using the Sherman-Woodbury-Morrison formula and the Cholesky factor of $\Tilde{\mathbf{\Sigma}}_\mathrm{s}$.

\section{Approximate predictive variances using simulation}\label{radapp}

Note that we use Rademacher random vectors with entries $\pm 1$ because the stochastic approximation of the diagonal of a matrix $\mathbf{A}\in\mathbb{R}^{n\times n}$ with Gaussian random vectors $\boldsymbol{z}_i$ is given by
\begin{align*}
    \text{diag}(\mathbf{A})\approx \Big[\sum_{i=1}^\ell \boldsymbol{z}_i \circ \mathbf{A} \boldsymbol{z}_i\Big] \oslash \Big[\sum_{i=1}^\ell \boldsymbol{z}_i \circ \boldsymbol{z}_i\Big],
\end{align*}
where the operator $\oslash$ refers to the elementwise division (Hadamard division).

Therefore, the normalization of a single sample $\boldsymbol{z}_i \circ \mathbf{A} \boldsymbol{z}_i$ by all sample vectors $\boldsymbol{z}_i$ renders it impractical to compute the optimal scaling factor $\hat{\boldsymbol{c}}_\text{opt}$ for applying variance reduction as introduced in the following subsection.

\begin{proof}[Proof of Proposition \ref{PropPredVar}]\label{proofalgo}
By standard results, the only non-deterministic term \newline $1/\ell\sum_{i=1}^\ell \boldsymbol{z}_i^{(1)} \circ ({\mathbf{\Sigma}}_{n{n_p}}^\mathrm{s})^\mathrm{T}\Tilde{\mathbf{\Sigma}}_{\mathrm{s}}^{-1}{\mathbf{\Sigma}}_{n{n_p}}^\mathrm{s}\boldsymbol{z}_i^{(1)}$ in Algorithm \ref{alg:pred_var} is an unbiased and consistent estimate of $\text{diag}\big(({\mathbf{\Sigma}}_{n{n_p}}^\mathrm{s})^\mathrm{T}\Tilde{\mathbf{\Sigma}}_{\mathrm{s}}^{-1}{\mathbf{\Sigma}}_{n{n_p}}^\mathrm{s}\big)$, and the claim in Proposition \ref{PropPredVar} thus follows.
\end{proof}

\section{Preconditioner as control variate}\label{contvarapp}
\begin{align}\label{Troptc}
\begin{split}
    \Tr \Big(\Tilde{\mathbf{\Sigma}}^{-1}_{\dagger}\frac{\partial \Tilde{\mathbf{\Sigma}}_{\dagger}}{\partial \boldsymbol{\theta}}\Big) &\approx \frac{1}{\ell} \sum_{i=1}^{\ell} \Big(\big(\boldsymbol{z}_i^\mathrm{T}\Tilde{\mathbf{\Sigma}}^{-1}_{\dagger}\big)\big(\frac{\partial\Tilde{\mathbf{\Sigma}}_{\dagger}}{\partial \boldsymbol{\theta}}\boldsymbol{P}^{-1} \boldsymbol{z}_i\big)-\hat{c}_{opt}\cdot\big(\boldsymbol{z}_i^\mathrm{T}\boldsymbol{P}^{-1}\big)\big(\frac{\partial\boldsymbol{P}}{\partial \boldsymbol{\theta}}\boldsymbol{P}^{-1} \boldsymbol{z}_i\big)\Big)\\
    &\quad + \hat{c}_{opt}\cdot\Tr \Big(\boldsymbol{P}^{-1}\frac{\partial \boldsymbol{P}}{\partial \boldsymbol{\theta}}\Big),
\end{split}
\end{align}
where 
\begin{align*}
    \hat{c}_{opt} = \frac{\sum\limits_{i=1}^\ell\big((\boldsymbol{z}_i^\mathrm{T}\Tilde{\mathbf{\Sigma}}^{-1}_{\dagger})(\frac{\partial\Tilde{\mathbf{\Sigma}}_{\dagger}}{\partial \boldsymbol{\theta}}\boldsymbol{P}^{-1} \boldsymbol{z}_i)-\Tilde{T}_\ell\big) (\boldsymbol{z}_i^\mathrm{T}\boldsymbol{P}^{-1})(\frac{\partial\boldsymbol{P}}{\partial \boldsymbol{\theta}}\boldsymbol{P}^{-1} \boldsymbol{z}_i)}{\sum\limits_{i=1}^\ell \big((\boldsymbol{z}_i^\mathrm{T}\boldsymbol{P}^{-1})(\frac{\partial\boldsymbol{P}}{\partial \boldsymbol{\theta}}\boldsymbol{P}^{-1} \boldsymbol{z}_i)\big)^2}.
\end{align*}

\begin{align*}
\text{diag}(({\mathbf{\Sigma}}_{n{n_p}}^\mathrm{s})^\mathrm{T}\Tilde{\mathbf{\Sigma}}_{\mathrm{s}}^{-1}{\mathbf{\Sigma}}_{n{n_p}}^\mathrm{s}) \approx &\frac{1}{\ell}\sum_{i=1}^\ell \Big(\boldsymbol{z}_i \circ \big(({\mathbf{\Sigma}}_{n{n_p}}^\mathrm{s})^\mathrm{T}\Tilde{\mathbf{\Sigma}}_{\mathrm{s}}^{-1}{\mathbf{\Sigma}}_{n{n_p}}^\mathrm{s}\boldsymbol{z}_i\big)\\
&-\hat{\boldsymbol{c}}_\text{opt}\circ\boldsymbol{z}_i \circ\big(({\mathbf{\Sigma}}_{n{n_p}}^\mathrm{s})^\mathrm{T}\boldsymbol{P}_{\mathrm{s}}^{-1}{\mathbf{\Sigma}}_{n{n_p}}^\mathrm{s}\boldsymbol{z}_i\big)\Big)\\
&+\hat{\boldsymbol{c}}_\text{opt}\circ\text{diag}\big(({\mathbf{\Sigma}}_{n{n_p}}^\mathrm{s})^\mathrm{T}\boldsymbol{P}_{\mathrm{s}}^{-1}{\mathbf{\Sigma}}_{n{n_p}}^\mathrm{s}\big),
\end{align*}
where 
\begin{align*}
 \hat{\boldsymbol{c}}_\text{opt} = &\Big(\sum\limits_{i=1}^\ell\big(\boldsymbol{z}_i \circ ({\mathbf{\Sigma}}_{n{n_p}}^\mathrm{s})^\mathrm{T}\Tilde{\mathbf{\Sigma}}_{\mathrm{s}}^{-1}{\mathbf{\Sigma}}_{n{n_p}}^\mathrm{s}\boldsymbol{z}_i-{\boldsymbol{D}}_\ell\big)\big(\boldsymbol{z}_i \circ ({\mathbf{\Sigma}}_{n{n_p}}^\mathrm{s})^\mathrm{T}\boldsymbol{P}_{\mathrm{s}}^{-1}{\mathbf{\Sigma}}_{n{n_p}}^\mathrm{s}\boldsymbol{z}_i\big)\Big)\\
 &\oslash\Big(\sum\limits_{i=1}^\ell\big(\boldsymbol{z}_i \circ ({\mathbf{\Sigma}}_{n{n_p}}^\mathrm{s})^\mathrm{T}\boldsymbol{P}_{\mathrm{s}}^{-1}{\mathbf{\Sigma}}_{n{n_p}}^\mathrm{s}\boldsymbol{z}_i\big)^2\Big).   
\end{align*}

\section{Runtime of the FITC preconditioner}\label{AppRT}

\textbf{Time complexity of computing $\log \det\big(\hat{\boldsymbol{P}}\big)$, $\Tr\big(\hat{\boldsymbol{P}}^{-1}\frac{\partial\hat{\boldsymbol{P}}}{\partial \boldsymbol{\theta}}\big)$ and $\hat{\boldsymbol{P}}^{-1}\boldsymbol{y}$:}\\
To compute the log determinant of the preconditioner, we make use of Sylvester's determinant theorem \citep{sylvester1851xxxvii}
\begin{align*}
    \det\big(\hat{\boldsymbol{P}}\big) &=\det\big(\boldsymbol{D}_{\mathrm{s}} + \mathbf{\Sigma}_{mn}^{\mathrm{T}}\mathbf{\Sigma}_{m}^{-1}\mathbf{\Sigma}_{mn}\big)\\
    &=\det\big(\mathbf{\Sigma}_{m}+\mathbf{\Sigma}_{mn}\boldsymbol{D}_{\mathrm{s}}^{-1}\mathbf{\Sigma}_{mn}^{\mathrm{T}}\big)\cdot\det\big(\mathbf{\Sigma}_{m}\big)^{-1}\cdot\det\big(\boldsymbol{D}_{\mathrm{s}}\big).
\end{align*}
After calculating $\mathbf{\Sigma}_{m}+\mathbf{\Sigma}_{mn} \boldsymbol{D}_{\mathrm{s}}^{-1}\mathbf{\Sigma}_{mn}^{\mathrm{T}}\in\mathbb{R}^{m\times m}$ in $\mathcal{O}(n\cdot m^2)$, we obtain its Cholesky factor in $\mathcal{O}(m^3)$. As a result, the computation of $\log\det\big(\hat{\boldsymbol{P}}\big)$ requires $\mathcal{O}(n\cdot m^2)$ time. For computing the trace $\Tr\big(\hat{\boldsymbol{P}}^{-1}\frac{\partial\hat{\boldsymbol{P}}}{\partial \boldsymbol{\theta}}\big)$, we can proceed similar.

To compute linear solves with the preconditioner, we make use of the Sherman-Woodbury-Morrison formula \citep{woodbury1950inverting}
\begin{align*}
    \hat{\boldsymbol{P}}^{-1} \boldsymbol{y}&=\big(\boldsymbol{D}_{\mathrm{s}} + \mathbf{\Sigma}_{mn}^{\mathrm{T}}\mathbf{\Sigma}_{m}^{-1}\mathbf{\Sigma}_{mn}\big)^{-1} \boldsymbol{y}\\
    &=\boldsymbol{D}_{\mathrm{s}}^{-1}\boldsymbol{y}-\boldsymbol{D}_{\mathrm{s}}^{-1} \mathbf{\Sigma}_{mn}^{\mathrm{T}}\left(\mathbf{\Sigma}_{m}+\mathbf{\Sigma}_{mn} \boldsymbol{D}_{\mathrm{s}}^{-1}\mathbf{\Sigma}_{mn}^{\mathrm{T}}\right)^{-1} \mathbf{\Sigma}_{mn}\boldsymbol{D}_{\mathrm{s}}^{-1} \boldsymbol{y}.
\end{align*}
After computing $\boldsymbol{D}_{\mathrm{s}}^{-1}\boldsymbol{y}$ in linear time $\mathcal{O}(n)$, the operation $\mathbf{\Sigma}_{mn}\boldsymbol{D}_{\mathrm{s}}^{-1} \boldsymbol{y}$ necessitates a time complexity of $\mathcal{O}(n\cdot m)$. Subsequently, when we use $\mathbf{\Sigma}_{m}+\mathbf{\Sigma}_{mn} \boldsymbol{D}_{\mathrm{s}}^{-1}\mathbf{\Sigma}_{mn}^{\mathrm{T}}\in\mathbb{R}^{m\times m}$ and its Cholesky factor computed already for the log determinant, performing a linear solve using this factor requires $\mathcal{O}(m^2)$ time. As a result, following these precomputations, each solving operation with the preconditioner $\hat{\boldsymbol{P}}$ incurs a total time complexity of $\mathcal{O}(n \cdot m)$. \\
\\
\textbf{Time complexity of drawing samples from $\mathcal{N}(\boldsymbol{0}, \hat{\boldsymbol{P}})$:} \\
We draw samples from $\mathcal{N}(\boldsymbol{0}, \hat{\boldsymbol{P}})$ via the reparameterization trick used by \cite{gardner2018gpytorch} in Appendix C.1. If $\boldsymbol{\epsilon}_1 \in \mathbb{R}^m$ and $\boldsymbol{\epsilon}_2 \in \mathbb{R}^n$ are standard normal vectors, then $\big(\mathbf{\Sigma}_{mn}^{\mathrm{T}}\mathbf{\Sigma}_{m}^{-\frac{1}{2}} \boldsymbol{\epsilon}_1+ \boldsymbol{D}_{\mathrm{s}}^{\frac{1}{2}}\boldsymbol{\epsilon}_2\big)$, is a sample from $\mathcal{N}(\boldsymbol{0},\boldsymbol{D}_{\mathrm{s}} + \mathbf{\Sigma}_{mn}^{\mathrm{T}}\mathbf{\Sigma}_{m}^{-1}\mathbf{\Sigma}_{mn})$, where $\mathbf{\Sigma}_{m}^{\frac{1}{2}}$ is the Cholesky factor of $\mathbf{\Sigma}_{m}$ and $\boldsymbol{D}_{\mathrm{s}}^{\frac{1}{2}}$ is the elementwise square-root of $\boldsymbol{D}_{\mathrm{s}}$. Assuming the Cholesky factor is already computed, sampling from $\mathcal{N}(\boldsymbol{0}, \hat{\boldsymbol{P}})$ requires matrix-vector multiplications for a total of $\mathcal{O}(n \cdot m)$ time.

\section{Convergence analysis of the CG method with and without the FITC preconditioner}\label{AppConv}
\begin{definition}\label{defO}
    \textbf{Big $\mathcal{O}$ in probability notation}
    
    For a set of random variables $X_i$ and a corresponding set of constants $a_i$ both indexed by $i$, the notation
$$
X_i=\mathcal{O}_P\left(a_i\right) 
$$
means that the set of values $X_i / a_i$ is stochastically bounded. That is, for any $\varepsilon>0$, there exists a finite $M>0$ and a finite $N>0$ such that
$$
\mathbb{P}\Bigg(\left|\frac{X_i}{a_i}\right|>M\Bigg)<\varepsilon, \forall i\geq N.
$$
\end{definition}
\begin{lemma}\label{lemnyst}
  \textbf{Error bound for Nyström low-rank approximation}
  
  Under Assumptions~\ref{assumpt1}--\ref{assumpt3}, let $\boldsymbol{\Sigma}_{mn}^\mathrm{T}\boldsymbol{\Sigma}^{-1}_m\boldsymbol{\Sigma}_{mn}$ be the rank-$m$ approximation of a covariance matrix $\boldsymbol{\Sigma}$ with eigenvalues $\lambda_1 \geq ... \geq \lambda_n> 0$ as described in Section \ref{sectFSA}. Then, the following inequality holds
\begin{align*}
    ||{\boldsymbol{\Sigma}} - {\boldsymbol{\Sigma}}_{mn}^\mathrm{T}{\boldsymbol{\Sigma}}^{-1}_m{\boldsymbol{\Sigma}}_{mn}||_2 \leq \mathcal{O}_P\Big(\lambda_{m+1} + \frac{n}{\sqrt{m}} \cdot\sigma^2_1\Big),
\end{align*}
where $\mathcal{O}_P$ is the $\mathcal{O}$-notation in probability.
\end{lemma}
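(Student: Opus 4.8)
The plan is to recognize $\boldsymbol{\Sigma}_{mn}^{\mathrm{T}}\boldsymbol{\Sigma}_m^{-1}\boldsymbol{\Sigma}_{mn}$ as a column-sampling (Nyström) low-rank approximation of $\boldsymbol{\Sigma}$, invoke a known spectral-norm error bound for \emph{uniform sampling without replacement}, and then pass from that high-probability bound to the $\mathcal{O}_P$ statement. First I would fix notation: let $\mathcal{I}\subseteq\{1,\dots,n\}$ be the index set of inducing points, which by Assumption~\ref{assumpt2} is a uniformly random size-$m$ subset of $\{1,\dots,n\}$. Then $\boldsymbol{\Sigma}_{mn}^{\mathrm{T}}$ is the $n\times m$ submatrix of columns of $\boldsymbol{\Sigma}$ indexed by $\mathcal{I}$ and $\boldsymbol{\Sigma}_m=\boldsymbol{\Sigma}_{\mathcal{I},\mathcal{I}}$, which is invertible since $\boldsymbol{\Sigma}\succ 0$ (all $\lambda_i>0$); hence $\boldsymbol{\Sigma}_{mn}^{\mathrm{T}}\boldsymbol{\Sigma}_m^{-1}\boldsymbol{\Sigma}_{mn}$ is exactly the standard non-truncated Nyström approximation built from these columns. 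I would also note for later use that, after permuting so that the indices of $\mathcal{I}$ come first, $\boldsymbol{\Sigma}-\boldsymbol{\Sigma}_{mn}^{\mathrm{T}}\boldsymbol{\Sigma}_m^{-1}\boldsymbol{\Sigma}_{mn}$ is block diagonal with a single nonzero block equal to the positive semi-definite Schur complement of $\boldsymbol{\Sigma}_m$ in $\boldsymbol{\Sigma}$, so that $\|\boldsymbol{\Sigma}-\boldsymbol{\Sigma}_{mn}^{\mathrm{T}}\boldsymbol{\Sigma}_m^{-1}\boldsymbol{\Sigma}_{mn}\|_2$ is the top eigenvalue of that Schur complement.

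The core step is to invoke a spectral-norm Nyström error bound for uniform sampling without replacement, e.g.\ that of Kumar, Mohri, and Talwalkar (2012) (see also Drineas and Mahoney, 2005, and Gittens, 2011). After simplification such a bound gives a universal constant $c>0$ and a function $h$ depending only on its argument such that, for every $\delta\in(0,1)$, with probability at least $1-\delta$ over the draw of $\mathcal{I}$,
\begin{align*}
\big\|\boldsymbol{\Sigma}-\boldsymbol{\Sigma}_{mn}^{\mathrm{T}}\boldsymbol{\Sigma}_m^{-1}\boldsymbol{\Sigma}_{mn}\big\|_2\le\big\|\boldsymbol{\Sigma}-\boldsymbol{\Sigma}_{(m)}\big\|_2+c\cdot\frac{n}{\sqrt{m}}\cdot\Big(\max_{1\le i\le n}\Sigma_{ii}\Big)\cdot h(\delta),
\end{align*}
where $\boldsymbol{\Sigma}_{(m)}$ is a best rank-$m$ spectral-norm approximation of $\boldsymbol{\Sigma}$. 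Here Assumption~\ref{assumpt3} is what makes the right-hand side take the advertised form: since $r(0)=1$, every diagonal entry is $\Sigma_{ii}=\sigma_1^2 r(0)=\sigma_1^2$, so $\max_i\Sigma_{ii}=\sigma_1^2$; and the coherence-type quantities $\max_{i,j}(\Sigma_{ii}+\Sigma_{jj}-2\Sigma_{ij})$ that typically enter $h$ are bounded by $4\sigma_1^2$ (because $|r|\le 1$, as $r$ must be a positive-definite function for $\boldsymbol{\Sigma}$ to be positive definite), so $h(\delta)$ can be taken independent of $n$, $m$ and the covariance parameters. Finally, $\|\boldsymbol{\Sigma}-\boldsymbol{\Sigma}_{(m)}\|_2=\lambda_{m+1}$ since the best rank-$m$ spectral-norm approximation error of a symmetric positive semi-definite matrix is its $(m+1)$-th largest eigenvalue.

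To conclude, I would turn this into the $\mathcal{O}_P$ bound of Definition~\ref{defO}. Given $\varepsilon>0$, set $\delta=\varepsilon$ and $M=c\cdot h(\varepsilon)+1$; then, with probability at least $1-\varepsilon$,
\begin{align*}
\big\|\boldsymbol{\Sigma}-\boldsymbol{\Sigma}_{mn}^{\mathrm{T}}\boldsymbol{\Sigma}_m^{-1}\boldsymbol{\Sigma}_{mn}\big\|_2\le\lambda_{m+1}+c\,h(\varepsilon)\,\frac{n}{\sqrt{m}}\sigma_1^2\le M\Big(\lambda_{m+1}+\frac{n}{\sqrt{m}}\sigma_1^2\Big),
\end{align*}
so $\mathbb{P}\big(\|\boldsymbol{\Sigma}-\boldsymbol{\Sigma}_{mn}^{\mathrm{T}}\boldsymbol{\Sigma}_m^{-1}\boldsymbol{\Sigma}_{mn}\|_2> M(\lambda_{m+1}+\tfrac{n}{\sqrt m}\sigma_1^2)\big)\le\varepsilon$. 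Since $c$ and $h$ do not depend on $n$, $m$ or the covariance parameters, $M$ depends only on $\varepsilon$, and this is exactly the stochastic-boundedness condition defining $\|\boldsymbol{\Sigma}-\boldsymbol{\Sigma}_{mn}^{\mathrm{T}}\boldsymbol{\Sigma}_m^{-1}\boldsymbol{\Sigma}_{mn}\|_2=\mathcal{O}_P\big(\lambda_{m+1}+\tfrac{n}{\sqrt m}\sigma_1^2\big)$.

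The main obstacle I anticipate is pinning down a citable Nyström bound in exactly the required regime --- spectral norm, sampling \emph{without} replacement (Assumption~\ref{assumpt2}), non-truncated rank-$m$ approximation, with dependence on $n$ and $m$ of the form $n/\sqrt m$ and with all remaining constants universal. The delicate accounting is the reduction of the coherence/leverage terms appearing in the general statement to multiples of $\sigma_1^2$ via Assumption~\ref{assumpt3}, which is essential for the tail bound in the last step to be uniform over configurations as Definition~\ref{defO} demands. If no sufficiently clean citable form exists, the fallback is to prove the bound directly from the Schur-complement representation above, controlling $\|\boldsymbol{C}-\boldsymbol{B}\boldsymbol{\Sigma}_m^{-1}\boldsymbol{B}^{\mathrm{T}}\|_2$ (with $\boldsymbol{B}$ and $\boldsymbol{C}$ the cross and non-inducing blocks of the permuted $\boldsymbol{\Sigma}$) by a matrix-concentration argument for uniform sub-sampling, again using Assumption~\ref{assumpt3} to bound the diagonal of $\boldsymbol{\Sigma}$.
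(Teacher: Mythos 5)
Your proposal is correct and follows essentially the same route as the paper: Theorem 2 of Kumar, Mohri and Talwalkar (2012) for uniform sampling without replacement, Eckart--Young--Mirsky to identify $\|\boldsymbol{\Sigma}-\hat{\boldsymbol{\Sigma}}_m\|_2$ with $\lambda_{m+1}$, and Assumption~\ref{assumpt3} to reduce the diagonal and coherence quantities to multiples of $\sigma_1^2$ before converting the probability-$(1-\epsilon)$ bound into the $\mathcal{O}_P$ statement. The only divergence is in the final step: the paper retains a residual factor $(n-m)\log(1/\epsilon)$ and therefore restricts to $m\geq n-C/\log(1/\epsilon)$ to meet the ``$\forall i\geq N$'' clause of Definition~\ref{defO}, whereas you absorb everything into a constant $h(\varepsilon)$ depending only on $\varepsilon$ --- which is legitimate since $\tfrac{n-m}{n-1/2}\leq 1$ and $1/\beta(m,n)\leq 2$ --- so your version needs no restriction on $m$ at all.
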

\begin{proof}
    We start using Theorem 2 in \citet{kumar2012sampling}, which states that with probability at least $1 - \epsilon$, the following inequality holds for any $m$
    \begin{align*}
        ||{\boldsymbol{\Sigma}} - {\boldsymbol{\Sigma}}_{mn}^\mathrm{T}{\boldsymbol{\Sigma}}^{-1}_m{\boldsymbol{\Sigma}}_{mn}||_2&\leq ||{\boldsymbol{\Sigma}} - \hat{\boldsymbol{\Sigma}}_m||_2 \\
        &\quad+ \frac{2 \cdot n}{\sqrt{m}} \max\limits_{i}({{\Sigma}}_{ii})\Big(1+\sqrt{\frac{n-m}{n-\frac{1}{2}} \frac{\log\big(\frac{1}{\epsilon}\big)}{\beta(m, n)}}  \frac{d_{\max }^{\mathbf{\Sigma}}}{\max\limits_{i}({{L}}_{ii})}\Big),
    \end{align*}
    where $\hat{\boldsymbol{\Sigma}}_m$ is the best rank-$m$ approximation of $\boldsymbol{\Sigma}$ with respect to the spectral norm $||\cdot||_2$, $\beta(m,n) = 1-\frac{1}{2\cdot\max(m,n-m)}$, ${{L}}_{ii}$ is the $i$-th entry of the diagonal of the Cholesky factor $\boldsymbol{L}$ of $\boldsymbol{\Sigma}$ and $d_{\max }^{\mathbf{\Sigma}} = \max\limits_{ij}\sqrt{\Sigma_{ii}+\Sigma_{jj}-2\cdot \Sigma_{ij}}$.
    
    By the Eckart-Young-Mirsky theorem for the spectral norm \citep{eckart1936approximation}, we know that the best rank-$m$ approximation of $\boldsymbol{\Sigma}$ in the spectral norm is its $m$-truncated eigenvalue decomposition considering only the $m$ largest eigenvalues. Therefore, we obtain
    \begin{align*}
        ||{\boldsymbol{\Sigma}} - \hat{\boldsymbol{\Sigma}}_m||_2 \leq \lambda_{m+1},
    \end{align*}
    where ${\lambda}_{m+1}$ is the $(m+1)$-th largest eigenvalue of ${\boldsymbol{\Sigma}}$.
    
    Since ${{\Sigma}}_{ij} \geq 0$ for all $i,j = 1,...,n$ and $\boldsymbol{\Sigma}$ has by Assumption~\ref{assumpt3} a constant diagonal with entries equal to $\sigma^2_1$, we have $d_{\max }^{\mathbf{\Sigma}}\leq \sqrt{2\cdot{{\Sigma}}_{ii}} = \sqrt{2}\cdot\sqrt{\sigma_1^2}$. Moreover, by the Cholesky decomposition algorithm, we have $\max\limits_{i}({{L}}_{ii}) =  \sqrt{\sigma_1^2}$. Hence, we obtain
    \begin{align*}
        ||{\boldsymbol{\Sigma}} - {\boldsymbol{\Sigma}}_{mn}^\mathrm{T}{\boldsymbol{\Sigma}}^{-1}_m{\boldsymbol{\Sigma}}_{mn}||_2
        &\leq \lambda_{m+1}+ \frac{2 \cdot n}{\sqrt{m}} \cdot\sigma^2_1\cdot\Big(1+\sqrt{\frac{n-m}{n-\frac{1}{2}} \frac{\log\big(\frac{1}{\epsilon}\big)}{\beta(m, n)}}  \frac{\sqrt{2}\cdot\sqrt{\sigma_1^2}}{\sqrt{\sigma_1^2}}\Big)\\
        &= \lambda_{m+1}+ \frac{2 \cdot n}{\sqrt{m}} \cdot\sigma^2_1\cdot\Big(1+\sqrt{2\cdot\frac{n-m}{n-\frac{1}{2}} \frac{\log\big(\frac{1}{\epsilon}\big)}{\beta(m, n)}}\Big).
    \end{align*}
    Since $\max(m,n-m)\geq \frac{n}{2}$, we have $\frac{1}{\beta(m,n)} = \frac{1}{1-\frac{1}{2\cdot \max(m,n-m)}}\leq\frac{1}{1-\frac{1}{n}}=\frac{n}{n-1}\leq 2$. 
    Hence, we obtain
    \begin{align*}
        ||{\boldsymbol{\Sigma}} - {\boldsymbol{\Sigma}}_{mn}^\mathrm{T}{\boldsymbol{\Sigma}}^{-1}_m{\boldsymbol{\Sigma}}_{mn}||_2
        &\leq \lambda_{m+1}+ \frac{2 \cdot n}{\sqrt{m}} \cdot\sigma^2_1\cdot\Big(1+\sqrt{4\cdot\frac{n-m}{n-\frac{1}{2}} \log\big(\frac{1}{\epsilon}\big)}\Big)\\
        &\leq \lambda_{m+1}+ \frac{2 \cdot n}{\sqrt{m}} \cdot\sigma^2_1\cdot\Big(1+\sqrt{\frac{8}{3}\cdot (n-m)\cdot \log\big(\frac{1}{\epsilon}\big)}\Big).
    \end{align*}
    
    To use the $\mathcal{O}$-notation in probability in Definition \ref{defO}, we have to show that there exists an $N$ such that for all $m\geq N$ it holds that $(n-m)\cdot \log\big(\frac{1}{\epsilon}\big)$ is finite.
    
    We obtain for a constant $C>0$
    \begin{align*}
        (n-m)\cdot \log\big(\frac{1}{\epsilon}\big) \leq C \Longleftrightarrow m \geq n-\frac{C}{\log\big(\frac{1}{\epsilon}\big)},
    \end{align*}
    and therefore, $N= \max\Big(n-\frac{C}{\log\big(\frac{1}{\epsilon}\big)},1\Big)$.
    
    Using the following equivalence
    \begin{equation*}
        \begin{gathered}
    \mathbb{P}\Bigg(\Bigg|\frac{||{\boldsymbol{\Sigma}} - {\boldsymbol{\Sigma}}_{mn}^\mathrm{T}{\boldsymbol{\Sigma}}^{-1}_m{\boldsymbol{\Sigma}}_{mn}||_2}{\lambda_{m+1}+ \frac{2 \cdot n}{\sqrt{m}} \cdot\sigma^2_1\cdot\Big(1+\sqrt{\frac{8}{3}\cdot (n-m)\cdot \log\big(\frac{1}{\epsilon}\big)}\Big)}\Bigg|\leq 1\Bigg) > 1-\epsilon\\ \Longleftrightarrow\\
     \mathbb{P}\Bigg(\Bigg|\frac{||{\boldsymbol{\Sigma}} - {\boldsymbol{\Sigma}}_{mn}^\mathrm{T}{\boldsymbol{\Sigma}}^{-1}_m{\boldsymbol{\Sigma}}_{mn}||_2}{\lambda_{m+1}+ \frac{2 \cdot n}{\sqrt{m}} \cdot\sigma^2_1\cdot\Big(1+\sqrt{\frac{8}{3}\cdot (n-m)\cdot \log\big(\frac{1}{\epsilon}\big)}\Big)}\Bigg|> 1\Bigg) < \epsilon,
     \end{gathered}
    \end{equation*}
     we have for all $m \geq \max\Big(n-\frac{C}{\log\big(\frac{1}{\epsilon}\big)},1\Big)$ that
     \begin{align*}
         \epsilon &> \mathbb{P}\Bigg(\Bigg|\frac{||{\boldsymbol{\Sigma}} - {\boldsymbol{\Sigma}}_{mn}^\mathrm{T}{\boldsymbol{\Sigma}}^{-1}_m{\boldsymbol{\Sigma}}_{mn}||_2}{\lambda_{m+1}+ \frac{2 \cdot n}{\sqrt{m}} \cdot\sigma^2_1\cdot\Big(1+\sqrt{\frac{8}{3}\cdot (n-m)\cdot \log\big(\frac{1}{\epsilon}\big)}\Big)}\Bigg|> 1\Bigg)\\ & \geq \mathbb{P}\Bigg(\Bigg|\frac{||{\boldsymbol{\Sigma}} - {\boldsymbol{\Sigma}}_{mn}^\mathrm{T}{\boldsymbol{\Sigma}}^{-1}_m{\boldsymbol{\Sigma}}_{mn}||_2}{\lambda_{m+1}+ \frac{2 \cdot n}{\sqrt{m}} \cdot\sigma^2_1\cdot\Big(1+\sqrt{\frac{8}{3}\cdot C}\Big)}\Bigg|> 1\Bigg)\\ &  \geq \mathbb{P}\Bigg(\Bigg|\frac{||{\boldsymbol{\Sigma}} - {\boldsymbol{\Sigma}}_{mn}^\mathrm{T}{\boldsymbol{\Sigma}}^{-1}_m{\boldsymbol{\Sigma}}_{mn}||_2}{\lambda_{m+1}+ \frac{n}{\sqrt{m}} \cdot\sigma^2_1}\Bigg|> M\Bigg),
     \end{align*}
     with $M = 2\cdot\big(1+\sqrt{\frac{8}{3}\cdot C}\big)$. Hence, we obtain 
      \begin{align*}
        ||{\boldsymbol{\Sigma}} - {\boldsymbol{\Sigma}}_{mn}^\mathrm{T}{\boldsymbol{\Sigma}}^{-1}_m{\boldsymbol{\Sigma}}_{mn}||_2=\mathcal{O}_P\Big(\lambda_{m+1} + \frac{n}{\sqrt{m}} \cdot\sigma^2_1\Big).
    \end{align*}
\end{proof}
\begin{lemma}\label{lem1}
    \textbf{Bound for the Condition Number of $\Tilde{\boldsymbol{\Sigma}}_{\dagger}$}
    
    Let $\Tilde{\boldsymbol{\Sigma}}_\dagger \in \mathbb{R}^{n \times n}$ be the FSA of a covariance matrix $\Tilde{\boldsymbol{\Sigma}} =\boldsymbol{\Sigma} + \sigma^2\boldsymbol{I}_n  \in \mathbb{R}^{n \times n}$ with $m$ inducing points and taper range $\gamma$, where ${\boldsymbol{\Sigma}}$ has eigenvalues $\lambda_1 \geq ... \geq \lambda_n> 0$. Under Assumptions~\ref{assumpt1}--\ref{assumpt3}, the condition number $\kappa( \Tilde{\boldsymbol{\Sigma}}_\dagger)$ is bounded by
\begin{align*}
\kappa(\Tilde{\boldsymbol{\Sigma}}_\dagger)\leq  \mathcal{O}_P\bigg(\frac{1}{\sigma^2}\cdot\Big(\big(\lambda_{m+1} + \frac{n}{\sqrt{m}}\cdot\sigma_1^2\big)\cdot\sqrt{n\cdot n_\gamma}+\lambda_1\Big)\bigg) + 1,
\end{align*}
where $n_\gamma$ is the average number of nonzero entries per row in $\mathbf{T}(\gamma)$.
\end{lemma}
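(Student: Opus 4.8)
The plan is to bound the condition number $\kappa(\Tilde{\boldsymbol{\Sigma}}_\dagger) = \lambda_{\max}(\Tilde{\boldsymbol{\Sigma}}_\dagger)/\lambda_{\min}(\Tilde{\boldsymbol{\Sigma}}_\dagger)$ by controlling the two eigenvalues separately, using the decomposition $\Tilde{\boldsymbol{\Sigma}}_\dagger = \boldsymbol{\Sigma}_{\mathrm{l}} + \boldsymbol{\Sigma}_{\mathrm{s}} + \sigma^2 \boldsymbol{I}_n$ with $\boldsymbol{\Sigma}_{\mathrm{l}} = \boldsymbol{\Sigma}_{mn}^{\mathrm{T}}\boldsymbol{\Sigma}_m^{-1}\boldsymbol{\Sigma}_{mn}$ and $\boldsymbol{\Sigma}_{\mathrm{s}} = (\boldsymbol{\Sigma} - \boldsymbol{\Sigma}_{\mathrm{l}})\circ \mathbf{T}(\gamma)$. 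For the smallest eigenvalue, since $\boldsymbol{\Sigma}_{\mathrm{l}}$ is positive semi-definite (a Nyström/Gram matrix) and $\boldsymbol{\Sigma}_{\mathrm{s}}$ is positive semi-definite as well (it is a valid covariance matrix under the FSA construction — the residual covariance is PSD and the Schur–Hadamard product with a valid correlation matrix $\mathbf{T}(\gamma)$ preserves this), we get $\lambda_{\min}(\Tilde{\boldsymbol{\Sigma}}_\dagger) \geq \sigma^2$. This gives the $1/\sigma^2$ prefactor immediately.

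For the largest eigenvalue, I would write $\lambda_{\max}(\Tilde{\boldsymbol{\Sigma}}_\dagger) \leq \|\boldsymbol{\Sigma}_{\mathrm{l}}\|_2 + \|\boldsymbol{\Sigma}_{\mathrm{s}}\|_2 + \sigma^2$ by the triangle inequality for the spectral norm. The term $\|\boldsymbol{\Sigma}_{\mathrm{l}}\|_2$ is bounded by $\|\boldsymbol{\Sigma}\|_2 = \lambda_1$, using the fact that $\boldsymbol{\Sigma}_{\mathrm{l}}$ is the Nyström approximation and $\boldsymbol{\Sigma} - \boldsymbol{\Sigma}_{\mathrm{l}} \succeq 0$, so $0 \preceq \boldsymbol{\Sigma}_{\mathrm{l}} \preceq \boldsymbol{\Sigma}$ and hence $\|\boldsymbol{\Sigma}_{\mathrm{l}}\|_2 \leq \lambda_1$. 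This explains the isolated $+\lambda_1$ in the bound. For $\|\boldsymbol{\Sigma}_{\mathrm{s}}\|_2$, I would bound the spectral norm of a sparse matrix via the Frobenius norm or, more sharply, via $\|\boldsymbol{\Sigma}_{\mathrm{s}}\|_2 \leq \sqrt{\|\boldsymbol{\Sigma}_{\mathrm{s}}\|_1 \|\boldsymbol{\Sigma}_{\mathrm{s}}\|_\infty} = \|\boldsymbol{\Sigma}_{\mathrm{s}}\|_\infty$ (symmetry) or simply $\|\boldsymbol{\Sigma}_{\mathrm{s}}\|_2 \leq \|\boldsymbol{\Sigma}_{\mathrm{s}}\|_F$. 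Each nonzero entry of $\boldsymbol{\Sigma}_{\mathrm{s}}$ has magnitude at most $\|(\boldsymbol{\Sigma}-\boldsymbol{\Sigma}_{\mathrm{l}})\circ\mathbf{T}(\gamma)\|_{\max} \leq \|\boldsymbol{\Sigma}-\boldsymbol{\Sigma}_{\mathrm{l}}\|_{\max} \cdot \|\mathbf{T}(\gamma)\|_{\max} \leq \|\boldsymbol{\Sigma}-\boldsymbol{\Sigma}_{\mathrm{l}}\|_2$ (since $|c_{\text{taper}}| \leq 1$ and the max entry is bounded by the spectral norm), and there are at most $n \cdot n_\gamma$ nonzeros in total, giving $\|\boldsymbol{\Sigma}_{\mathrm{s}}\|_F \leq \sqrt{n\cdot n_\gamma}\cdot\|\boldsymbol{\Sigma}-\boldsymbol{\Sigma}_{\mathrm{l}}\|_2$. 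Now invoke Lemma~\ref{lemnyst}: $\|\boldsymbol{\Sigma}-\boldsymbol{\Sigma}_{mn}^{\mathrm{T}}\boldsymbol{\Sigma}_m^{-1}\boldsymbol{\Sigma}_{mn}\|_2 = \mathcal{O}_P(\lambda_{m+1} + \tfrac{n}{\sqrt{m}}\sigma_1^2)$, which yields $\|\boldsymbol{\Sigma}_{\mathrm{s}}\|_2 = \mathcal{O}_P\big((\lambda_{m+1}+\tfrac{n}{\sqrt{m}}\sigma_1^2)\sqrt{n\cdot n_\gamma}\big)$.

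Combining, $\lambda_{\max}(\Tilde{\boldsymbol{\Sigma}}_\dagger) \leq \mathcal{O}_P\big((\lambda_{m+1}+\tfrac{n}{\sqrt{m}}\sigma_1^2)\sqrt{n\cdot n_\gamma} + \lambda_1\big) + \sigma^2$, and dividing by $\lambda_{\min}(\Tilde{\boldsymbol{\Sigma}}_\dagger) \geq \sigma^2$ gives the claimed bound with the additive $+1$ coming from the $\sigma^2/\sigma^2$ term. The main obstacle I anticipate is handling the $\mathcal{O}_P$ bookkeeping cleanly — specifically, propagating the "with probability at least $1-\epsilon$" statement from Lemma~\ref{lemnyst} through the deterministic spectral-norm manipulations without losing the uniformity in $m$ needed for Definition~\ref{defO}; this requires being careful that the deterministic steps ($\|\boldsymbol{\Sigma}_{\mathrm{s}}\|_2 \leq \sqrt{n\cdot n_\gamma}\|\boldsymbol{\Sigma}-\boldsymbol{\Sigma}_{\mathrm{l}}\|_2$, $\|\boldsymbol{\Sigma}_{\mathrm{l}}\|_2\leq\lambda_1$, $\lambda_{\min}\geq\sigma^2$) hold surely, so the only randomness enters through a single application of the lemma. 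A secondary subtlety is justifying $\|A\circ B\|_{\max} \leq \|A\|_{\max}\|B\|_{\max}$ and $\|A\|_{\max} \leq \|A\|_2$, both of which are elementary but worth stating precisely, as well as confirming that $\boldsymbol{\Sigma}_{\mathrm{s}}$ is genuinely PSD (this follows from the FSA construction being a valid covariance, as recalled in Section~\ref{sectFSA}).
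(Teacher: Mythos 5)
Your proposal is correct and takes essentially the same route as the paper's proof: bound $\|\Tilde{\boldsymbol{\Sigma}}_\dagger^{-1}\|_2\le 1/\sigma^2$ via positive semi-definiteness, split $\|\Tilde{\boldsymbol{\Sigma}}_\dagger\|_2$ by the triangle inequality, bound the low-rank term by $\lambda_1$ via the Loewner order, and reduce the tapered residual term to $\sqrt{n\cdot n_\gamma}\cdot\|\boldsymbol{\Sigma}-\boldsymbol{\Sigma}_{mn}^{\mathrm{T}}\boldsymbol{\Sigma}_m^{-1}\boldsymbol{\Sigma}_{mn}\|_2$ before invoking Lemma~\ref{lemnyst}. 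The only immaterial difference is the micro-step for the Hadamard product: the paper uses $\|A\circ \mathbf{T}(\gamma)\|_2\le\|A\|_2\,\|\mathbf{T}(\gamma)\|_2$ (Horn and Johnson, Theorem 5.3.4) together with $\|\mathbf{T}(\gamma)\|_2\le\|\mathbf{T}(\gamma)\|_F\le\sqrt{n\cdot n_\gamma}$, whereas you bound $\|A\circ \mathbf{T}(\gamma)\|_F$ entrywise via $|A_{ij}|\le\|A\|_2$ and the sparsity count; both yield the identical intermediate bound.
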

\begin{proof}
We use the identity
\begin{align*}
\kappa(\Tilde{\boldsymbol{\Sigma}}_\dagger) & = ||\Tilde{\boldsymbol{\Sigma}}_\dagger||_2\cdot||\Tilde{\boldsymbol{\Sigma}}_\dagger^{-1}||_2,
\end{align*}
and bound $||\Tilde{\boldsymbol{\Sigma}}_\dagger^{-1}||_2$ by
\begin{align*}
    ||\Tilde{\boldsymbol{\Sigma}}_\dagger^{-1}||_2 = \frac{1}{\lambda^\dagger_n + \sigma^2} \leq \frac{1}{\sigma^2},
\end{align*}
where $\lambda^\dagger_n$ is the smallest eigenvalue of $(\boldsymbol{\Sigma} - \boldsymbol{\Sigma}_{mn}^\mathrm{T}\boldsymbol{\Sigma}^{-1}_m\boldsymbol{\Sigma}_{mn})\circ \boldsymbol{T}(\gamma)  + \boldsymbol{\Sigma}_{mn}^\mathrm{T}\boldsymbol{\Sigma}^{-1}_m\boldsymbol{\Sigma}_{mn}>0$.

Therefore, we obtain
\begin{align*}
    \kappa(\Tilde{\boldsymbol{\Sigma}}_\dagger) & = ||\Tilde{\boldsymbol{\Sigma}}_\dagger||_2\cdot||\Tilde{\boldsymbol{\Sigma}}_\dagger^{-1}||_2\leq \frac{1}{\sigma^2}\cdot||\Tilde{\boldsymbol{\Sigma}}_\dagger||_2\\
    &= \frac{1}{\sigma^2}\cdot||(\boldsymbol{\Sigma} - \boldsymbol{\Sigma}_{mn}^\mathrm{T}\boldsymbol{\Sigma}^{-1}_m\boldsymbol{\Sigma}_{mn})\circ \boldsymbol{T}(\gamma)  + \boldsymbol{\Sigma}_{mn}^\mathrm{T}\boldsymbol{\Sigma}^{-1}_m\boldsymbol{\Sigma}_{mn} + \sigma^2\boldsymbol{I}_n||_2\\
    &\leq \frac{1}{\sigma^2}\cdot||(\boldsymbol{\Sigma} - \boldsymbol{\Sigma}_{mn}^\mathrm{T}\boldsymbol{\Sigma}^{-1}_m\boldsymbol{\Sigma}_{mn})\circ \boldsymbol{T}(\gamma)||_2  + ||\boldsymbol{\Sigma}_{mn}^\mathrm{T}\boldsymbol{\Sigma}^{-1}_m\boldsymbol{\Sigma}_{mn}||_2 + ||\sigma^2\boldsymbol{I}_n||_2\\
    &= \frac{1}{\sigma^2}\cdot\Big(||(\boldsymbol{\Sigma} - \boldsymbol{\Sigma}_{mn}^\mathrm{T}\boldsymbol{\Sigma}^{-1}_m\boldsymbol{\Sigma}_{mn})\circ \boldsymbol{T}(\gamma)||_2  + ||\boldsymbol{\Sigma}_{mn}^\mathrm{T}\boldsymbol{\Sigma}^{-1}_m\boldsymbol{\Sigma}_{mn}||_2 + \sigma^2\Big)\\
    &= \frac{1}{\sigma^2}\cdot\Big(||({\boldsymbol{\Sigma}} - {\boldsymbol{\Sigma}}_{mn}^\mathrm{T}{\boldsymbol{\Sigma}}^{-1}_m{\boldsymbol{\Sigma}}_{mn})\circ \boldsymbol{T}(\gamma)||_2  + ||{\boldsymbol{\Sigma}}_{mn}^\mathrm{T}{\boldsymbol{\Sigma}}^{-1}_m{\boldsymbol{\Sigma}}_{mn}||_2 \Big)+ 1,
\end{align*}
where we used the triangle inequality.

Since $({\boldsymbol{\Sigma}} - {\boldsymbol{\Sigma}}_{mn}^\mathrm{T}{\boldsymbol{\Sigma}}^{-1}_m{\boldsymbol{\Sigma}}_{mn})$ and $ \boldsymbol{T}(\gamma)$ are symmetric and positive semidefinite matrices we can apply Theorem 5.3.4 in \citet{horn1991topics} to obtain
\begin{align*}
    ||({\boldsymbol{\Sigma}} - {\boldsymbol{\Sigma}}_{mn}^\mathrm{T}{\boldsymbol{\Sigma}}^{-1}_m{\boldsymbol{\Sigma}}_{mn})\circ \boldsymbol{T}(\gamma)||_2 &\leq ||{\boldsymbol{\Sigma}} - {\boldsymbol{\Sigma}}_{mn}^\mathrm{T}{\boldsymbol{\Sigma}}^{-1}_m{\boldsymbol{\Sigma}}_{mn}||_2\cdot ||\boldsymbol{T}(\gamma)||_2.
\end{align*}
By Lemma \ref{lemnyst}, we obtain
\begin{align*}
    ||{\boldsymbol{\Sigma}} - {\boldsymbol{\Sigma}}_{mn}^\mathrm{T}{\boldsymbol{\Sigma}}^{-1}_m{\boldsymbol{\Sigma}}_{mn}||_2 &\leq \mathcal{O}_P\Big(\lambda_{m+1} + \frac{n}{\sqrt{m}}\cdot\sigma_1^2\Big). 
\end{align*}

By using $||\cdot||_2\leq||\cdot||_F$ and $\frac{1}{n}\cdot\sum\limits_{i=1}^n\sum\limits_{j\text{ : }\mathbf{T}(\gamma)_{i,j}\neq 0}^n1 = n_\gamma$, we obtain
\begin{align*}
    || \boldsymbol{T}(\gamma)||_2&\leq|| \mathbf{T}(\gamma)||_F = \sqrt{\sum_{i=1}^n\sum_{j=1}^n\big|[\mathbf{T}(\gamma)]_{i,j}\big|^2}=\sqrt{\sum_{i=1}^n\sum_{j\text{ : }\mathbf{T}(\gamma)_{i,j}\neq 0}^n\big|[\mathbf{T}(\gamma)]_{i,j}\big|^2}\\
    &\leq \sqrt{\sum_{i=1}^n\sum_{j\text{ : }\mathbf{T}(\gamma)_{i,j}\neq 0}^n1} = \sqrt{n\cdot n_\gamma}.
\end{align*}
Moreover, using the Loewner order $\preceq$, see, e.g., Chapter 7 in \citet{jahn2020introduction}, we know that $ 0 \preceq({\boldsymbol{\Sigma}} - {\boldsymbol{\Sigma}}_{mn}^\mathrm{T}{\boldsymbol{\Sigma}}^{-1}_m{\boldsymbol{\Sigma}}_{mn})$ since $({\boldsymbol{\Sigma}} - {\boldsymbol{\Sigma}}_{mn}^\mathrm{T}{\boldsymbol{\Sigma}}^{-1}_m{\boldsymbol{\Sigma}}_{mn})$ is positive semidefinite and therefore, we have ${\boldsymbol{\Sigma}}_{mn}^\mathrm{T}{\boldsymbol{\Sigma}}^{-1}_m{\boldsymbol{\Sigma}}_{mn}\preceq{\boldsymbol{\Sigma}} $ and obtain
\begin{align*}
    ||{\boldsymbol{\Sigma}}_{mn}^\mathrm{T}{\boldsymbol{\Sigma}}^{-1}_m{\boldsymbol{\Sigma}}_{mn}||_2 \leq ||{\boldsymbol{\Sigma}}||_2 = {\lambda}_1.
\end{align*}

In conclusion, we obtain
\begin{align*}
    \kappa\big(\Tilde{\boldsymbol{\Sigma}}_\dagger\big) &\leq  \frac{1}{\sigma^2}\cdot\Big(||({\boldsymbol{\Sigma}} - {\boldsymbol{\Sigma}}_{mn}^\mathrm{T}{\boldsymbol{\Sigma}}^{-1}_m{\boldsymbol{\Sigma}}_{mn})\circ \boldsymbol{T}(\gamma)||_2  + ||{\boldsymbol{\Sigma}}_{mn}^\mathrm{T}{\boldsymbol{\Sigma}}^{-1}_m{\boldsymbol{\Sigma}}_{mn}||_2 \Big)+ 1\\
    &\leq \mathcal{O}_P\bigg(\frac{1}{\sigma^2}\cdot\Big(\big(\lambda_{m+1} + \frac{n}{\sqrt{m}}\cdot\sigma_1^2\big)\cdot\sqrt{n\cdot n_\gamma}+\lambda_1\Big)\bigg) + 1.
\end{align*}
\end{proof}

\begin{proof}[Proof of Theorem \ref{thm1}]\label{proof1}
    We begin by stating a well-known CG convergence result \citep{trefethen2022numerical}, which bounds the error in terms of the
conditioning number
\begin{align*}
        \frac{\left\|\mathbf{u}^*-\mathbf{u}_k\right\|_{\Tilde{\boldsymbol{\Sigma}}_\dagger}}{\left\|\mathbf{u}^*-\mathbf{u}_0\right\|_{\Tilde{\boldsymbol{\Sigma}}_\dagger}} &\leq 2\cdot\bigg(\frac{\sqrt{\kappa(\Tilde{\boldsymbol{\Sigma}}_\dagger)}-1}{\sqrt{\kappa(\Tilde{\boldsymbol{\Sigma}}_\dagger)}+1}\bigg)^k.
\end{align*}
By using Lemma \ref{lem1}, we obtain
\begin{align*}
        \frac{\left\|\mathbf{u}^*-\mathbf{u}_k\right\|_{\Tilde{\boldsymbol{\Sigma}}_\dagger}}{\left\|\mathbf{u}^*-\mathbf{u}_0\right\|_{\Tilde{\boldsymbol{\Sigma}}_\dagger}} &\leq 2\cdot \left(\frac{\sqrt{\kappa(\Tilde{\boldsymbol{\Sigma}}_\dagger)}-1}{\sqrt{\kappa(\Tilde{\boldsymbol{\Sigma}}_\dagger)}+1}\right)^k\\
        &\leq 2\cdot \left(\frac{\sqrt{\mathcal{O}_P\bigg(\frac{1}{\sigma^2}\cdot\Big(\big(\lambda_{m+1} + \frac{n}{\sqrt{m}}\cdot\sigma_1^2\big)\cdot\sqrt{n\cdot n_\gamma}+\lambda_1\Big)\bigg)+1}-1}{\sqrt{\mathcal{O}_P\bigg(\frac{1}{\sigma^2}\cdot\Big(\big(\lambda_{m+1} + \frac{n}{\sqrt{m}}\cdot\sigma_1^2\big)\cdot\sqrt{n\cdot n_\gamma}+\lambda_1\Big)\bigg)+1}+1}\right)^k\\
        &\leq 2\cdot \left(\frac{\sqrt{\mathcal{O}_P\bigg(\frac{1}{\sigma^2}\cdot\Big(\big(\lambda_{m+1} + \frac{n}{\sqrt{m}}\cdot\sigma_1^2\big)\cdot\sqrt{n\cdot n_\gamma}+\lambda_1\Big)\bigg)}+1-1}{\sqrt{\mathcal{O}_P\bigg(\frac{1}{\sigma^2}\cdot\Big(\big(\lambda_{m+1} + \frac{n}{\sqrt{m}}\cdot\sigma_1^2\big)\cdot\sqrt{n\cdot n_\gamma}+\lambda_1\Big)\bigg)}+1}\right)^k\\
        &\leq 2\cdot\left(\frac{1}{1 + \mathcal{O}_P\bigg(\sigma\cdot\Big(\big(\lambda_{m+1} + \frac{n}{\sqrt{m}}\cdot\sigma_1^2\big)\cdot\sqrt{n\cdot n_\gamma}+\lambda_1\Big)^{-\frac{1}{2}}\bigg)}\right)^k.
\end{align*}
\end{proof}

\begin{lemma}\label{lem2}
    \textbf{Bound for the Condition Number of $\widehat{\boldsymbol{P}}^{-1} \Tilde{\boldsymbol{\Sigma}}_\dagger$} 
    
    Let $\Tilde{\boldsymbol{\Sigma}}_\dagger \in \mathbb{R}^{n \times n}$ be the FSA of a covariance matrix $\Tilde{\boldsymbol{\Sigma}} =\boldsymbol{\Sigma} + \sigma^2\boldsymbol{I}_n  \in \mathbb{R}^{n \times n}$ depending on the number of inducing points $m$ and taper range $\gamma$, where ${\boldsymbol{\Sigma}}$ has eigenvalues $\lambda_1 \geq ... \geq \lambda_n> 0$. Furthermore, let $\widehat{\boldsymbol{P}}\in \mathbb{R}^{n \times n}$ be the FITC preconditioner for $\Tilde{\boldsymbol{\Sigma}}_\dagger$. Under Assumptions~\ref{assumpt1}--\ref{assumpt3}, the condition number $\kappa(\widehat{\boldsymbol{P}}^{-1} \Tilde{\boldsymbol{\Sigma}}_\dagger)$ is bounded by
\begin{align*}
\kappa(\widehat{\boldsymbol{P}}^{-1} \Tilde{\boldsymbol{\Sigma}}_\dagger) \leq \bigg(1+\mathcal{O}_P\Big(\frac{1}{\sigma^2}\cdot\big(\lambda_{m+1} + \frac{n}{\sqrt{m}}\cdot\sigma_1^2\big)\cdot \sqrt{n\cdot (n_\gamma-1)}\Big)\bigg)^2,
\end{align*}
where $n_\gamma$ is the average number of nonzero entries per row in $\mathbf{T}(\gamma)$.
\end{lemma}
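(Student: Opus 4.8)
The plan is to write $\widehat{\boldsymbol{P}}^{-1}\Tilde{\boldsymbol{\Sigma}}_\dagger = \boldsymbol{I}_n + \widehat{\boldsymbol{P}}^{-1}\boldsymbol{E}$ with the perturbation $\boldsymbol{E} := \Tilde{\boldsymbol{\Sigma}}_\dagger - \widehat{\boldsymbol{P}}$, and to bound $\kappa(\widehat{\boldsymbol{P}}^{-1}\Tilde{\boldsymbol{\Sigma}}_\dagger)$ through the submultiplicative estimate $\kappa(\widehat{\boldsymbol{P}}^{-1}\Tilde{\boldsymbol{\Sigma}}_\dagger) \leq \|\widehat{\boldsymbol{P}}^{-1}\Tilde{\boldsymbol{\Sigma}}_\dagger\|_2\,\|\Tilde{\boldsymbol{\Sigma}}_\dagger^{-1}\widehat{\boldsymbol{P}}\|_2$; this is the relevant quantity because $\widehat{\boldsymbol{P}}^{-1}\Tilde{\boldsymbol{\Sigma}}_\dagger$ is similar to the symmetric positive definite matrix $\widehat{\boldsymbol{P}}^{-1/2}\Tilde{\boldsymbol{\Sigma}}_\dagger\widehat{\boldsymbol{P}}^{-1/2}$, and the product of spectral norms on the right dominates its eigenvalue-ratio condition number, which is what governs the preconditioned CG method in Theorem~\ref{th2}. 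The first concrete step is to identify $\boldsymbol{E}$ explicitly. Since $\widehat{\boldsymbol{P}} = \boldsymbol{D}_{\mathrm{s}} + \boldsymbol{\Sigma}_{\mathrm{l}}$, $\Tilde{\boldsymbol{\Sigma}}_\dagger = \Tilde{\boldsymbol{\Sigma}}_{\mathrm{s}} + \boldsymbol{\Sigma}_{\mathrm{l}}$, and $\boldsymbol{D}_{\mathrm{s}} = \mathrm{diag}(\Tilde{\boldsymbol{\Sigma}}_{\mathrm{s}})$, we have $\boldsymbol{E} = \Tilde{\boldsymbol{\Sigma}}_{\mathrm{s}} - \boldsymbol{D}_{\mathrm{s}}$, i.e. the strictly off-diagonal part of $\Tilde{\boldsymbol{\Sigma}}_{\mathrm{s}}$; as the nugget and the unit diagonal of $\boldsymbol{T}(\gamma)$ affect only the diagonal, this equals $\boldsymbol{E} = (\boldsymbol{\Sigma} - \boldsymbol{\Sigma}_{\mathrm{l}})\circ(\boldsymbol{T}(\gamma) - \boldsymbol{I}_n)$.

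Next I would record the two deterministic inequalities $\widehat{\boldsymbol{P}} \succeq \sigma^2\boldsymbol{I}_n$ and $\Tilde{\boldsymbol{\Sigma}}_\dagger \succeq \sigma^2\boldsymbol{I}_n$. The former holds because $\boldsymbol{\Sigma}_{\mathrm{l}} = \boldsymbol{\Sigma}_{mn}^{\mathrm{T}}\boldsymbol{\Sigma}_m^{-1}\boldsymbol{\Sigma}_{mn} \succeq \boldsymbol{0}$ and $\boldsymbol{D}_{\mathrm{s}} = \mathrm{diag}(\boldsymbol{\Sigma} - \boldsymbol{\Sigma}_{\mathrm{l}}) + \sigma^2\boldsymbol{I}_n \succeq \sigma^2\boldsymbol{I}_n$, the latter since $\boldsymbol{\Sigma} - \boldsymbol{\Sigma}_{\mathrm{l}}$ is positive semidefinite (a Schur complement of a joint covariance matrix) and hence has a nonnegative diagonal; the inequality for $\Tilde{\boldsymbol{\Sigma}}_\dagger$ holds because $\boldsymbol{\Sigma}_{\mathrm{s}} = (\boldsymbol{\Sigma} - \boldsymbol{\Sigma}_{\mathrm{l}})\circ\boldsymbol{T}(\gamma) \succeq \boldsymbol{0}$ by the Schur product theorem together with $\boldsymbol{\Sigma}_{\mathrm{l}} \succeq \boldsymbol{0}$. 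Consequently $\|\widehat{\boldsymbol{P}}^{-1}\|_2 \leq \sigma^{-2}$ and $\|\Tilde{\boldsymbol{\Sigma}}_\dagger^{-1}\|_2 \leq \sigma^{-2}$, so by the triangle inequality $\|\widehat{\boldsymbol{P}}^{-1}\Tilde{\boldsymbol{\Sigma}}_\dagger\|_2 = \|\boldsymbol{I}_n + \widehat{\boldsymbol{P}}^{-1}\boldsymbol{E}\|_2 \leq 1 + \sigma^{-2}\|\boldsymbol{E}\|_2$ and, using $\Tilde{\boldsymbol{\Sigma}}_\dagger^{-1}\widehat{\boldsymbol{P}} = \boldsymbol{I}_n - \Tilde{\boldsymbol{\Sigma}}_\dagger^{-1}\boldsymbol{E}$, also $\|\Tilde{\boldsymbol{\Sigma}}_\dagger^{-1}\widehat{\boldsymbol{P}}\|_2 \leq 1 + \sigma^{-2}\|\boldsymbol{E}\|_2$; multiplying yields $\kappa(\widehat{\boldsymbol{P}}^{-1}\Tilde{\boldsymbol{\Sigma}}_\dagger) \leq \big(1 + \sigma^{-2}\|\boldsymbol{E}\|_2\big)^2$.

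It then remains to bound $\|\boldsymbol{E}\|_2$. Using $\|\boldsymbol{E}\|_2 \leq \|\boldsymbol{E}\|_F$, the estimates $|c_{\mathrm{taper}}(\cdot;\gamma)| \leq 1$ and $|(\boldsymbol{\Sigma} - \boldsymbol{\Sigma}_{\mathrm{l}})_{ij}| \leq \|\boldsymbol{\Sigma} - \boldsymbol{\Sigma}_{\mathrm{l}}\|_2$, and the fact that $\boldsymbol{T}(\gamma) - \boldsymbol{I}_n$ has on average $n_\gamma - 1$ nonzero entries per row — so that $\boldsymbol{E}$ has at most $n(n_\gamma - 1)$ nonzero entries — I would obtain $\|\boldsymbol{E}\|_F \leq \sqrt{n(n_\gamma - 1)}\,\|\boldsymbol{\Sigma} - \boldsymbol{\Sigma}_{\mathrm{l}}\|_2$, in the same spirit as the Frobenius-norm estimate for $\boldsymbol{T}(\gamma)$ in the proof of Lemma~\ref{lem1}. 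Invoking Lemma~\ref{lemnyst} for $\|\boldsymbol{\Sigma} - \boldsymbol{\Sigma}_{mn}^{\mathrm{T}}\boldsymbol{\Sigma}_m^{-1}\boldsymbol{\Sigma}_{mn}\|_2 = \mathcal{O}_P\big(\lambda_{m+1} + \tfrac{n}{\sqrt{m}}\sigma_1^2\big)$ then gives $\sigma^{-2}\|\boldsymbol{E}\|_2 = \mathcal{O}_P\big(\tfrac{1}{\sigma^2}(\lambda_{m+1} + \tfrac{n}{\sqrt{m}}\sigma_1^2)\sqrt{n(n_\gamma - 1)}\big)$, and substituting into the bound of the previous paragraph completes the proof.

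The one slightly delicate point is getting the exact $(1 + \cdot)^2$ shape. Symmetrizing first and bounding the eigenvalues of $\boldsymbol{I}_n + \widehat{\boldsymbol{P}}^{-1/2}\boldsymbol{E}\widehat{\boldsymbol{P}}^{-1/2}$ directly only produces $\tfrac{1 + \sigma^{-2}\|\boldsymbol{E}\|_2}{1 - \sigma^{-2}\|\boldsymbol{E}\|_2}$ and requires $\sigma^{-2}\|\boldsymbol{E}\|_2 < 1$; the submultiplicative route, which exploits that $\widehat{\boldsymbol{P}}$ and $\Tilde{\boldsymbol{\Sigma}}_\dagger$ enter symmetrically and that both dominate $\sigma^2\boldsymbol{I}_n$, is what promotes the single factor $1 + \sigma^{-2}\|\boldsymbol{E}\|_2$ to a square while dispensing with any smallness condition; one must also keep in mind that $\|M\|_2\|M^{-1}\|_2$ for $M = \widehat{\boldsymbol{P}}^{-1}\Tilde{\boldsymbol{\Sigma}}_\dagger$ upper-bounds the eigenvalue-ratio condition number used in Theorem~\ref{th2}. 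The remaining ingredients — the Schur-complement and Schur-product positivity facts, the sparsity count behind the Frobenius bound, and the appeal to Lemma~\ref{lemnyst} — are routine and mirror the proof of Lemma~\ref{lem1}.
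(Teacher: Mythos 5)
Your proposal is correct and follows essentially the same route as the paper's proof: the decomposition $\kappa(\widehat{\boldsymbol{P}}^{-1}\Tilde{\boldsymbol{\Sigma}}_\dagger)=\|\boldsymbol{I}_n+\widehat{\boldsymbol{P}}^{-1}\boldsymbol{E}\|_2\,\|\boldsymbol{I}_n-\Tilde{\boldsymbol{\Sigma}}_\dagger^{-1}\boldsymbol{E}\|_2$ with $\boldsymbol{E}=\Tilde{\boldsymbol{\Sigma}}_{\mathrm{s}}-\operatorname{diag}(\Tilde{\boldsymbol{\Sigma}}_{\mathrm{s}})$, the triangle inequality combined with $\|\widehat{\boldsymbol{P}}^{-1}\|_2,\|\Tilde{\boldsymbol{\Sigma}}_\dagger^{-1}\|_2\le\sigma^{-2}$, the sparsity-based $\sqrt{n(n_\gamma-1)}$ Frobenius count, and Lemma~\ref{lemnyst}. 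The only cosmetic difference is that you bound $\|\boldsymbol{E}\|_F$ directly via entrywise estimates, whereas the paper first applies the Hadamard-product spectral-norm inequality (Theorem 5.3.4 of Horn and Johnson) and then bounds $\|\mathbf{T}(\gamma)-\operatorname{diag}(\mathbf{T}(\gamma))\|_F$; both yield the identical final bound.
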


\begin{proof}
For $\boldsymbol{E} = \Tilde{\boldsymbol{\Sigma}}_\dagger - \widehat{\boldsymbol{P}}$, we obtain
\begin{align*}
\kappa(\widehat{\boldsymbol{P}}^{-1} \Tilde{\boldsymbol{\Sigma}}_\dagger) &= ||\widehat{\boldsymbol{P}}^{-1} \Tilde{\boldsymbol{\Sigma}}_\dagger||_2 \cdot || \Tilde{\boldsymbol{\Sigma}}_\dagger^{-1}\widehat{\boldsymbol{P}}||_2 \\
&= ||\widehat{\boldsymbol{P}}^{-1} (\widehat{\boldsymbol{P}} + \boldsymbol{E})||_2 \cdot || \Tilde{\boldsymbol{\Sigma}}_\dagger^{-1}(\Tilde{\boldsymbol{\Sigma}}_\dagger - \boldsymbol{E})||_2 \\
&= || \boldsymbol{I}_n + \widehat{\boldsymbol{P}}^{-1} \boldsymbol{E}||_2 \cdot || \boldsymbol{I}_n - \Tilde{\boldsymbol{\Sigma}}_\dagger^{-1}\boldsymbol{E}||_2.
\end{align*}
Applying Cauchy-Schwarz and the triangle inequality, we have
\begin{align*}
\kappa(\widehat{\boldsymbol{P}}^{-1} \Tilde{\boldsymbol{\Sigma}}_\dagger) &\leq \big(1+||\widehat{\boldsymbol{P}}^{-1} ||_2\cdot||\boldsymbol{E}||_2\big)\cdot\big(1+||\Tilde{\boldsymbol{\Sigma}}_\dagger^{-1}||_2\cdot|| \boldsymbol{E}||_2\big). 
\end{align*}
Furthermore, similar to the first step in the proof of Lemma \ref{lem1}, we have $||\widehat{\boldsymbol{P}}^{-1} ||_2\leq\frac{1}{\sigma^2}$ and $||\Tilde{\boldsymbol{\Sigma}}_\dagger^{-1}||_2\leq\frac{1}{\sigma^2}$. Therefore, we obtain
\begin{align*}
\kappa(\widehat{\boldsymbol{P}}^{-1} \Tilde{\boldsymbol{\Sigma}}_\dagger) &\leq \big(1+\frac{1}{\sigma^2}\cdot||\boldsymbol{E}||_2\big)^2 = \big(1+\frac{1}{\sigma^2}\cdot||\Tilde{\boldsymbol{\Sigma}}_{\mathrm{s}}-\text{diag}(\Tilde{\boldsymbol{\Sigma}}_{\mathrm{s}})||_2\big)^2\\
& = \big(1+\frac{1}{\sigma^2}\cdot||(\boldsymbol{\Sigma} - \boldsymbol{\Sigma}_{mn}^\mathrm{T}\boldsymbol{\Sigma}^{-1}_m\boldsymbol{\Sigma}_{mn})\circ \mathbf{T}(\gamma)\\
&\quad-(\boldsymbol{\Sigma} - \boldsymbol{\Sigma}_{mn}^\mathrm{T}\boldsymbol{\Sigma}^{-1}_m\boldsymbol{\Sigma}_{mn})\circ\text{diag}\big(\mathbf{T}(\gamma)\big)||_2\big)^2\\
& = \Big(1+\frac{1}{\sigma^2}\cdot||(\boldsymbol{\Sigma} - \boldsymbol{\Sigma}_{mn}^\mathrm{T}\boldsymbol{\Sigma}^{-1}_m\boldsymbol{\Sigma}_{mn})\circ \Big(\mathbf{T}(\gamma)-\text{diag}\big(\mathbf{T}(\gamma)\big)\Big)||_2\Big)^2.
\end{align*}
Since $(\boldsymbol{\Sigma} - \boldsymbol{\Sigma}_{mn}^\mathrm{T}\boldsymbol{\Sigma}^{-1}_m\boldsymbol{\Sigma}_{mn})$ is positive semidefinite and symmetric and additionally $\mathbf{T}(\gamma)-\text{diag}\big(\mathbf{T}(\gamma)\big)$ is symmetric we can apply Theorem 5.3.4 in \citet{horn1991topics} to obtain
\begin{align*}
    \kappa(\widehat{\boldsymbol{P}}^{-1} \Tilde{\boldsymbol{\Sigma}}_\dagger) &\leq \Big(1+\frac{1}{\sigma^2}\cdot||(\boldsymbol{\Sigma} - \boldsymbol{\Sigma}_{mn}^\mathrm{T}\boldsymbol{\Sigma}^{-1}_m\boldsymbol{\Sigma}_{mn})\circ \Big(\mathbf{T}(\gamma)-\text{diag}\big(\mathbf{T}(\gamma)\big)\Big)||_2\Big)^2\\
    &\leq \Big(1+\frac{1}{\sigma^2}\cdot||\boldsymbol{\Sigma} - \boldsymbol{\Sigma}_{mn}^\mathrm{T}\boldsymbol{\Sigma}^{-1}_m\boldsymbol{\Sigma}_{mn}||_2\cdot ||\Big(\mathbf{T}(\gamma)-\text{diag}\big(\mathbf{T}(\gamma)\big)\Big)||_2\Big)^2.
\end{align*}
Moreover, using the same arguments as in the proof of Lemma \ref{lem1}, we obtain
\begin{align*}
\kappa(\widehat{\boldsymbol{P}}^{-1} \Tilde{\boldsymbol{\Sigma}}_\dagger) &\leq\bigg(1+\mathcal{O}_P\Big(\frac{1}{\sigma^2}\cdot\big(\lambda_{m+1} + \frac{n}{\sqrt{m}}\cdot\sigma_1^2\big)\cdot \sqrt{n\cdot (n_\gamma-1)}\Big)\bigg)^2.
\end{align*}
\end{proof}

\begin{proof}[Proof of Theorem \ref{th2}]\label{proof2}
By proceeding analogue to the proof of Theorem \ref{thm1} and by using Lemma \ref{lem2}, we obtain
\begin{align*}
        \frac{\left\|\mathbf{u}^*-\mathbf{u}_k\right\|_{\Tilde{\boldsymbol{\Sigma}}_\dagger}}{\left\|\mathbf{u}^*-\mathbf{u}_0\right\|_{\Tilde{\boldsymbol{\Sigma}}_\dagger}} &
        \leq 2\cdot \left(\frac{\sqrt{\kappa(\widehat{\boldsymbol{P}}^{-1} \Tilde{\boldsymbol{\Sigma}}_\dagger)}-1}{\sqrt{\kappa(\widehat{\boldsymbol{P}}^{-1} \Tilde{\boldsymbol{\Sigma}}_\dagger)}+1}\right)^k\\
        &\leq 2\cdot \left(\frac{1+\mathcal{O}_P\Big(\frac{1}{\sigma^2}\cdot\big(\lambda_{m+1} + \frac{n}{\sqrt{m}}\cdot\sigma_1^2\big)\cdot \sqrt{n\cdot (n_\gamma-1)}\Big)-1}{1+\mathcal{O}_P\Big(\frac{1}{\sigma^2}\cdot\big(\lambda_{m+1} + \frac{n}{\sqrt{m}}\cdot\sigma_1^2\big)\cdot \sqrt{n\cdot (n_\gamma-1)}\Big)+1}\right)^k\\
        &= 2\cdot\left(\frac{\mathcal{O}_P\Big(\frac{1}{\sigma^2}\cdot\big(\lambda_{m+1} + \frac{n}{\sqrt{m}}\cdot\sigma_1^2\big)\cdot \sqrt{n\cdot (n_\gamma-1)}\Big)}{\mathcal{O}_P\Big(\frac{1}{\sigma^2}\cdot\big(\lambda_{m+1} + \frac{n}{\sqrt{m}}\cdot\sigma_1^2\big)\cdot \sqrt{n\cdot (n_\gamma-1)}\Big)+2}\right)^k\\
        &\leq2\cdot\left(\frac{1}{1 + \mathcal{O}_P\Big(\sigma^2\cdot\big((\lambda_{m+1} + \frac{n}{\sqrt{m}}\cdot\sigma_1^2)\cdot \sqrt{n\cdot (n_\gamma-1)}\big)^{-1}\Big)}\right)^k.
\end{align*}
\end{proof}

\section{Choice of full-scale approximation parameters}

In Figure \ref{fig:APP}, we report the negative log-likelihood evaluated at the data-generating parameters for different numbers of inducing points $m\in \{100,200,\dots,1'000\}$ and taper ranges $n_\gamma\in \{10,20,\dots,120\}$ when using the mentioned range parameters. As expected, in scenarios with a low effective range, characterized by minimal large-scale spatial dependence, the number of inducing points $m$ has a small effect on the negative log-likelihood, and the residual process captures most of the dependence already with a relatively small range $\gamma$. Conversely, in cases with a large effective range, the FSA needs a large number of inducing points $m$, and the taper range $\gamma$ is less important to achieve a good approximation. For our subsequent simulation studies, we use $m = 500$ and $\gamma = 0.016$ corresponding to $n_\gamma = 80$ as a reasonable balance between accuracy and computational efficiency.
\begin{figure}[H]
    \centering
    \includegraphics[width=\linewidth]{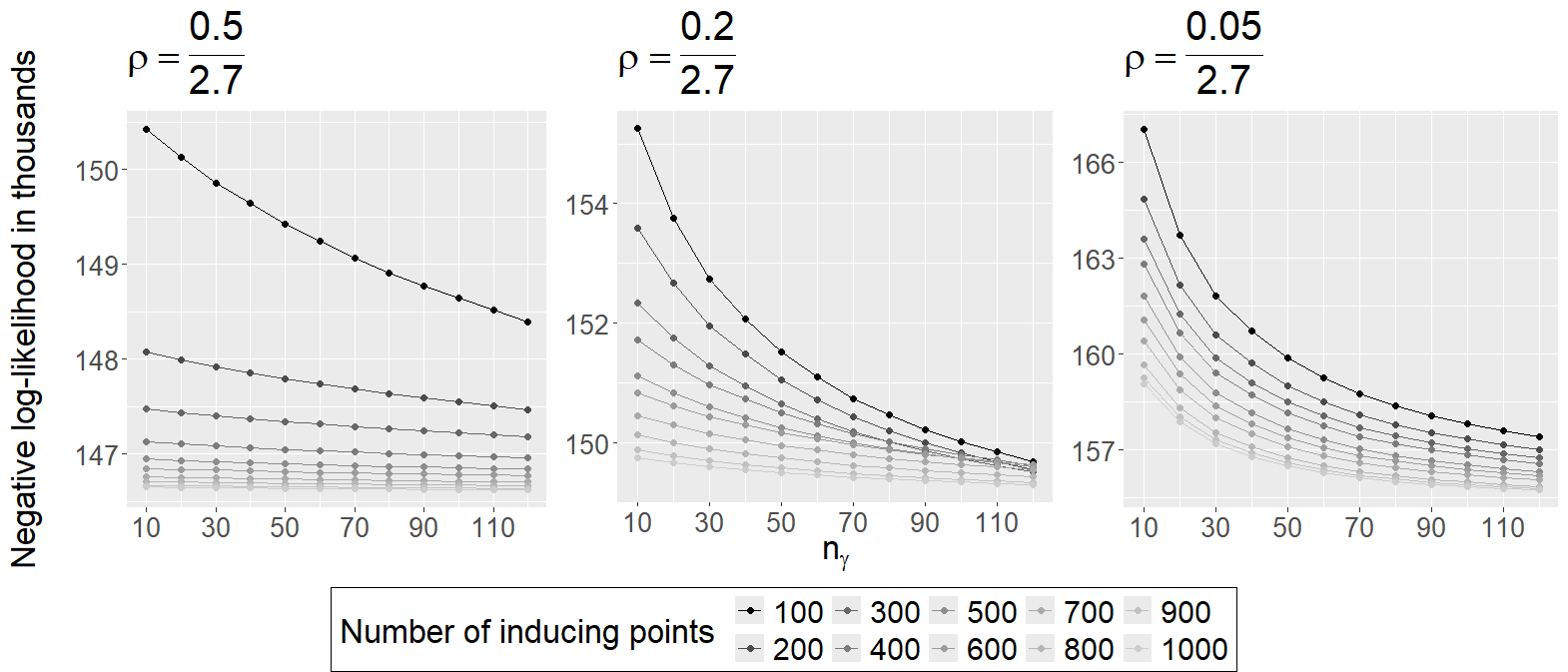}
    \caption{The negative log-likelihood for different combinations of the number of inducing points $m$ and the taper range $\gamma$ ($n_\gamma$) for different effective ranges (0.5, 0.2, 0.05 from left to right).}
    \label{fig:APP}
\end{figure}

\section{Methods for choosing inducing points}\label{AppIP} We compare the negative log-likelihood across different inducing point methods, evaluated at the true population parameters over 25 simulation iterations. All calculations are performed using the Cholesky decomposition, ensuring that the only source of randomness stems from the selection of inducing points. This setup allows us to isolate and analyze the variability introduced by the inducing point selection methods. As shown in Figure \ref{fig:FSA}, the results within the FSA framework closely mirror those observed for the FITC approximation.
\begin{figure}[!htbp]
    \centering
    \includegraphics[width=\linewidth]{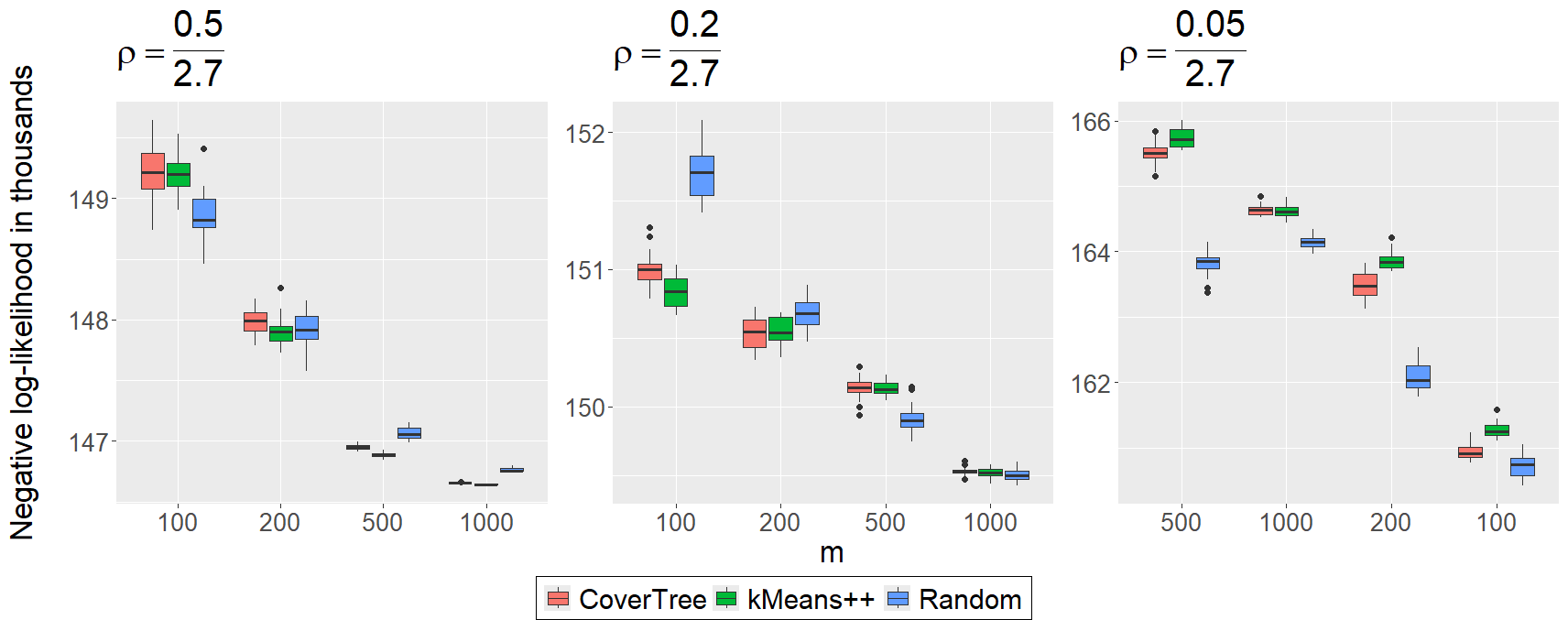}
    \caption{Box-plots of the negative log-likelihood for the FSA with $n_\gamma = 80$ for different effective ranges (0.5, 0.2, 0.05 from left to right) and numbers of inducing points $m$ ($n = 100'000$, $n_\gamma = 80$).}
    \label{fig:FSA}
\end{figure}

\section{Comparison of preconditioners}\label{simstudapp}

In Table \ref{Table1Precapp}, we observe that the CG method needs fewer iterations when the effective range is smaller. This observation aligns with the outcomes detailed in Theorem \ref{thm1}, as a diminished range parameter corresponds to a reduced largest eigenvalue of $\boldsymbol{\Sigma}$. For larger effective ranges, all preconditioners accelerate the convergence of the CG method.

\begin{table}[!htbp]
\centering
\begin{tabular}{ |p{9cm}||p{1.2cm}||p{1.2cm}||p{1.2cm}|  }
 \hline
\textbf{Effective Range}& 0.5 & 0.2 & 0.05\\
 \hhline{|=||=||=||=|}
 \textbf{No preconditioner} &     &&  \\
 CG-Iterations for $\Tilde{\boldsymbol{\Sigma}}_{\dagger}^{-1}\boldsymbol{y}$ & 342&279  & 74   \\
 Time for computing negative log-likelihood (s)& 94&54  & 14   \\ 
  
 \hline
 \textbf{FITC preconditioner}&   &  &    \\
 CG-Iterations for $\Tilde{\boldsymbol{\Sigma}}_{\dagger}^{-1}\boldsymbol{y}$ &  7 &9 & 21   \\
 Time for computing negative log-likelihood (s)& 7&10  & 15   \\ 
 \hline
 \textbf{Pivoted Cholesky} ($k=200$) & & &  \\
 CG-Iterations for $\Tilde{\boldsymbol{\Sigma}}_{\dagger}^{-1}\boldsymbol{y}$ & 47&91  & 44   \\
 Time for computing negative log-likelihood (s)& 35&52  & 33   \\ 
  
 \hline
 \textbf{Pivoted Cholesky} ($k=500$)  &  & &   \\
 CG-Iterations for $\Tilde{\boldsymbol{\Sigma}}_{\dagger}^{-1}\boldsymbol{y}$ & 26 &52  & 43  \\
 Time for computing negative log-likelihood (s)& 70&83  & 80   \\ 
 
 \hline
 \textbf{Pivoted Cholesky} ($k=1'000$)&  &   & \\
 CG-Iterations for $\Tilde{\boldsymbol{\Sigma}}_{\dagger}^{-1}\boldsymbol{y}$ & 13&32  & 42   \\
 Time for computing negative log-likelihood (s)&243 &266  & 273  \\ 
 
 \hline
 
\end{tabular}
\caption{Number of (preconditioned) CG-iterations for the linear solve $\Tilde{\boldsymbol{\Sigma}}_{\dagger}^{-1}\boldsymbol{y}$ and the time in seconds (s) for computing the negative log-likelihood for the true population parameters ($n = 100'000$, $m = 500$, $n_\gamma = 80$).}\label{Table1Precapp} 
\end{table}

\newpage

\section{Comparison of methods for calculating predictive variances}\label{AppPD} We present the results for all three range parameters.
\begin{table}[ht!]
\centering
\scriptsize	
\begin{tabular}{ |p{1.5cm}||p{1.cm}|p{1.3cm}||p{1.cm}|p{1.3cm}||p{1.cm}|p{1.3cm}|  }
 \hline
 &\multicolumn{2}{c||}{Effective range 0.5}& \multicolumn{2}{c||}{Effective range 0.2}& \multicolumn{2}{c|}{Effective range 0.05}\\
 \hline
 & Lanczos &Stochastic& Lanczos &Stochastic& Lanczos &Stochastic\\
 \hline
 $\boldsymbol{k,\ell = 50}$ &     &   & & & &\\
 Log-Score & 1.4588 & 1.4587 (1.7e-07) &1.4810 &1.4804 (1.0e-05) & 1.5541   &  1.5300 (4.8e-05) \\
 RMSE & 0.00018 & 4.1e-06 (2.4e-06)&0.0066 & 0.00050 (2.0e-05) &0.56& 0.042 (2.2e-05)\\
 Time (s) & 50& 51&53 & 56&57 &60\\
 \hline
 $\boldsymbol{k,\ell = 200}$ &     &   & & & & \\
 Log-Score &  1.4588 &1.4587 (7.4e-08) &1.4810 & 1.4804 (5.0e-06) &1.5535   &  1.5300 (5.9e-05)\\
 RMSE &  0.00015&8.9e-07 (5.0e-07) & 0.0065& 0.00020 (9.6e-06)& 0.55& 0.020 (7.2e-05)\\
 Time (s) & 53&57 & 57&60 &60 &64\\
 \hline
 $\boldsymbol{k,\ell = 500}$ &     &    & & & &\\
 Log-Score &  1.4588 & 1.4587 (4.1e-08)&1.4809 & 1.4804 (3.5e-06) &1.5520   &  1.5299 (8.7e-06) \\
 RMSE & 0.00013 & 6.3e-07 (3.4e-07)&0.0061 &0.00012 (7.4e-06) &0.54& 0.013 (1.3e-05)\\
 Time (s) & 62& 63& 67& 67& 72&72\\
 \hline
 $\boldsymbol{k,\ell = 1000}$ &     &    & & & &\\
 Log-Score &  1.4588 &1.4587 (2.8e-08) &1.4808 & 1.4804 (2.0e-06) &1.5512   &  1.5299 (6.6e-06) \\
 RMSE &  0.00011&5.8e-07 (1.8e-07) &0.0056 &  8.7e-05 (3.8e-06)&0.53& 0.0089 (1.2e-10)\\
 Time (s) & 120& 77& 124& 80& 128&87\\
 \hline
 $\boldsymbol{k,\ell = 2000}$ &     &  & & & &  \\
 Log-Score &   1.4587& 1.4587 (2.4e-08)& 1.4806& 1.4804 (1.5e-06) &1.5497   &  1.5299 (7.9e-06) \\
 RMSE &  8.6e-05&5.4e-07 (2.1e-07) &0.0047 & 6.2e-05 (2.4e-06)&0.50& 0.0063 (1.4e-05)\\
 Time (s) & 405& 122& 409& 126& 413&131\\
 \hline
 $\boldsymbol{k,\ell = 5000}$ &     &   & & & & \\
 Log-Score & 1.4587 &1.4587 (9.9e-09) & 1.4804& 1.4804 (8.7e-07)& 1.5459   &  1.5299 (5.8e-06) \\
 RMSE & 4.4e-05 &5.3e-07 (1.2e-07) & 0.0028& 3.7e-05 (1.9e-06) &0.42& 0.0040 (2.6e-06)\\
 Time (s) & 1452& 198& 1453& 203& 1468&210\\
 \hline
  \hline
 $\textbf{Cholesky:}$ &\multicolumn{2}{c||}{}& \multicolumn{2}{c||}{}& \multicolumn{2}{c|}{}\\
 Log-Score &\multicolumn{2}{c||}{1.4587}& \multicolumn{2}{c||}{1.4804}& \multicolumn{2}{c|}{1.5299}\\
 Time (s) &\multicolumn{2}{c||}{1651}& \multicolumn{2}{c||}{1652}& \multicolumn{2}{c|}{1658}\\
 \hline
\end{tabular}
\caption{Log-Score and RMSE of the predictive variance approximation computed by using the Cholesky factor, Lanczos methods, and stochastic diagonal approximations (mean and standard deviation) with different ranks $k$ and number of sample vectors $\ell$ ($n = 100'000$, $n_p = 100'000$, $m = 500$, $n_\gamma = 80$).}\label{table2} 
\end{table}
\clearpage

\section{Accuracy and computational time of parameter estimates and predictive distributions}\label{Appfull} We report the bias and RMSE of the estimated covariance parameters for each of the effective ranges 0.5, 0.2, and 0.05, as well as the averages and standard errors of the RMSE of the predictive mean, the log-score, and the CRPS.

 \begin{table}[ht!]
 \small
 \centering
 \begin{tabular}{ |p{.3cm}|p{1.7cm}||p{1.5cm}|p{1.5cm}||p{1.5cm}|p{1.5cm}||p{1.5cm}|p{1.5cm}|  }
  \hline
  & &\multicolumn{2}{l||}{Effective range 0.5}&\multicolumn{2}{l||}{Effective range 0.2}&\multicolumn{2}{l|}{Effective range 0.05}\\
 \hhline{|=|=||=|=||=|=||=|=|}
  & & \small{Cholesky} & \small{Iterative}
  & \small{Cholesky} & \small{Iterative} & \small{Cholesky} & \small{Iterative}\\
 \hhline{|=|=||=|=||=|=||=|=|}
   \multirow[c]{7}{*}[0in]{\rotatebox{90}{\parbox{3cm}{\textbf{ Parameter Est.}}}}& Bias of $\sigma^2$ & 0.0211&0.0215 & 0.0097  & 0.0097   &0.0135 &0.0147\\ 
  \cline{2-8} 
  & RMSE & 0.0218& 0.0222& 0.0125   & 0.0130    &0.0146 &0.0157\\
   \cline{2-8} 
  & Bias of $\sigma^2_1$  &0.0905 & 0.0519&0.0382   & 0.0408  &0.0356 & 0.0344 \\ 
   \cline{2-8} 
  & RMSE &0.161 &0.222 & 0.061 & 0.062 & 0.0394& 0.0384    \\
   \cline{2-8} 
   & Bias of $\rho$ &0.0629 & 0.0624& 0.0106 &0.0107 & 0.00946&0.00941 \\ 
    \cline{2-8}
   & RMSE &0.0630 & 0.0625& 0.0109&0.0109 & 0.00348 & 0.00343  \\
    \cline{2-8} 
   & Speedup& \multicolumn{2}{c||}{8.2}& \multicolumn{2}{c||}{7.5}& \multicolumn{2}{c|}{6.3} \\ 
  \hhline{|=|=||=|=||=|=||=|=|}
  \multirow[c]{7}{*}[0in]{\rotatebox{90}{\textbf{Prediction}}}
  &RMSE  & 1.0255&1.0255 &1.0491 & 1.0491& 1.1145& 1.1145\\
  \cline{2-8}
  &SE  &0.00047 &0.00047 & 0.00058&0.00059 &0.0010 & 0.0010  \\
  \cline{2-8}
  &Log-Score &1.4443 & 1.4443&1.4671 & 1.4671&1.5277 &1.5276\\
  \cline{2-8}
  &SE &0.00044 &0.00044 & 0.00063&0.00063 & 0.00089&0.00089 \\
  \cline{2-8}
  &CRPS & 0.5785& 0.5785&0.5916 & 0.5916& 0.6285&0.6285 \\
  \cline{2-8}
  &SE & 0.00024& 0.00024& 0.00039& 0.00039&0.00059 & 0.00059 \\
  \cline{2-8}
  &Speedup & \multicolumn{2}{c||}{32.4} & \multicolumn{2}{c||}{27.2} & \multicolumn{2}{c|}{22.6} \\ 
  \hhline{|=|=||=|=||=|=||=|=|}
  \multicolumn{2}{|l||}{\textbf{Speedup}} &\multicolumn{2}{c||}{11.6} & \multicolumn{2}{c||}{11.2} & \multicolumn{2}{c|}{8.9} \\ 
  \hline
 
 \end{tabular}
 \caption{Bias and RMSE of the estimated covariance parameters over 10 simulation runs and the averages and standard errors (SE) of the predictive measures (RMSE, log-score, CRPS) and the average speedup between the iterative method with the FITC preconditioner and the Cholesky-based computations ($n = 100'000$, $n_p = 100'000$, $m = 500$, $n_\gamma = 80$).}\label{Table5} 
\end{table}
 \clearpage

\clearpage

\section{Additional results for the FITC preconditioner for Vecchia approximations}\label{vecchia_app}
$ $

\begin{figure}[!htbp]
    \centering
    \includegraphics[width=\linewidth]{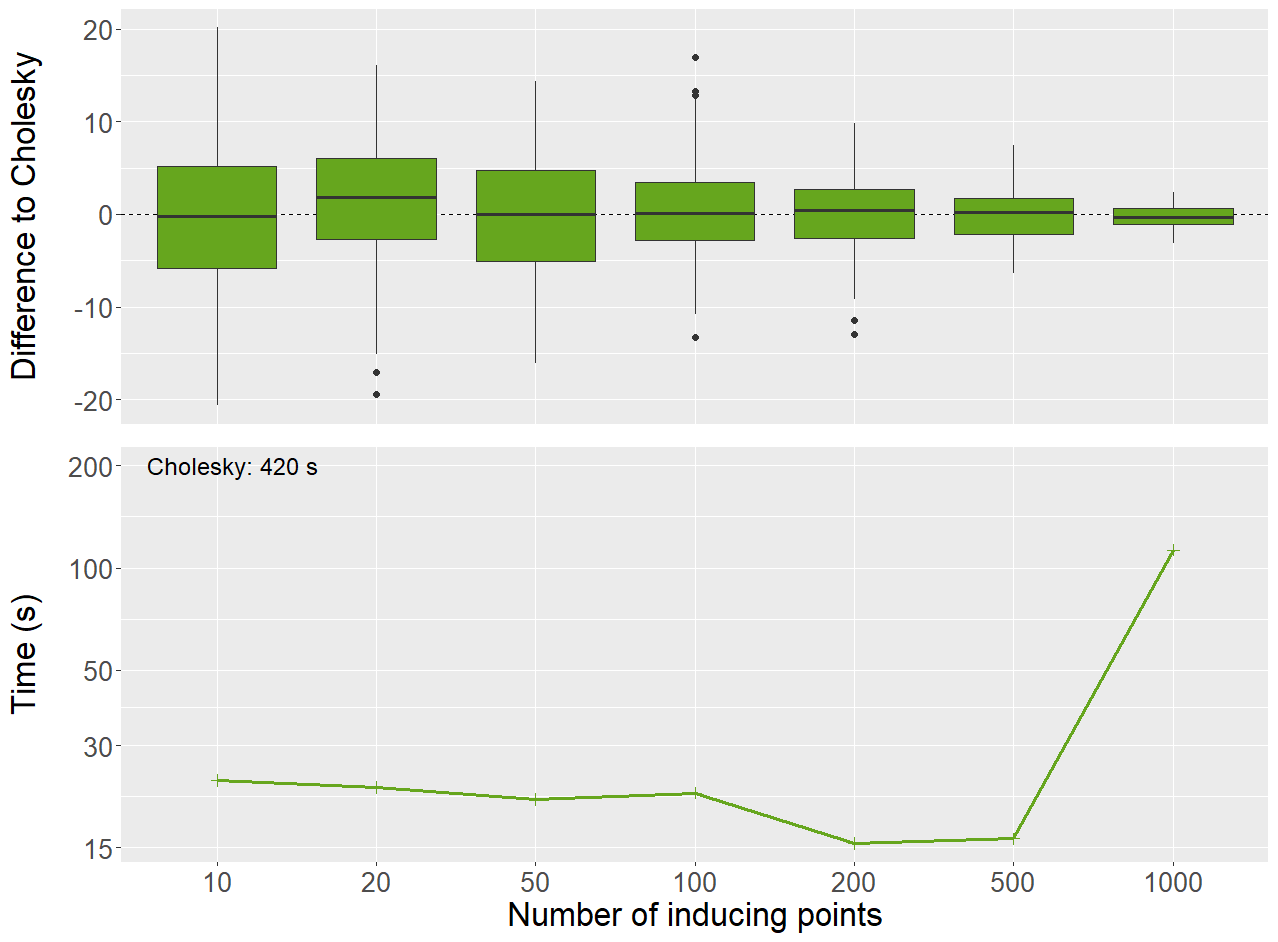}
    \caption{Differences of the log-marginal likelihood compared to Cholesky-based computations and runtime for the FITC preconditioner for different numbers of inducing points on simulated data with $n = 100'000$ and a range of $\rho = 0.05$.}
    \label{fig:FITC_vs_PC_005_rank}
\end{figure}
\clearpage

\section{Real-world application}\label{realwapp} 
$ $


\begin{table}[ht!]
\centering
\begin{tabular}{ |p{1.9cm}|p{1.5cm}|p{1.5cm}|p{1.5cm}|p{1.5cm}|p{1.5cm}|p{1.5cm}|p{1.5cm}|p{1.5cm}|  }
\hline
 $\delta$&\multicolumn{2}{c|}{0.001}& \multicolumn{2}{c|}{1}& \multicolumn{2}{c|}{10}\\
 \hline
 $\ell$&50& 20 & 50 & 20&  50 & 20\\
 \hhline{|=|=|=|=|=|=|=|}
 \textbf{Estim.}& & & & & &\\
 $\beta_{\text{intercept}}$ & -26.16 & -26.15&   -26.30& -26.42& -26.57&-27.87\\
 $\beta_{\text{east}}$ & -0.00659 & -0.00659 &  -0.00660& -0.00661& -0.00665&-0.00671\\
 $\beta_{\text{north}}$ & 0.000901 & 0.000897 &  0.000904 & 0.000912& 0.000882&0.000110\\
 $\sigma^2$ & 0.1428 & 0.1428 & 0.1428& 0.1427& 0.1424&0.1434\\
 $\sigma^2_1$& 4.103 &4.105 & 4.093& 4.091& 4.079&4.0328\\ 
  $\rho$& 24.16 & 24.17 & 24.13& 24.10& 24.03&23.96\\
  \hline
  Time (s)& 7678 & 5753 &  5054 & 3782& 4668& 3513\\ 
  Speedup &  4.3 & 5.8 & 6.6 & 8.8 & 7.1 & 9.5\\
 \hhline{|=|=|=|=|=|=|=|}
 \textbf{Predict.}&  & & & & &\\
 RMSE &  1.4175 &1.4174 & 1.4177& 1.4173& 1.4169&1.4250\\
 Log-Score &  1.7246 &1.7246 & 1.7244& 1.7242& 1.7236&1.7252\\
 CRPS &  0.75450 &0.75450 & 0.75450& 0.75423& 0.75400&0.75734\\
 \hline
 Time (s) &  1628 & 1630 &  587& 592& 392 &390\\
 Speedup & 3.4  & 3.4 & 9.4& 9.3 & 14.1 & 14.2\\
 \hhline{|=|=|=|=|=|=|=|}
 \textbf{Total} & & & & & &\\
 Time (s) &  9306 &  7383  & 5641&  4374 & 5060 & 3903\\
 Speedup &  4.2&  5.3 & 6.9 & 8.9 & 7.7& 9.9\\
 \hline
 
\end{tabular}
\caption{Results for GP inference within the FSA framework using the iterative method with the FITC preconditioner with different CG convergence tolerances $\delta\in\{0.001,1,10\}$ and numbers of sample vectors $\ell\in\{50,20\}$.}\label{TableCG} 
\end{table}
\end{appendices}
\clearpage

\bibliographystyle{abbrvnat}

\bibliography{references}

\end{document}